\documentclass[11pt]{article}
\usepackage[utf8]{inputenc}
\usepackage{lipsum}
\usepackage{fullpage}
\usepackage[numbers,sort&compress]{natbib}
\usepackage{float}
\usepackage[normalem]{ulem}
\usepackage{geometry}
\usepackage{amsmath,mathrsfs,amsbsy,color,graphicx,bm,amsthm,amsfonts,dsfont}
\usepackage{xcolor}
\usepackage{qcircuit}
\usepackage{booktabs}
\usepackage{array}
\usepackage{physics}
\usepackage{amsmath,amssymb, amsthm}
\usepackage{thmtools}
\usepackage{bm}
\usepackage{graphicx}
\usepackage{braket,dsfont}
\usepackage{color}
\usepackage[10pt]{moresize}
\usepackage{mathrsfs}
\usepackage{multirow}
\usepackage{hyperref}
\usepackage{changes}
\usepackage{mathdots}
\usepackage{tikz}
\usetikzlibrary{arrows.meta,positioning,decorations.pathreplacing,calc,fit,shapes.geometric}

\newtheorem{theorem}{Theorem}
\newtheorem{lemma}{Lemma}

\newtheorem{observation}{Observation}
\newtheorem{definition}{Definition}

\newcommand{\Wg}{\mathrm{Wg}}

\newcommand{\id}{\mathbb{I}}
\newcommand{\Cl}{\mathrm{Cl}}
\newcommand{\Sp}{\mathrm{Sp}}
\newcommand{\SL}{\mathrm{SL}}

\newcommand{\EL}{\mathrm{EL}}
\newcommand{\Alt}{\mathrm{Alt}}
\newcommand{\Sym}{\mathrm{Sym}}
\newcommand{\Mat}{\mathrm{Mat}}
\newcommand{\E}{\mathbb{E}}
\newcommand{\F}{\mathbb{F}}

\newcommand{\U}{\mathrm{U}}

\newcommand{\diag}{\mathrm{diag}}

\newcommand{\CPZPC}{\mathrm{CPZPC}}

\newcommand{\av}{\mathrm{av}}
\newcommand{\ph}{\mathrm{ph}}
\newcommand{\Span}{\operatorname{span}}

\newcommand{\cE}{\mathcal{E}}
\newcommand{\cO}{\mathcal{O}}
\newcommand{\cK}{\mathcal{K}}

\newcommand{\cket}[1]{\vert #1 )}
\newcommand{\cbra}[1]{( #1 \vert}
\newcommand{\cbraket}[2]{( #1 \vert #2 )}
\newcommand{\cketbra}[2]{\vert #1 )\!( #2 \vert}

\title{(Almost) quadruply optimal unitary designs in 1D}
\author{Guoding Liu \thanks{Center for Quantum Information, Institute for Interdisciplinary Information Sciences, Tsinghua University, Beijing, \texttt{lgd22@mails.tsinghua.edu.cn}} \and Jonas Helsen \thanks{Centrum Wiskunde \& Informatica (CWI) and QuSoft, Amsterdam, \texttt{jonas@cwi.nl}}}
\date{\today}

\begin{document}

\maketitle

\begin{abstract}
We construct $n$-qubit approximate unitary $k$-designs in 1D systems, achieving circuit depth $\cO(\log(n/\varepsilon) + k\log k)$ with relative error $\varepsilon$ and requiring $\cO(nk\log k)$ magic gates. This matches existing lower bounds $\Omega(\log(n/\varepsilon) + k)$ for circuit depth, and $\widetilde{\Omega}(nk)$ for the required number of $T$ gates, up to a $\log k$ factor, achieving simultaneous near-optimality in all parameters. 
Our construction is based on a combination and refinement of two existing results. We reduce the required magic block size for breaking Clifford symmetries in the magic-augmented circuit construction of Zhang et al.~(Ref.~\cite{Zhang2026magic}) from $\cO(k\log k)$ to $\cO(\log k)$. We also improve the breakthrough construction of Chen et al.~(Ref.~\cite{chen2024incompressibilityspectralgapsrandom}), to construct a generating set of 1D local constant-depth circuits for the unitary group with a constant spectral gap, making $\cO(\log k)$-local random unitaries realizable in depth $\cO(k\log k)$. As a by-product, we provide a constant-size 1D-local generating set for the Clifford group, which we expect to be of independent interest. Combining the two results with the gluing lemma, we prove the final result. 

\end{abstract}

\section{Introduction}
Random unitaries are a foundational tool and object of study in quantum information science and quantum physics. Theoretically, they serve as indispensable tools for investigating quantum many-body dynamics~\cite{Nahum2017Random,Nahum2018randomness} and quantum gravity~\cite{Patrick2007Blackhole,Almheiri2015QEC,Pastawski2015Holographic}. Practically, they support a wide range of quantum information processing tasks, including quantum learning~\cite{huang2020shadow}, quantum metrology~\cite{Zhou2026Metrology}, device benchmarking~\cite{Elben2023toolbox,helsen2022general}, quantum supremacy demonstrations~\cite{arute2019supremacy}, and quantum cryptography~\cite{Ji2018Pseudorandom}.

Given their ubiquitous applications, significant effort has been devoted to synthesizing random unitaries with minimal resource overhead. However, generating true Haar-random unitaries requires circuit depth or ancillary qubit counts that scale exponentially with the qubit number $n$. To circumvent this bottleneck, unitary $k$-designs are employed to mimic the Haar measure up to the $k$-th moment, rendering them statistically indistinguishable from Haar-random unitaries under $k$ queries. For most practical protocols, an approximate unitary design suffices. Specifically, in any experiment querying a random unitary at most $k$ times, an approximate unitary $k$-design with relative error $\varepsilon$ guarantees an output accuracy within $\varepsilon$. While alternative error measures exist, such as additive or measurable errors, relative error provides the strongest guarantee when queries are restricted to $U$ without access to its inverse~\cite{schuster2025strongrandomunitariesfast}.

Consequently, optimizing the circuit depth required to generate approximate unitary $k$-designs across various geometric constraints has become a central objective. Efforts have focused in particular on one-dimensional (1D) nearest-neighbor architectures, as 1D geometries represent the most experimentally accessible setup, establish a fundamental baseline for all spatial dimensions, and serve as a conceptual building block for higher-dimensional generalizations.

A prominent line of work has aimed to settle this 1D depth requirement. Initial spectral gap analyses of 1D brickwork circuits proved that a polynomial depth was sufficient~\cite{brandao2016local} and over several years~\cite{hunter2019unitary, haferkamp2022random,haferkamp2023efficient,harrow2023approximate, metger2024simple,haah2025efficient} this was improved to a depth of $\cO(nk\log^7k)$~\cite{chen2024incompressibilityspectralgapsrandom}. A recent breakthrough further refined this bound, establishing an exact $\cO(nk)$ scaling free of logarithmic factors~\cite{baer2026randomunitarycircuitsconstant}. While $\cO(nk)$ represents the optimal depth achievable via standard spectral gap and light-cone techniques, the recently introduced gluing lemma dramatically improved this upper bound to $\widetilde{\cO}(k)\log(n/\varepsilon)$~\cite{Schuster2025Gluing,laracuente2026approximate}. Furthermore, leveraging magic-augmented circuits in Ref.~\cite{Zhang2026magic} achieved a depth bound of $\cO(\log(n/\varepsilon) + 2^{\cO(k\log k)})$ in the $k = o(\sqrt{n})$ regime, successfully decoupling the system size $n$ from the design order $k$ in the depth scaling.

Despite these advances, the fundamental question of constructing depth-optimal unitary designs in 1D remains open. In Ref.~\cite{cui2025unitarydesignsnearlyoptimal}, it was proved that even with $\cO(n)$ ancillas, generating a 1D approximate unitary $k$-design with additive error $\varepsilon$ requires a depth of $\Omega(\log(n/\varepsilon) + k)$. Note that earlier constructions have saturated this lower bound in each parameter separately, but simultaneously achieving optimal depth scaling in both $n$ and $k$ while keeping them completely decoupled has thus far eluded existing techniques.

In this paper, we nearly close this gap for 1D architectures by constructing a 1D-local, $\varepsilon$-relative-error approximate unitary $k$-design of depth $\cO(\log n + \log(1/\varepsilon) + k\log k)$ without ancillary qubits. This result matches the known lower bound across all three parameters simultaneously, up to a single $\log k$ factor in the ``design direction" only. Moreover, evaluating the non-Clifford resources (i.e. the number of $T$ gates) in our construction yields near-optimality in the ``magic" direction: our design requires only $\cO(nk\log k)$ $T$-gates, completely independent of the error parameter $\varepsilon$. This further improves upon the best previous upper bound for $t$-doped Clifford circuits, $\cO(\log^2(k)(nk + \log(1/\varepsilon)))$~\cite{Haferkamp2022randomquantum,Haferkamp2023Designs}, and moves closer to the lower bound of $\Omega(nk / (\log n \log k + \log^2 k))$~\cite{Leone2026NonClifford}. Consequently, our construction offers an exceptionally resource-efficient candidate for minimizing magic overhead, which is an indispensable yet costly resource for universal and fault-tolerant quantum computation. Furthermore, we extend the available regime of the design order $k$ from $\cO(\sqrt{n})$ up to $\Theta(n)$, establishing the feasibility of efficient high-order approximate unitary designs.

Our results rely on reducing the magic block size for the magic-augmented circuits of Ref.~\cite{Zhang2026magic} and constructing a constant-spectral-gap, constant-depth, and 1D-local generating set for the $\ell$-qubit unitary group. Structurally, the magic-augmented circuit consists of a random Clifford operation flanked by two layers of tensor-product $\ell$-local random unitary gates. A crucial parameter in this scheme is the minimal locality $\ell$ required for these random unitaries to eliminate the extra symmetries of the Clifford group relative to the full unitary group. In other words, $\ell$ quantifies the minimal magic needed to break these symmetries. While it was previously known that $\ell = \cO(k\log k)$ is sufficient~\cite{Zhang2026magic}, pushing $\ell$ lower requires new techniques. By analyzing the non-permutation Clifford commutant elements under permutation symmetry, we reduce this required locality from $\ell = \cO(k\log k)$ down to $\ell = \cO(\log k)$, marking a pivotal step in compressing both the depth and magic requirements for approximate unitary designs.

The other key technical step in our construction, which we believe to be of independent interest, is the construction of a generating set for the $\ell$-qubit unitary group with a constant spectral gap, where each generator can be implemented in constant depth on a 1D nearest-neighbor geometry. This generating set is constructed by combining generators of the Clifford group and the permutation group, following the breakthrough work on the so-called ``CPZPC" ensemble~\cite{chen2024incompressibilityspectralgapsrandom}. Because the underlying CPZPC ensemble exhibits a constant spectral gap, establishing the result reduces to finding constant-spectral-gap 1D-local generating sets for both constituent groups. While the permutation group was known to possess a constant spectral gap with respect to Kassabov's generators~\cite{chen2024incompressibilityspectralgapsrandom}, their constant-depth implementation was previously established only for all-to-all architectures or 1D circuits with periodic boundary. Here, by introducing a folded-qubit layout, we demonstrate that all of Kassabov's generators for the permutation group can be synthesized at a constant depth in a generic 1D system. For the Clifford group, we construct a generating set starting from the binary symplectic group and extending it to the full Clifford group via Pauli extension. We explicitly prove that this Clifford expander contains at most $451$ generators, providing a concrete framework for efficiently constructing Clifford circuits.

\subsection{Results and proof overview}
We now present the circuit architecture for our nearly optimal unitary designs, as illustrated in Fig.~\ref{fig:circuitarch}. We begin by partitioning the 1D chain of $n$ qubits into $m$ patches, each comprising $\xi = n/m$ qubits. The construction then consists of a two-layer block-brickwork architecture with blocks of size $2\xi$, where inside each block we apply a $2\xi$-qubit random unitary operation across two adjacent patches. The patch size $\xi$ is chosen to scale as $\xi = \Omega(\log(n/\varepsilon)+k)\geq \log(nk^2/\varepsilon)$, satisfying the conditions required by the gluing lemma~\cite{Schuster2025Gluing}.

Each $2\xi$-qubit block consists of a magic-augmented circuit (as first considered in \cite{Zhang2026magic}), formed by a $2\xi$-qubit random Clifford unitary sandwiched by two layers of tensor-product $\ell$-qubit approximate unitary $k$-designs (with $\ell = c\log(k)$ for a constant $c\geq 8$). Each $\ell$-qubit approximate unitary $k$-design is implemented by repeatedly applying 1D-local generators from the CPZPC ensemble~\cite{chen2024incompressibilityspectralgapsrandom}, as depicted in Fig.~\ref{fig:circuitCPZPC}.

As a remark, throughout this work, we assume for simplicity that the block size $\xi$ is an integer multiple of the locality parameter $\ell$. When $\xi$ is not a multiple of $\ell$, one can partition the $2\xi$-qubit system into three disjoint subsystems $A \sqcup B \sqcup C$, where $|A| = |C| = \xi - \ell \lfloor \xi/\ell \rfloor$ and $|B| = 2\ell \lfloor \xi/\ell \rfloor - \xi$. Under this partition, the combined subsystems $A \sqcup B$ and $B \sqcup C$ each contain a total qubit count that is a multiple of $\ell$. Consequently, magic-augmented circuits can be implemented independently on $A \sqcup B$ and $B \sqcup C$ before using the gluing lemma~\cite{Schuster2025Gluing} to glue them together. This modification incurs at most a constant-factor overhead in the error scaling and does not affect our asymptotic results. By a similar argument, we assume without loss of generality that the total system size $n$ is a multiple of $\xi$.\\

\begin{figure}
\centering
\includegraphics[width=.8\linewidth]{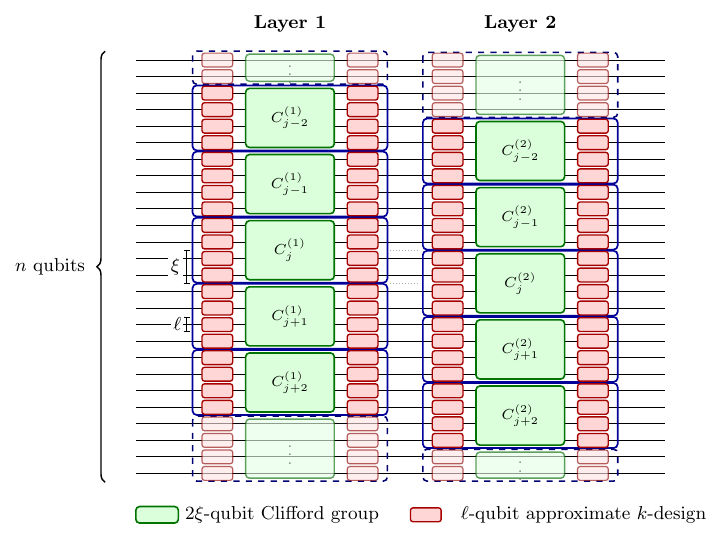}
\caption{Circuit architecture for nearly optimal 1D approximate unitary $k$-designs. The construction employs a two-layer block-brickwork architecture acting on a 1D chain, where each block spans $2\xi$ qubits with an overlap of $\xi$ qubits between adjacent layers. The block size scales as $\xi = \cO(\log(n/\varepsilon)+k)$. Each $2\xi$-qubit block is realized as a magic-augmented circuit, consisting of a random Clifford operation flanked by $\ell$-local $\varepsilon''$-error approximate unitary $k$-designs, where the locality is $\ell = \cO(\log k)$ and the relative error of the local block is set to $\varepsilon'' = 2^{-2\ell k}$.
}
\label{fig:circuitarch}
\end{figure}

\begin{figure}
\centering
\includegraphics[width=.85\linewidth]{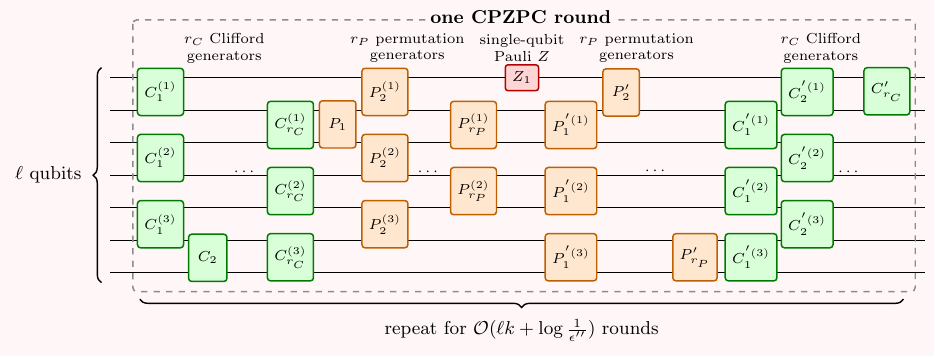}
\caption{Implementation of the $\ell$-local $\varepsilon''$-relative-error approximate unitary $k$-design. An $\ell$-qubit approximate unitary $k$-design with relative error $\varepsilon''$ is constructed by repeating $\cO(\ell k + \log(1/\varepsilon''))$ rounds of CPZPC generators. Each individual round consists of $2r_C$ Clifford group generators, $2r_P$ permutation group generators, and a single Pauli $Z$ gate, where $r_C$ and $r_P$ are fixed constants. All generators are strictly constant-local in a 1D nearest-neighbor layout.}
\label{fig:circuitCPZPC}
\end{figure}

We establish that the proposed architecture yields a nearly optimal approximate unitary $k$-design, as summarized in the following theorem.
\begin{theorem}\label{thm:Mainthm}
For any $n$-qubit system and design order $k = \cO(n)$, an $\varepsilon$-relative-error approximate unitary $k$-design can be explicitly constructed using gates local in a $1D$ geometry and no ancillas, with circuit depth of 
\begin{equation}
    \cO\left( \log \frac{n}{\varepsilon} + k \log k \right).
\end{equation}
Moreover these circuits can be instantiated using $\cO(nk\log k)$ magic $T$ gates.
\end{theorem}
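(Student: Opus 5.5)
Our plan is to assemble the theorem from three ingredients, each established later as a lemma, and then tally depth and $T$-count. The ingredients are: (i) the gluing lemma of Ref.~\cite{Schuster2025Gluing}; (ii) a magic-augmented block design on $2\xi$ qubits with the smaller locality $\ell=c\log k$; and (iii) a constant-depth, 1D-local, constant-spectral-gap generating set for $\U(2^\ell)$ built from the CPZPC ensemble.

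We work from the outside in. First, fix the patch size $\xi=\Theta(\log(n/\varepsilon)+k)$ with $\xi\ge\log(nk^2/\varepsilon)$. The gluing lemma then says the following. Suppose each $2\xi$-qubit block is an $\varepsilon'$-relative-error $k$-design with $\varepsilon'=\cO(\varepsilon/n)$. Then the two-layer block-brickwork circuit is an $\varepsilon$-relative-error $k$-design on $n$ qubits. The requirement $k=\cO(n)$ ensures that $2\xi\le n$ is consistent, and that $k\ll 2^{\xi}$ as the gluing lemma needs. Next, we show each block is such a design. A $2\xi$-qubit random Clifford $C$ is flanked by $\bigotimes_j V_j$ and $\bigotimes_j W_j$, where $V_j,W_j$ are $\ell$-qubit $\varepsilon''$-approximate $k$-designs.

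The core computation follows Ref.~\cite{Zhang2026magic}. The $k$-th moment operator of the Clifford group projects onto the span of the stochastic Lagrangian subspaces $\{R_T\}$, which contains the permutations. Sandwiching with exact local Haar layers maps each $R_T$ to $\sum_{\sigma,\tau}\Wg\cdot\tr(\cdots)\,\sigma^{\otimes m}$. The error relative to the Haar moment is therefore controlled by how much the non-permutation elements $R_T$ overlap with $\ell$-qubit permutation operators. Two contributions must be bounded:
\begin{itemize}
\item the ratio $\tr(R_T\sigma)/\tr(\tau\sigma)$ for non-permutation $T$, raised to the power $2\xi/\ell$ (the number of blocks);
\item the number of such $T$, which is $2^{\cO(k^2)}$.
\end{itemize}
We bound the ratio by $2^{-\Omega(\ell)}$ per block, using permutation symmetry. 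Each non-permutation $T$ contains a nontrivial defect subspace. Acting by $S_k$, we can reduce to a canonical form and show that its overlap with any permutation loses a factor $2^{-\ell}$ per local block. This is the step we expect to be the main obstacle, since $\ell=\cO(\log k)$ leaves no room for crude counting. We would first try to group the $T$'s by their defect dimension $d$ and show a loss of $2^{-\ell d}$ against at most $2^{\cO(dk)}$ such $T$'s. With $\ell\ge 8\log k$, the sum then converges. Raising the per-block loss to the power $2\xi/\ell$ gives a block error $2^{-\Omega(\xi)}\le\varepsilon/n$. The local-design error $\varepsilon''=2^{-2\ell k}$ enters multiplicatively over $2\xi/\ell$ blocks. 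Because relative errors compose as $(1+\varepsilon'')^{2\xi/\ell}$, this contributes $\cO(\xi\varepsilon''/\ell)$, which is also negligible.

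Finally, we realize each $\ell$-qubit design by CPZPC rounds. By Ref.~\cite{chen2024incompressibilityspectralgapsrandom}, the CPZPC walk has a constant spectral gap $\Delta$ on the $k$-th moment whenever the Clifford and permutation walks do. We supply 1D-local, constant-depth expanding generators for both groups:
\begin{itemize}
\item for permutations, Kassabov's generators, implemented in constant depth via the folded-qubit layout;
\item for Cliffords, a constant-size symplectic generating set with Pauli extension.
\end{itemize}
The gap $\Delta$ converts to relative error through the standard factor $2^{2\ell k}$. Hence $\cO(\ell k+\log(1/\varepsilon''))=\cO(k\log k)$ rounds suffice. Each round has constant depth and uses one non-Clifford $Z$-phase, which costs $\cO(1)$ $T$ gates (or is approximated to precision absorbed into $\varepsilon''$).

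For the depth tally, a $2\xi$-qubit random Clifford has 1D depth $\cO(\xi)$ by known linear-depth Clifford synthesis, and the two local-design layers have depth $\cO(k\log k)$. The total depth is therefore $\cO(\xi+k\log k)=\cO(\log(n/\varepsilon)+k\log k)$. For the magic count, there are $n/\ell$ local blocks per flank, with a constant number of flanks, and each block uses $\cO(k\log k)$ $T$ gates. This gives $\cO((n/\ell)\,k\log k)\le\cO(nk\log k)$ $T$ gates, independent of $\varepsilon$.
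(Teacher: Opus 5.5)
Your architecture matches the paper's: the gluing lemma on $2\xi$-qubit patches, magic-augmented blocks with $\ell=\Theta(\log k)$, and CPZPC walks built from 1D Kassabov and Clifford generators. However, the step where you insert the approximate $\ell$-qubit designs fails. Composing relative errors in CP order is valid, but it leaves an additive block error of order $2\xi\varepsilon''/\ell$. With $\varepsilon''=2^{-2\ell k}$ this is not below the block target $\varepsilon/n$. As soon as $\log(n/\varepsilon)\ge 2\ell k$ (e.g.\ $k$ fixed and $\varepsilon\to 0$, so $\xi\to\infty$), it exceeds $\varepsilon/n$ and even grows linearly in $\xi$. Repairing this within your framework forces $\varepsilon''\lesssim\varepsilon\ell/(n\xi)$. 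That keeps the depth bound but adds an $n\log(n/\varepsilon)$ term to the magic count, contradicting the $\varepsilon$-independent $\cO(nk\log k)$ in the statement.

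The missing idea is Lemma~\ref{lem:lblockapprdesign}. Every unitary ensemble fixes the permutation operators exactly, so the $S_k\times S_k$ sector of the Clifford Weingarten expansion is unchanged. For a non-permutation $T$, the bounds $\Vert T\Vert_\infty\le 2^{\ell k}$ and $\Vert\Phi_H^{(k)}(T)\Vert_\infty\ge 2^{-\ell k}$ (from $\tr T\ge 1$) show that an $\varepsilon''$-relative design only inflates $\Vert\Phi^{(k)}(T)\Vert_\infty$ by a factor $1+\varepsilon''2^{2\ell k}\le 2$. So $\varepsilon''$ merely turns $k^4/2^\ell$ into $k^4/2^{\ell-1}$ inside the per-class bound $f(\bar T)\le(k^4/2^\ell)^{\delta(\bar T)}$. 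That bound is raised to the power $2\xi/\ell$ and is therefore already exponentially small in $\xi$. Nothing accumulates additively over the $2\xi/\ell$ local blocks.

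A related point concerns your ``main obstacle'': the per-class loss is $(k^4/2^\ell)^m$, not $2^{-\ell m}$. The $k^{4m}$ factor comes from summing the Weingarten expansion over $S_k$, using $\delta(\bar TT_\pi)\ge\max\{m,r(\pi)-m\}$ for a $\delta$-minimal representative. It is exactly why $\ell\ge 8\log k$ is needed.

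Your $T$-count also misplaces the magic. The $Z$ in CPZPC is a Pauli $Z$, i.e.\ Clifford. The non-Clifford gates are the Toffolis in Kassabov's permutation generators: each $E_{\alpha',\beta'}(x_c)$-type generator is a layer of $s=\ell/18$ parallel Toffolis. So each round costs $\cO(\ell)$ $T$ gates, not $\cO(1)$. An $\ell$-qubit design with $\cO(\ell k)$ rounds then costs $\cO(\ell^2 k)=\cO(k\log^2 k)$, and summing over the $\cO(n/\ell)$ local designs gives $\cO(nk\log k)$. Your per-design figure $\cO(k\log k)$ would give $\cO(nk)$ in total. You reach the stated bound only because you loosened it in the last line, not because the accounting supports it.
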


Theorem~\ref{thm:Mainthm} is established by combining the gluing lemma with two core theorems proved in this work. Specifically, we prove that an $\varepsilon'$-relative-error approximate unitary $k$-design on a $\xi$-qubit block can be generated provided the block size satisfies $\xi = \Omega(\log(1/\varepsilon') + k)$, achieving a circuit depth of $\cO(\log(1/\varepsilon') + k + \xi)$. Consequently, setting $\varepsilon'=\varepsilon/n$ and $\xi = \cO(\log(n/\varepsilon)+k)$ and applying the gluing lemma yields our main result. Note that since the patch size satisfies $\xi \le n$, our construction naturally imposes the regime constraint $k = \cO(n)$.

For completeness, we state the gluing lemma below, which guarantees that an $n$-qubit $\varepsilon$-error approximate $k$-design is realized provided each $2\xi$-qubit block forms an $(\varepsilon/n)$-approximate unitary $k$-design with patch size $\xi \geq \log_2(nk^2/\varepsilon)$.

\begin{lemma}[Gluing lemma~\cite{Schuster2025Gluing}]
Given any approximation error $\varepsilon \leq 1$. Suppose each small random unitary in the double-layer blocked ensemble $\mathcal{E}$ is drawn from an $\varepsilon/n$-approximate unitary $k$-design on $2\xi$ qubits with circuit depth $d$. Then $\mathcal{E}$ forms an $\varepsilon$-approximate unitary $k$-design on $n$ qubits with depth $2d$, whenever the local patch size is at least $\xi \geq \log_2(nk^2/\varepsilon)$.
\end{lemma}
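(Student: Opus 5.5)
The plan is to reduce the statement to a single ``two-patch'' gluing estimate for exact Haar blocks, and then propagate it along the chain using the CP-ordering characterization of relative error. Recall that an ensemble with $k$-th moment channel $\Phi$ is an $\varepsilon$-relative-error design iff $(1-\varepsilon)\Phi_H \preceq \Phi \preceq (1+\varepsilon)\Phi_H$ in the completely positive order, where $\Phi_H$ is the Haar moment channel. This order is preserved under pre- and post-composition with CP maps, and under tensor products with CP maps. The depth claim is immediate: the $m$ blocks of each layer act on disjoint qubits and run in parallel, so two layers give depth $2d$. All the content is therefore in the error bound.

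\textbf{Step 1 (replace local blocks by Haar).} There are at most $n$ blocks in total. Replacing each $2\xi$-qubit block's moment channel by the exact Haar channel on those $2\xi$ qubits changes the global moment channel by sandwiching it between $(1\pm\varepsilon/n)^{\#\text{blocks}}$ factors of the idealized channel. This uses monotonicity of $\preceq$ under composition and tensoring. The resulting factors are bounded by $1\pm O(\varepsilon)$. It therefore suffices to show that the double-layer ensemble with exact Haar $2\xi$-qubit blocks is an $O(\varepsilon)$-relative-error design; constants are absorbed by rescaling $\varepsilon$.

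\textbf{Step 2 (two-patch gluing, the main obstacle).} Let $A,B,C$ be registers with $|B|=\xi$. I would show
\[
(1-\delta)\,\Phi_{H,ABC}\preceq \Phi_{H,AB}\circ\Phi_{H,BC}\preceq(1+\delta)\,\Phi_{H,ABC},\qquad \delta=O(k^2/2^{|B|}).
\]
The Haar moment channel on $AB$ is the Weingarten projection onto the span of permutation operators $\{P_\sigma^{AB}\}_{\sigma\in S_k}$.

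First, since $\Phi_{H,BC}\circ\Phi_{H,ABC}=\Phi_{H,ABC}$ and similarly for $AB$, the composite $\Phi_{H,AB}\Phi_{H,BC}$ agrees with $\Phi_{H,ABC}$ on the range of $\Phi_{H,ABC}$. Second, I would write the Choi operator of the composite explicitly. Feeding $P_\tau^{BC}\otimes(\cdot)_A$ into $\Phi_{H,AB}$ uses the partial Gram matrix $\mathrm{Tr}_B(P_\sigma^{B\dagger} P_\tau^B)=d_B^{\,k-|\sigma^{-1}\tau|}$. This expresses the composite as $\sum_{\sigma,\tau}P_\sigma^{AB}\otimes P_\tau^{C}\,c_{\sigma\tau}$, where the coefficient matrix factors through $\mathrm{Wg}_{d_{AB}}\,G_{d_B}\,\mathrm{Wg}_{d_{BC}}$.

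Third, I would compare this with the Choi operator of $\Phi_{H,ABC}$, which is $\sum \mathrm{Wg}_{d_{ABC}}(\sigma^{-1}\tau)\,P_\sigma\otimes P_\tau$. Using $d_B^{-1}G_{d_B}=I+E$ with $\|E\|\le k^2/d_B$ (off-diagonal terms $d_B^{-|\pi|}$ summed over $S_k$), the two Choi operators differ by a perturbation bounded, as a PSD sandwich, by $\delta$ times the Haar Choi operator. Getting this as a \emph{PSD} inequality rather than a norm bound is the hard part. I would try the route of Schuster et al.: express both Choi operators as $\int$ of $|\phi\rangle\langle\phi|^{\otimes k}$-type positive sums and bound the ratio of Weingarten-type coefficients entrywise. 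The hypothesis $d_B=2^\xi\ge k^2$ is exactly what makes these Neumann-series expansions converge.

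\textbf{Step 3 (induction along the chain).} The Haar double-layer ensemble is built by gluing patches one at a time. Merging an already-glued region with the next $2\xi$-block across a $\xi$-qubit overlap is an instance of Step 2, after using Step-2-type monotonicity to absorb the other layer's blocks. Each merge costs a factor $1\pm O(k^2/2^\xi)$, and there are at most $m\le n$ merges. The total relative error is therefore $(1+O(k^2 2^{-\xi}))^{n}-1=O(nk^2/2^\xi)\le O(\varepsilon)$ whenever $\xi\ge\log_2(nk^2/\varepsilon)$. Combined with Step 1, this gives an $O(\varepsilon)$-approximate $k$-design, and rescaling constants yields the stated lemma.
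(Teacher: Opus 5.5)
The paper contains no proof of this lemma; it is quoted from Schuster, Haferkamp and Huang. Your outline therefore has to stand on its own.

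Its outer layers are sound:
\begin{itemize}
\item Step~1 trades each $\varepsilon/n$-design block for Haar using monotonicity of $\preceq$ under composition and tensoring.
\item The reordering behind Step~3 works: same-layer blocks commute, so $\Phi_{L_2}\Phi_{L_1}=\cdots\Phi_{45}\,\Phi_{23}\Phi_{12}\,\Phi_{34}\cdots$ can be glued one $\xi$-qubit overlap at a time. Each merge is Step~2 after relabeling $A\leftrightarrow C$ if needed.
\item The depth count is correct.
\end{itemize}

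\textbf{The gap is Step~2.} It is the entire content of the lemma, and you stop exactly where the key idea is needed. You correctly flag that a norm bound on the difference of Choi operators is not automatically a CP-order bound. Your proposed remedy, comparing Weingarten-type coefficients entrywise, does not produce one. In the expansion over $T_\sigma\otimes\overline{T_\tau}$ the coefficients are signed and the operators are not positive, so coefficientwise closeness implies no order relation.

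Two smaller slips. The composite carries four labels, with terms $\cketbra{T^{AB}_\beta\otimes T^{C}_\tau}{T^{A}_\alpha\otimes T^{BC}_\sigma}$, not the two-index form you wrote. The Gram normalization is $d_B^{-k}G_{d_B}$, not $d_B^{-1}G_{d_B}$.

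\textbf{The missing idea} is a reference channel that is flat on the common support. Let $\Pi_X:=d_X^{-k}\sum_{\sigma\in S_k}\cketbra{T^X_\sigma}{T^X_\sigma}$.
\begin{itemize}
\item Its Choi operator is $k!\,d_X^{-k}P_X$, where $P_X$ projects onto the vectors invariant under all $T_\pi\otimes\overline{T_\pi}$.
\item Every Choi operator in the argument (of $\Phi_{AB}\Phi_{BC}$, $\Pi_{AB}\Pi_{BC}$, $\Phi_{ABC}$, $\Pi_{ABC}$) is supported in the range of $P_{ABC}$.
\item Hence a Hermitian error supported there with norm at most $\delta\,k!\,d_{ABC}^{-k}$ is automatically sandwiched by $\pm\delta\,J(\Pi_{ABC})$.
\end{itemize}

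With this reference, Step~2 follows from two facts.
\begin{itemize}
\item[(a)] By Schur--Weyl duality and the hook-content formula, $J(\Phi_{H,X})$ and $J(\Pi_X)$ are simultaneously block diagonal. On block $\lambda$ their ratio is $\prod_{(i,j)\in\lambda}d_X/(d_X+j-i)$, so $e^{-k^2/(2d_X)}\Pi_X\preceq\Phi_{H,X}\preceq\frac{d_X^k}{d_X(d_X-1)\cdots(d_X-k+1)}\Pi_X$.
\item[(b)] A direct computation gives $\Pi_{AB}\Pi_{BC}=\Pi_{ABC}+d_{ABC}^{-k}\sum_{\sigma\neq\tau}d_B^{\,c(\sigma^{-1}\tau)-k}\cketbra{T^{AB}_\sigma\otimes T^{C}_\tau}{T^{A}_\sigma\otimes T^{BC}_\tau}$. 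The remainder's Choi operator is a weighted sum of permutation unitaries, with norm at most $k!\,d_{ABC}^{-k}\bigl(\prod_{j=1}^{k-1}(1+j/d_B)-1\bigr)$.
\end{itemize}
Chaining (a), (b), (a) yields your Step~2 with $\delta=O(k^2/d_B)$, independent of $d_A$ and $d_C$. Fact (a) is also why a norm bound at scale $k!\,d_{ABC}^{-k}$ would have sufficed directly. It is more economical to convert every block to $\Pi$ once, glue the $\Pi$'s along the chain, and convert back at the end.

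Finally, ``at most $n$ blocks, then rescale $\varepsilon$'' is not available, because the hypotheses fix $\varepsilon/n$ and $\log_2(nk^2/\varepsilon)$. The slack needed for the stated constant comes instead from there being only about $n/\xi$ blocks and overlaps.
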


Our first main result  (Theorem \ref{thm:MagicAugmented}) demonstrates that the magic-augmented circuit forms an $\varepsilon'$-error approximate unitary $k$-design on $\xi$ qubits provided that $\xi = \Theta(\log(1/\varepsilon') + k)$, the local unitary block locality satisfies $\ell \geq 8\log k$, and the relative error of the $\ell$-local block satisfies $\varepsilon'' \le 2^{-2\ell k}$. Our second main result shows that this $\ell$-qubit $\varepsilon''$-relative-error approximate unitary $k$-design can be synthesized in circuit depth $\cO(\ell k + \log(1/\varepsilon''))$ using CPZPC generators whenever $k = \cO(2^{\ell/6.1})$.

Setting $\ell = \cO(\log k)$, $\varepsilon' = \varepsilon/n$, and $\varepsilon'' = 2^{-2\ell k}$ yields Theorem~\ref{thm:Mainthm}. Specifically, each $\ell$-qubit magic block is implemented in depth $\cO(\ell k) = \cO(k\log k)$ with a $T$-gate count of $\cO(\ell \cdot \ell k) = \cO(k\log^2 k)$, while the random Clifford layer requires depth $\cO(\xi) = \cO(\log(n/\varepsilon) + k)$. Summing over the $m = \cO(n/\ell)$ local patches, the overall circuit achieves a total circuit depth of $\cO(\log(n/\varepsilon) + k\log k)$ with a global $T$-gate count of $\cO((n/\ell) \cdot k\log^2 k) = \cO(nk\log k)$. Below, we formalize these results and detail their proof outlines.

\begin{restatable}{theorem}{ThmMagicAugmented}
\label{thm:MagicAugmented}
For any $\xi$-qubit system and positive integer $k\leq \xi-1$, a magic-augmented circuit composed of a random Clifford gate and two layers of $\ell$-local $\varepsilon''$-error approximate unitary $k$-design will form a $\xi$-qubit approximate unitary $k$-design with error
\begin{equation}
2^{-\Omega(\xi)+\cO(k)},
\end{equation}
provided with $\ell \geq 8\log k$ and $\varepsilon''\leq 2^{-2\ell k}$. Thus, choosing $\xi = \cO(\log 1/\varepsilon'+k)$ will lead to an approximation error $\varepsilon'$.
\end{restatable}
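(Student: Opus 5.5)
We compare the $k$-th moment operator of the magic-augmented ensemble $\mathcal{E} = (V_1\otimes\cdots\otimes V_{\xi/\ell})\, C\, (W_1\otimes\cdots\otimes W_{\xi/\ell})$ with the Haar moment operator. First we handle the Clifford layer. Averaging over $C$ replaces the middle factor by the projector onto the Clifford commutant. By Gross--Nezami--Walter this commutant is spanned by operators $R(T)^{\otimes \xi}$, where $T$ ranges over the stochastic Lagrangian subspaces $\Sigma_{k,k}$. The permutations $S_k\subset\Sigma_{k,k}$ span the unitary commutant. Writing the Clifford twirl as $\sum_{T,T'}\Wg^{\Cl}(T,T')\,|R(T))(R(T')|$, we split the sum into the permutation--permutation part and the part involving at least one non-permutation $T$.

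Next we use the local designs. Each outer layer is a tensor product of $\ell$-qubit designs, so in the exact-design case it acts as $\Phi^{\otimes \xi/\ell}$ with $\Phi$ the $\ell$-qubit Haar twirl. Since $R(T)^{\otimes\xi}=(R(T)^{\otimes\ell})^{\otimes \xi/\ell}$, it suffices to understand $\Phi(R(T)^{\otimes\ell})$. For $T\in S_k$ this is a fixed point. For non-permutation $T$, we expand $\Phi(R(T)^{\otimes\ell})=\sum_{\sigma,\tau}\Wg(\sigma\tau^{-1},2^\ell)\,\tr(R(T)^{\otimes \ell}\sigma^{-1})\,\tau$. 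The key estimate is that $|\tr(R(T)\sigma^{-1})|\le 2^{k-\delta}$ with a defect $\delta\ge 1$ (in suitable normalization) whenever $T$ is not a permutation. Raised to the $\ell$-th power, this gives a suppression $2^{-\ell}$ per block, and the permutation symmetry gives a prefactor of at most $k!$ from the sum over $\sigma$. Here we use the refined analysis mentioned in the introduction: we write $T$ via its defect subspace $N$ and coset structure, and count the orbits of non-permutation elements under $S_k\times S_k$. The goal is a per-block bound of the form $k^{\cO(1)}\,2^{-\ell/2}$ rather than $k!\,2^{-\ell}$, which is what $\ell\ge 8\log k$ absorbs. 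This step is the main obstacle. A naive bound needs $2^\ell\gtrsim k!$, i.e.\ $\ell=\cO(k\log k)$, so we must exploit that each non-permutation $R(T)$ projects onto a code space of dimension at most $2^{\ell(k-1)}$ on $\ell$ qubits. We would first try to bound $\|\Phi(R(T)^{\otimes\ell})\|_2\le 2^{-\ell/4}\|\text{(unitary-commutant scale)}\|$ via the Schur--Weyl decomposition restricted to the stabilizer code of $T$.

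Then we assemble the pieces. Tensoring over the $\xi/\ell$ blocks, each non-permutation contribution is suppressed by $(k^{c}2^{-\ell/4})^{\xi/\ell}=2^{-\Omega(\xi)}$ once $\ell\ge 8\log k$. Summing over $|\Sigma_{k,k}|\le 2^{\cO(k^2)}$ naively would be too costly, so we instead sum the norms using the Gram-matrix and Weingarten bound $\|\Wg^{\Cl}\|\le(1-k^2/2^{\xi})^{-1}$, valid for $k\le\xi-1$, together with the frame-like bound $\sum_T|(R(T)|X)|^2\le 2^{\cO(k)}\cdots$. This yields the additive-type bound $2^{-\Omega(\xi)+\cO(k)}$ on the deviation. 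We convert it to relative error via the standard lemma that an operator-norm deviation $\eta$ in the Choi/moment picture gives relative error $\le 2^{2\xi k}\eta$. To avoid that loss, we bound the Choi operator directly: the permutation part reproduces the Haar twirl up to Weingarten factors $1\pm\cO(k^2 2^{-\xi})$, and the non-permutation part is bounded in the positive-semidefinite order by $2^{-\Omega(\xi)+\cO(k)}$ times the Haar Choi operator.

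Finally, we treat the approximate local designs. Replacing exact $\ell$-qubit designs by $\varepsilon''$-relative ones gives $(1\pm\varepsilon'')$ sandwiching in the PSD order for each block, so over the $2\xi/\ell$ blocks the error is $(1+\varepsilon'')^{2\xi/\ell}-1$. For the non-permutation terms, the error $\varepsilon''\le 2^{-2\ell k}$ must beat the trivial size $2^{\ell k}$ of $R(T)^{\otimes\ell}$, so the perturbation is at most $2^{-\ell k}$ per block and does not spoil the suppression. Choosing $\xi=C(\log(1/\varepsilon')+k)$ with $C$ large then gives error $\varepsilon'$.
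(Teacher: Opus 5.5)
Your outline matches the paper's skeleton: the Clifford Weingarten expansion, the split into permutation and non-permutation terms, and the reduction to the single-block quantity $f(T)=\|\Phi_H^{(k)}(R(T)^{\otimes\ell})\|_\infty$. However, the one step that makes $\ell=\cO(\log k)$ possible is missing. You flag it as ``the main obstacle'' and offer only a tentative plan (Schur--Weyl restricted to a stabilizer code, orbits under $S_k\times S_k$). The target you set would also not suffice. A per-block bound $k^{\cO(1)}2^{-\ell/2}$ that is uniform in $T$, summed by the triangle inequality over the $2^{\Theta(k^2)}$ elements of $\Sigma_{k,k}$, yields only $2^{\cO(k^2)-\Omega(\xi)}$, i.e.\ it needs $\xi=\Omega(k^2)$. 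The ``frame-like bound'' you offer in its place is never specified.

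What is needed is suppression that is geometric in the distance to the permutations. The paper obtains it as follows.
\begin{itemize}
\item Since $f(TT_\pi)=f(T)$, one may pick in each right $S_k$-orbit a representative $\bar T$ with minimal $m=\delta(\bar T)\geq 1$.
\item By the distance formula $\delta(\bar TT_\pi)=m+r(\pi)-2\beta(\pi)$ with $\beta(\pi)\leq\min\{m,r(\pi)\}$ (Lemma~44 of Ref.~\cite{bittel2025completetheorycliffordcommutant}), together with minimality, one gets $\delta(\bar TT_\pi)\geq\max\{m,\,r(\pi)-m\}$.
\item Plug this into $f(\bar T)\leq \frac{d^k}{d(d-1)\cdots(d-k+1)}\sum_{\pi}d^{-\delta(\bar TT_\pi)}$, which follows from the triangle inequality and $\sum_\sigma|\Wg(\sigma,\pi)|=(d-k)!/d!$.
\item At most $K^r/r!$ permutations have $r(\pi)=r$, with $K=k(k-1)/2$. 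Hence the sum is dominated by $d^{-m}K^{2m}/(2m)!$, giving $f(\bar T)\leq(k^4/2^\ell)^{\delta(\bar T)}$.
\end{itemize}
This is exactly where $2^\ell\geq k^8$ enters, since then $f(\bar T)^{2\xi/\ell}\leq 2^{-\xi\,\delta(\bar T)}$.

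With that decay the sum over the commutant becomes harmless. Each orbit has at most $k!$ elements. The Gaussian-binomial generating function gives $\sum_{T\notin S_k}x^{\delta(T)}\leq\prod_{i=0}^{k-2}(1+2^ix)-1\leq 2^kx$ whenever $2^{k-1}x\leq 1/2$. Hence $\sum_{T\notin S_k}f(T)^{2\xi/\ell}\leq k!\,2^k(k^4/2^\ell)^{2\xi/\ell}$, and the $k!$ is cancelled by the $1/k!$ in the criterion \eqref{eq:epr_norm}. There is also no $2^{2\xi k}$ conversion loss to avoid: $[\cketbra{T_1}{T_2}\otimes\id](P^{EPR})$ carries a factor $2^{-\xi k}$, and $2^{\xi k}\Wg_C$ has $\cO(1)$ row sums.

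Three further points need repair.
\begin{itemize}
\item The Clifford Weingarten control is $\eta\leq 2^{k-\xi}$, not $k^2/2^{\xi}$, because there are $2^{k-1}-1$ commutant elements at distance one.
\item The permutation-block discrepancy $\Wg_C-\Wg$ arises only through excursions into the non-permutation block. It must be estimated, e.g.\ by the Schur complement, which gives $k!\,2^{k-2\xi+3}$; it cannot simply be asserted as $1\pm\cO(k^22^{-\xi})$.
\item In your last step, the global sandwich $(1+\varepsilon'')^{2\xi/\ell}-1\approx 2\xi\varepsilon''/\ell$ grows rather than decays with $\xi$. Under the hypothesis $\varepsilon''\leq 2^{-2\ell k}$ alone it therefore cannot give $2^{-\Omega(\xi)+\cO(k)}$. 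The right observation is that every local twirl fixes $T_\pi$ exactly, so only non-permutation terms are perturbed. For those, $\|\Phi^{(k)}_{\mathcal{E}_{v,\ell}}(T)\|_\infty\leq(1+\varepsilon''2^{2\ell k})f(T)$, using Russo--Dye, $\|T\|_\infty\leq 2^{\ell k}$, and $f(T)\geq 2^{-\ell k}$ (from $\tr T\geq 1$). Your second sentence in that paragraph points in this direction but does not carry it out.
\end{itemize}
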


To prove Theorem~\ref{thm:MagicAugmented}, we analyze the magic-augmented quantum channel by expanding it in the basis of Clifford commutant elements. The $k$-fold twirling channel of a random Clifford gate differs from that of a Haar-random unitary operation in two key aspects: the value of the Clifford Weingarten function and the presence of non-permutation Clifford commutant elements. By explicit evaluation, the discrepancy between the Clifford and Haar Weingarten functions is upper-bounded by $2^{-\xi + k}$. To eliminate the non-permutation Clifford commutant elements, we append $\ell$-local random unitaries alongside the Clifford twirl. The prior work established that choosing $\ell = \cO(k\log k)$ suffices to eliminate these non-permutation terms~\cite{Zhang2026magic}. Here, we present a refined analysis focused on Clifford commutant elements nearest to the identity element under permutation symmetry. This allows us to establish that $\ell \geq 8\log k$ suffices for removing all non-permutation commutant elements.\\

The other key ingredient of our construction is a constant spectral gap generating set for the unitary group. We have:

\begin{theorem}\label{thm:CPZPCdesign}
For any $\ell$-qubit system and design order $k$, suppose $k\leq c 2^{\ell/6.1}$ for a sufficiently small constant $c>0$, an $\varepsilon''$-relative-error approximate unitary $k$-design can be explicitly constructed using gates local in a $1D$ geometry and no ancillas, with circuit depth of 
\begin{equation}
\cO\left(\ell k + \log \frac{1}{\varepsilon''} \right).
\end{equation}
\end{theorem}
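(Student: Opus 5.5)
The plan is to follow the CPZPC route of Ref.~\cite{chen2024incompressibilityspectralgapsrandom}. That work shows that the ensemble $\CPZPC = C_1 P_1 Z P_2 C_2$ has a constant spectral gap on the $k$-fold moment operator, provided $k \le c\,2^{\ell/6.1}$. Here $C_i$ are independent uniformly random $\ell$-qubit Cliffords, $P_i$ are independent uniformly random qubit permutations (or permutations of computational basis states, as in their construction), and $Z$ is a fixed diagonal non-Clifford phase gate. Concretely, writing $M_\nu^{(k)} = \E_{U\sim\nu} U^{\otimes k}\otimes \bar U^{\otimes k}$, we have $\|M^{(k)}_{\CPZPC} - M^{(k)}_{\mathrm{Haar}}\|_\infty \le 1-\Delta$ with $\Delta$ an absolute constant. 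We will not sample the uniform $C_i,P_i$ exactly. Instead we replace each by a short random walk on a constant-size, 1D-local, constant-depth generating set with a constant spectral gap. We then show that the composite walk still has a constant gap, and convert the gap into relative error.

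The first step is the Clifford expander. Take a constant-size generating set $S_{\Sp}$ of $\Sp(2\ell,\F_2)$ with a uniform spectral gap (a Kassabov-type bounded generation / property (T) argument for symplectic groups over $\F_2$), realized by nearest-neighbor circuits of constant depth. Augment it with Pauli extension (single-qubit Paulis), giving at most $451$ generators $S_{\Cl}$. For the permutation factor, use Kassabov's constant-size expanding generators for $\Sym$, each implemented in constant depth on an open 1D chain via the folded-qubit layout. Let $\mu_C$ and $\mu_P$ be the uniform measures on these generating sets. Then $\|M^{(k)}_{\mu_C^{*r_C}} - M^{(k)}_{\Cl}\|_\infty \le (1-\delta_C)^{r_C}$, and similarly for permutations. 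This holds for every $k$ simultaneously, because the moment operator is a representation of the group and the gap of a Cayley-graph walk bounds every nontrivial irrep. Choose constants $r_C, r_P$ so that these deviations are at most $\Delta/10$.

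The second step is stability of the gap. Substitute each $C_i,P_i$ with its $r$-step walk. This gives one round $\nu$ of $2r_C$ Clifford generators, $2r_P$ permutation generators and one $Z$. Telescope $M_\nu - M_{\CPZPC}$ as a sum of four terms, each bounded by a walk deviation times operators of norm at most $1$. Hence $\|M_\nu - M_{\mathrm{Haar}}\|_\infty \le 1-\Delta/2$. One subtlety is that the walk's moment operator need not project exactly onto the Clifford commutant. However, $M^{(k)}_{\Cl}$ is the exact projector, so the telescoping is legitimate. Each round has constant depth, so $t$ rounds give additive-in-operator-norm error $(1-\Delta/2)^t$.

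The third step converts to relative error. The standard bound says that operator-norm distance $\eta$ of the moment operator implies relative error at most $2^{2\ell k}\eta$, up to $k!$-type factors absorbed in $2^{\cO(\ell k)}$. This follows from the comparison through the frame potential / Choi dimension $2^{2\ell k}$. Therefore $t = \cO(\ell k + \log(1/\varepsilon''))$ rounds suffice, giving depth $\cO(\ell k + \log(1/\varepsilon''))$.

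The main obstacle is the first step: establishing a genuinely constant, $\ell$-independent gap for a constant-depth 1D-local generating set of the symplectic group. I would first try to import Kassabov's expander result for $\SL$/$\Sp$ over $\F_2$ through an explicit bounded-generation embedding. Then I would verify that each generator, such as elementary symplectic transvections arranged in block form, compiles to constant-depth nearest-neighbor CNOT/phase circuits, again using the folded layout.
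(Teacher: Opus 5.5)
Your overall route is the paper's: replace the uniform Clifford and permutation layers of the CPZPC ensemble by $r_C$ and $r_P$ steps of walks on 1D constant-depth generating sets. Telescoping then keeps a constant gap, and the gap-to-relative-design bound gives $L=\cO(\ell k+\log(1/\varepsilon''))$. On the Clifford side (symplectic expander from Kassabov/Nikolov, then Pauli extension) a Kazhdan constant does give a gap on every nontrivial representation, as you say. You still owe an argument that the Pauli extension preserves it: the paper uses Hadad's abelian-extension theorem with $B=\{X_1\}$, whose $\Sp$-orbit is all non-identity Paulis. You also need the generating set to be symmetric.

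The step that does not hold as written is the permutation walk. You claim a constant-size generating set of $\Sym(2^\ell)$ that is an expander on every irrep, and hence works for all $k$, with each generator compiled to constant depth by the folded layout. Kassabov's abstract theorem does give bounded expanding generators for every $\Alt(n)$. However, the explicit ones that the folded layout makes 1D-local generate $\Alt(K_s)$ with $K_s=(\F_2^{3s}\setminus\{0\})^6$, which is a proper subset of the basis states. The paper, following Chen et al., instead sandwiches them between uniformly random bit-flip layers $x,z\in\{I,X\}^{\otimes\ell}$. This set is exponentially large, though each element has depth $1$, so the Kazhdan-to-gap lemma for bounded sets is not the mechanism. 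For $\ell\not\equiv 0\pmod{18}$ it also alternates two overlapping $18s$-qubit copies, controlled by the permutation overlap lemma with error $\cO((k\log k+\ell)^3/2^{(\ell-2\ell_0)/2})$. The resulting gap holds only on the moment representation $P^{\otimes 2k}$ (using $P\otimes\bar P=P^{\otimes 2}$) for $k\lesssim 2^{\ell/6.1}$. This, not the ideal CPZPC gap (essential norm $\cO(k/2^{\ell/2})$), is where the hypothesis $k\le c\,2^{\ell/6.1}$ comes from.

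Because that hypothesis is part of the theorem, your proof goes through once the ``all $k$'' claim is replaced by this restricted-$k$ lemma, together with the Alt/Sym moment equality for $k\le 2^\ell-2$. Two smaller corrections: in CPZPC the middle gate is a single-qubit Pauli $Z$, whose non-Cliffordness comes from the surrounding basis-state permutations. Those $P$ must permute computational basis states, not qubits.
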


The proof of Theorem~\ref{thm:CPZPCdesign} relies on constructing a generating set for the CPZPC ensemble that exhibits a constant spectral gap and is implementable in 1D constant depth. The exact CPZPC ensemble, which comprises two random Clifford operations, two random permutations, and a single Pauli $Z$ gate, was previously proved to possess a constant spectral gap~\cite{chen2024incompressibilityspectralgapsrandom} with respect to the Haar measure. To make this construction efficient, we replace the fully random Clifford and permutation operations with the 1D-local generators. Particularly, the random permutation layer can be replaced by a constant number of Kassabov's generators, which can be synthesized in constant depth on a 1D nearest-neighbor geometry, flanked with two layers of bit-flip operators.

Similarly, the random Clifford layer must be replaced by a constant number of 1D-local generators while incurring at most a constant reduction in the spectral gap. To achieve this, we explicitly prove that the Clifford group admits a bounded-size generating set with an $\Omega(1)$ Kazhdan constant, where each generator can be implemented in constant depth in 1D. This structural result, which may be of independent interest, is stated below.

\begin{theorem}[Constant-depth 1D Clifford generators]\label{thm:cliffordgenerator}
For any integer $\ell\geq 1$, there exists an explicit symmetric generating set $S_{\Cl,\ell} \subset \Cl_{\ell}$ such that each element in $S_{\Cl,\ell}$ is a product of single-qubit Pauli, Hadamard ($H$), phase ($S$), $\mathrm{CZ}$, and $\mathrm{CNOT}$ gates with circuit depth $\cO(1)$ in a one-dimensional nearest-neighbor system, $|S_{\Cl,\ell}|\leq 451$, and $\Cl_{\ell}$ has a Kazhdan constant $\kappa_{\Cl} = \Omega(1)$ with respect to $S_{\Cl,\ell}$. Thus, the spectral gap of $\Cl_{\ell}$ with respect to the uniform distribution over $S_{\Cl,\ell}$ is
\begin{equation}
\Delta(\mu(S_{\Cl,\ell}), \rho, \Cl_{\ell}) = \Omega(1),
\end{equation}
for any finite-dimensional unitary representation $\rho$ of $\Cl_{\ell}$ with no trivial subrepresentation.
\end{theorem}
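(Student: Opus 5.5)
We will prove the statement by exploiting the group extension $1\to \mathcal{P}_\ell/\langle i\rangle \to \Cl_\ell/\mathrm{U}(1) \to \Sp_{2\ell}(\F_2)\to 1$. We first build a bounded generating set for the binary symplectic group $\Sp_{2\ell}(\F_2)$ with a uniform Kazhdan constant. We then lift it to $\Cl_\ell$ by adjoining a few Pauli generators. Finally we check that every generator is a constant-depth 1D nearest-neighbor circuit. The spectral gap statement then follows from the standard inequality $\Delta(\mu(S),\rho,G)\geq \kappa^2/(2|S|)$, valid for any unitary representation with no invariant vectors. With $|S|\le 451$ and $\kappa=\Omega(1)$ this gives $\Omega(1)$.

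\textbf{Step 1: symplectic part.} For $\Sp_{2\ell}(\F_2)$ with $\ell$ large, we use the known uniform Kazhdan property of elementary-matrix-generated groups (Ershov--Jaikin-Zapirain, via the Kassabov-type approach through $\mathrm{EL}$ groups). The plan is to realize $\Sp_{2\ell}(\F_2)$ as generated by a bounded number of ``root subgroups'' $\{X_\alpha\}$, each of which is abelian. Uniform Kazhdan constants relative to the union of root subgroups are available. To pass to a bounded generating set, we adapt Kassabov's trick and view $\F_2^{\ell}$ as $R^{m}$ for a fixed small $m$, where $R=\Mat_{\ell/m}(\F_2)$ (or a similar block structure). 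The group is then a Steinberg-type group of bounded rank over a noncommutative ring. Each root subgroup $\cong \Mat_{\ell/m}(\F_2)$ or $\Sym$-type matrices is in turn boundedly generated with relative property (T), because it is normalized by a copy of $\SL$ acting with an expander (Kassabov's $\SL_3$ over rings). Concretely, a root element is a $\mathrm{CNOT}$-, $\mathrm{CZ}$- or phase-layer acting block-to-block. The normalizing elements are shifts and block permutations of qubits.

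\textbf{Step 2: 1D locality.} Each generator must be a depth-$\cO(1)$ nearest-neighbor circuit. We therefore choose the block decomposition with the folded-qubit layout used for the permutation generators, so that each generator is one of three kinds. The first kind is a layer of $\mathrm{CNOT}$s/$\mathrm{CZ}$s between neighboring blocks (a transvection-type map $x\mapsto x+Ax$ with $A$ banded). The second is a layer of $S$ and $H$ gates. The third is a cyclic shift of qubits, which is realized in constant depth by the folded layout and SWAP$=$3 $\mathrm{CNOT}$s. This is the main obstacle. The generators coming from Kassabov's ring expanders for $\Mat_{n}(\F_2)$ involve elementary matrices like $1+e_{i,i+1}$ and cyclic shifts, and these must be decomposed into banded pieces of constant depth. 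We will try first to use the generating set of $\SL_n(\F_2)$ from Kassabov consisting of $e_{12}$-type moves plus the cyclic permutation, implementing the cycle as in the permutation-group case.

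\textbf{Step 3: Pauli extension and counting.} Add $X_1,Z_1$ (single-qubit Paulis). Since $\Sp_{2\ell}(\F_2)$ acts irreducibly and transitively on nonzero Paulis, the Paulis are a normal subgroup generated as a normal closure. To transfer the Kazhdan constant, we use the relative property (T) argument for the pair $(\Cl_\ell,\mathcal{P}_\ell)$. A vector almost invariant under the symplectic generators and $X_1,Z_1$ is almost invariant under conjugates of $X_1,Z_1$ by words of bounded length. Since $\mathcal{P}_\ell$ is abelian, we decompose the representation restricted to $\mathcal{P}_\ell$ into characters, i.e. a spectral measure on $\F_2^{2\ell}$. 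Near-invariance forces mass near the trivial character, using the quantitative symplectic equidistribution of the measure. Hence the vector is close to a $\mathcal{P}_\ell$-invariant vector, reducing to a representation of $\Sp_{2\ell}(\F_2)$ where Step 1 applies. Small $\ell$ are handled trivially because the group is finite and fixed. Tallying all generators with their inverses gives the explicit bound $|S_{\Cl,\ell}|\le 451$.
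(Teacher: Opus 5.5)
Your Step 3 is an informal version of the relative-(T) argument that the paper packages as Hadad's abelian-extension theorem (applied with $B=\{X_1\}$), so that part is fine in spirit. What does the work there is near-$\Sp$-invariance of the spectral measure together with transitivity on nonzero Paulis, not bounded-length conjugates of $X_1$. The genuine gaps are elsewhere.

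\textbf{Steps 1 and 2 are not carried out.} Step 1 carries the whole theorem, yet it is only a program. A root-subgroup argument in the style of Ershov--Jaikin-Zapirain--Kassabov for a bounded-rank symplectic group over $\Mat(\ell/m;\F_2)$ might work in principle, but you fix no generating set. You also prove neither of the two ingredients it needs: uniform relative (T) of each root subgroup (including the long-root subgroup of symmetric matrices) with respect to a bounded set, and the passage from there to a uniform $\kappa$ for the whole group. Hence $\kappa_{\Sp}=\Omega(1)$, the $\cO(1)$ depth and the count are all unsupported. The claim ``tallying gives $\le 451$'' is read off the statement: $451$ is the tally of the paper's specific set, namely $12\cdot 28$ conjugated $\SL$ generators, $+1$ for $X_1$, $+112$ to symmetrize the phase-gate elements, and $+2$ central phases.

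Your Step 2 fallback would also fail if taken literally. Boundedly many single transvections $I+e_{ij}$ together with the $\ell$-cycle (which preserves Hamming weight) never form a uniform expander. In the action on $\F_2^{\ell}\setminus\{0\}$ the $\ell$ weight-one vectors have only $\cO(1)$ outgoing edges, so the Kazhdan constant is $\cO(\ell^{-1/2})$.

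The missing idea is the paper's chain:
\begin{enumerate}
\item Kassabov's $14$-element set $\varphi_1(E_{\alpha\beta}(1)),\varphi_1(E_{\alpha\beta}(a)),\varphi_1(E_{\alpha\beta}(b))$ of $\EL(3;\Mat(s;\F_2))=\SL(3s;\F_2)$, with $\kappa>1/400$. Each generator is a full CNOT layer, and the only inter-cell one couples cells $a$ and $a+1 \bmod s$; the folded layout makes this local, and no cyclic shift of qubits is ever used as a generator.
\item The three-factor decomposition $(P_{AB}\oplus I_C)(I_A\oplus R_{BC})(Q_{AB}\oplus I_C)$, which reaches every $\ell$.
\item Nikolov's decomposition of $\Sp(2\ell;\F_2)$ into at most $130$ conjugates of the CNOT subgroup $\{\diag(g,(g^{-1})^{T})\}$. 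Only $12$ distinct conjugators appear, all built from depth-one layers ($H^{\otimes\ell}$, $H_1$, $S_2$, two $\mathrm{CZ}$ layers, $S^{\otimes\ell}$).
\item The bounded-product Kazhdan comparison, applied to the above.
\end{enumerate}

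\textbf{The center of $\Cl_\ell$ is never handled.} You work in $\Cl_\ell/\U(1)$ throughout, but the statement concerns $\Cl_\ell=\langle H,S,\mathrm{CNOT}\rangle$ and all its representations without invariant vectors. This includes representations on which the center $Z_{\ph}=\langle e^{i\pi/4}I\rangle$ acts by a nontrivial character, e.g.\ the defining one. Your Step 3 reduction only covers representations factoring through $\overline{\Cl}_\ell$. If, say, $-I$ acts as $-1$, then $\rho(X_1)$ and $\rho(Z_1)$ anticommute, so there is neither an abelian spectral decomposition nor a Pauli-invariant vector to reduce to.

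The paper closes this by adjoining $e^{\pm i\pi/4}I$ and splitting on the central character $e^{im\pi/4}$. For $m=0$ the representation factors through $\overline{\Cl}_\ell$. For $m\neq 0$ the two central generators alone give $\cK_{\av}\geq 2|e^{i\pi/4}-1|^2/451$. Alternatively you could show that $e^{i\pi/4}I$ is a bounded-length word in your lifted generators, but you never specify the lifts, so this is missing too.
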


The proof of Theorem~\ref{thm:cliffordgenerator} proceeds by first constructing a generating set for the symplectic group, which is then extended to the full Clifford group. In \cite{Kassabov2006expanders,nikolov2005productdecompositionclassicalquasisimple} the following argument for a generating set for the binary symplectic group is given, which we essentially translate into quantum circuit language:  (1) the special linear group admits a bounded-size 1D-local generating set with an $\Omega(1)$ Kazhdan constant, and (2) the symplectic group can be generated by products of the special linear group and its group isomorphisms. This means one can construct a generating set for the symplectic group by combining special linear group generators with their conjugations. Extending this result via the projective Pauli group yields a generating set for the projective Clifford group.

While the global phase is irrelevant for standard unitary designs as the projective and full Clifford groups yield identical twirling channels, the global phase becomes non-trivial when constructing controlled Clifford operations. To complete the proof for the full Clifford group, we incorporate a group extension with a discrete phase generator into the projective Clifford group. Crucially, each step in this construction preserves both the constant size of the generating set and the $\cO(1)$ 1D circuit depth of individual generators. Because group extensions and products preserve the Kazhdan constant up to constant factors for bounded generating sets, combining these steps establishes Theorem~\ref{thm:cliffordgenerator}.

\subsection{Related work}
\paragraph{1D random circuits with constant spectral gap.}
While we were preparing this manuscript, a concurrent preprint~\cite{baer2026randomunitarycircuitsconstant} demonstrated that 1D brickwork random circuits generate $\varepsilon$-relative-error approximate unitary $k$-designs in depth $\cO(nk + \log(1/\varepsilon))$ via spectral gap analysis. This result offers an alternative approach for synthesizing $\ell$-local random unitary gates, thereby providing an independent path to establishing Theorem~\ref{thm:CPZPCdesign}.


\paragraph{Approximate unitary $k$-designs in an all-to-all system with decoupled $n$ and $k$ in the depth scaling.} In this paper, we establish near-optimality in the 1D geometry. One can ask the same question in other geometries such as the all-to-all geometry. While it is not explicitly discussed in Ref.~\cite{cui2025unitarydesignsnearlyoptimal} and Ref.~\cite{Zhang2026magic}, combining their results, one can straightforwardly realize an approximate unitary design at a near-optimal circuit depth in all-to-all connectivity -- in particular achieving decoupled scaling in $n$ and $k$. In Ref.~\cite{Zhang2026magic}, the authors consider a magic augmented circuit with Clifford block size $\xi = \cO(\log(n/\varepsilon)+k^2)$ and local unitary operation size $\ell = k\log k$. In an all-to-all system, the non-local routing of Clifford operations~\cite{cleve2016nearlinearconstructionsexactunitary} enables a reduction in circuit depth from $\cO(\xi)$ to $\cO(\log \xi)$ with $\cO(\xi\log\xi\log\!\log\xi)$ ancillas. Meanwhile, the $2^{-\Omega(\ell k)}$-error $\cO(\ell)$-local random unitary block can be realized in either depth $\cO(\log(\ell k)\log(k))$ with $\ell k\cdot \widetilde{\cO}(\log \ell k)$ ancillas or depth $\cO(k\log(\ell k))$ with $\ell \cdot \widetilde{\cO}(\log \ell k)$ ancillas~\cite{cui2025unitarydesignsnearlyoptimal}. This gives the construction of an approximate $\varepsilon$-relative-error unitary $k$-design within an all-to-all system using
\begin{enumerate}
\item
depth $\cO(\log\!\log(n/\varepsilon)+\log^2k)$ and ancillas $n\cdot\widetilde{\cO}(\log\!\log(n/\varepsilon))+nk\cdot\widetilde{\cO}(\log k)$;

\item 
depth $\cO(\log\!\log(n/\varepsilon)+k\log k)$ and ancillas $n\cdot\widetilde{\cO}(\log\!\log(n/\varepsilon)+\log k)$.
\end{enumerate}
This result provides the state-of-the-art depth scaling for constructing unitary designs in the all-to-all system while keeping the qubit number $n$ and design order $k$ decoupled.

\paragraph{Depth-1 quantum (tensor product) expanders and applications} While finalizing this manuscript we became aware of the upcoming work~ \cite{anshu2026depth}, which constructs a 1D-local (on a circle) constant-size generating set for the special linear group $\SL(3s;\F_2)$, in a manner essentially identical to our construction in Lemma~ \ref{lem:symplectic_generators}, and applies this to a number of problems in quantum information theory. 
Using the Aaronson-Gottesman normal form of the Clifford group~\cite{aaronson2004improved}, they subsequently obtain a constant-size generating set of the Clifford group with a constant number of generators. Their construction has a smaller number of generators, which is $28$, but only treats the case of $n=3s$ and is local on a circle (both of these problems are easily overcome using Lemma~\ref{lem:sl-overlap} and the folded-qubit layout construction in Fig.~\ref{fig:kassabov-folded-cells}, respectively). Given that Aaronson-Gottesman is essentially a product decomposition in the sense of Nikolov~\cite{nikolov2005productdecompositionclassicalquasisimple}, we expect our constructions to be structurally similar. 

\subsection{Discussion and outlook}
In this work, we constructed nearly quadruply optimal unitary $k$-designs on 1D geometries, matching fundamental lower bounds across circuit depth, qubit count, and error up to a single $\log k$ factor in the design order. To remove this residual logarithmic overhead in the design order, one possible approach is to further reduce the locality $\ell$ from $\cO(\log k)$ to $\cO(1)$ required for magic blocks to suppress all non-permutation symmetries. This requires analyzing the Weingarten function in a singular regime since $2^{\ell} < k$. 

Relative to the lower bound on the magic resources, our construction leaves open a gap of scaling $\cO(\log n\,\log^2 k+\log^3k)$. We believe that one $\log(k)$ factor is due to residual inefficiency in our construction (similar to the excess depth overhead), and that the actual lower bound should be $\Omega(nk)$, with the lower bound derived in Ref.~\cite{Leone2026NonClifford} being lossy by a factor of $\cO(\log n\,\log k+\log^2k)$. Closing this gap in both directions thus remains to be done.

As a key technical subroutine, we constructed a system-size-independent 1D-local generating set for the Clifford group that possesses an $\Omega(1)$ Kazhdan constant. Beyond its role in design generation, this 1D-local Clifford expander may find broader utility in tasks relying on random Clifford gates, such as randomized benchmarking and quantum device verification. From a practical perspective, further compressing the size of this generating set remains an appealing open problem. Furthermore, applying the spectral gap properties of the CPZPC ensemble to analyze other random circuit architectures presents a promising avenue for future research.

Finally, our construction operates in the regime where $k = \cO(n)$, a constraint directly tied to the reversibility of the Clifford Weingarten function, which becomes singular when $k \geq n$. In this high-order case, combining the results of the gluing lemma~\cite{Schuster2025Gluing} and the recent achievement of the constant spectral gap of the 1D brickwork circuits~\cite{baer2026randomunitarycircuitsconstant} gives us a 1D depth scaling $\cO(\log(n/\varepsilon)+\xi k) = \cO(k\log(1/\varepsilon)+ k\log k)$. Decoupling the design order $k$ from the approximation error $\varepsilon$ remains an open question in the high-design-order regime.

\subsection{Structure of this work}
The remainder of this paper is organized as follows. Section~\ref{sc:pre} introduces basic notation and essential preliminaries. Section~\ref{sc:magic-augmented} presents our core results and proofs regarding magic-augmented circuits. In Section~\ref{sc:spectral-gap}, we analyze the generating sets formed by Clifford and permutation group elements, establishing 1D-local generating sets with a constant spectral gap. Finally, Section~\ref{sc:clif} provides the explicit construction of the constant-depth Clifford generators along with their formal proofs.

\subsection*{Acknowledgements}
We thank Jonas Haferkamp for the insightful discussions on the spectral gap of the CPZPC ensemble and the realization of Kassabov's generators and for sharing with us an early draft of \cite{anshu2026depth}. We thank Markus Heinrich for discussions on the number of magic gates in realizing the approximate designs. We also thank Lorenzo Grevink and Soumik Ghosh for early discussions on the magic-augmented circuit construction. JH acknowledges funding from the Dutch Research Council (NWO) through a Veni grant (grant No.VI.Veni.222.331) and the Quantum Software Consortium (NWO Zwaartekracht Grant No.024.003.037). This work was done while GL was visiting Centrum Wiskunde \& Informatica (CWI) and QuSoft, supported by the Tsinghua Scholarship for Overseas Graduate Studies. GL acknowledges funding support from the Turing AI Institute of Nanjing (Grant No.~TR-IIIS 002), the National Natural Science Foundation of China (Grant No.~12575023), the Quantum Science and Technology-National Science and Technology Major Project (Grants No.~2021ZD0300804 and No.~2021ZD0300702), and the CCF-QuantumCtek Superconducting Quantum Computing Special Cooperation Program (Grant No.~CCF-QC2025005).\\

\noindent \textbf{AI use statement}: GPT5.5 Pro derived an initial version of the estimates in lemmas~\ref{lem:permutation-block-wg-discrepancy} and~\ref{lem:sl-overlap}. These were subsequently improved and humanized by the authors.  GPT5.5 Pro and Claude Opus 4.8 were also invaluable in understanding Nikolov's proof and translating its components to quantum circuits, and some edited output made it into the final version of the manuscript.

\section{Preliminaries}\label{sc:pre}
In this section we set notation and introduce necessary preliminaries for our results.
\subsection{Unitary design, spectral gap, and Kazhdan constant}
We use $n$ to denote the total number of qubits in the system, $\xi$ to denote the size of the smallest overlapping regions of the (Clifford) unitaries and $\ell$ to denote the size of the smallest disjoint subsystem that the magic unitary operations act on. The Hilbert space of a system is denoted as $\mathcal{H}$. We use $\mathcal{E}_u$ to represent an ensemble of unitary operations. $\Phi_{\mathcal{E}_u}^{(k)}$ represents the $k$-fold twirling channel for the unitary ensemble, defined as 
\begin{equation}
\Phi_{\mathcal{E}_u}^{(k)}(\cdot) := \mathbb{E}_{U\sim\mathcal{E}_u} U^{\otimes k} (\cdot) U^{\dagger\otimes k}.
\end{equation}
$\Phi_H^{(k)}$ denotes the corresponding $k$-fold channel for the exact Haar random ensemble.

An ensemble $\mathcal{E}_u$ forms an approximate unitary $k$-design if its  $k$-fold twirling channel is close to that of the Haar random ensemble. Depending on the error type, we can define the approximate unitary $k$-design in either an additive-error version or a relative-error version.

\begin{definition}[Additive-error unitary design]
A unitary ensemble $\cE_u$ is an additive-error approximate unitary $k$-design with error $\varepsilon$ if
\begin{equation}
    \left\|\Phi_{\cE_u}^{(k)}-\Phi_H^{(k)}\right\|_\diamond
    \leq
    \varepsilon.
\end{equation}
\end{definition}

\begin{definition}[Relative-error unitary design]
A unitary ensemble $\cE_u$ is a relative-error approximate unitary $k$-design with error $\varepsilon$ if
\begin{equation}
    (1-\varepsilon)\Phi_H^{(k)}
    \preceq
    \Phi_{\cE_u}^{(k)}
    \preceq
    (1+\varepsilon)\Phi_H^{(k)}.
\end{equation}
Here $\Phi\preceq\Psi$ means that $\Psi-\Phi$ is a completely positive (CP) map.
\end{definition}

The relative-error notion is stronger than additive error. Any approximate unitary $k$-design with relative error $\varepsilon$ has additive error $2\varepsilon$, but an approximate unitary $k$-design with additive error $\varepsilon$ might have relative error $2^{2nk}\varepsilon$. The circuit depth lower bound for generating unitary designs was proved for the additive error version~\cite{cui2025unitarydesignsnearlyoptimal}, and hence also applies to the relative error case. In this work, we construct unitary designs in the stronger relative error case.

A sufficient condition for $\mathcal{E}_u$ to form an $n$-qubit $\varepsilon$-relative-error approximate unitary $k$-design is to bound the operator norm of the Choi-state difference~\cite{Schuster2025Gluing},
\begin{equation}\label{eq:epr_norm}
\frac{2^{2nk}}{k!}(1+\frac{k^2}{2^{n+1}}) \Vert[ (\Phi^{(k)}_{\mathcal{E}_u}-\Phi^{(k)}_{H})\otimes \id] (P^{EPR})\Vert_{\infty}\leq \varepsilon,
\end{equation}
where
\begin{equation}
P^{EPR} = (\frac{1}{2}\sum_{i,j\in \{0, 1\}}\ketbra{ii}{jj})^{\otimes nk}.
\end{equation}

One can also view the unitary ensemble $\mathcal{E}_u$ as a probability distribution $\nu$ on the $n$-qubit unitary operation group $\U(2^n)$. The corresponding twirling channel is also denoted as $\Phi_\nu^{(k)}$. Correspondingly, we use $\mu_H$ to denote the Haar measure over $\U(2^n)$. Given a finite operation set $S$, we use $\mu(S)$ to denote the uniform distribution over set $S$.

The twirling channel $\Phi_\nu^{(k)}$ is equivalent, after vectorization, to the moment operator
\begin{equation}\label{eq:prelim-moment-operator}
M_k(\nu):=\E_{U\sim\nu}\left[U^{\otimes k}\otimes \overline U^{\otimes k}\right].
\end{equation}
If $\mu_H$ is Haar measure, we write
\begin{equation}
P_H:=M_k(\mu_H).
\end{equation}
The operator $P_H$ is an orthogonal projection onto the Haar-invariant subspace of the moment representation, which can be characterized by Schur-Weyl duality introduced later.

Below, we give an important definition of spectral gap, characterizing the distance between the distribution $\nu$ and Haar measure $\mu_H$.

\begin{definition}[Essential norm and moment spectral gap]
The $k$-moment essential norm of $\nu$ is
\begin{equation}\label{eq:prelim-essential-norm}
    g(\nu,k)
    :=
    \left\|M_k(\nu)-P_H\right\|_\infty.
\end{equation}
The corresponding $k$-moment spectral gap is
\begin{equation}
    \Delta(\nu,k)
    :=
    1-g(\nu,k).
\end{equation}
\end{definition}

If $\nu$ is supported on a finite group $G$ and $\rho$ is a unitary representation of $G$, the analogous representation-theoretic essential norm is
\begin{equation}
    g(\nu,\rho,G)
    :=
    \left\|\E_{g\sim\nu}\rho(g)-P_{\rho^G}\right\|_\infty,
\end{equation}
where $P_{\rho^G} = \E_{g\sim\mu(G)}\rho(g)$ is the orthogonal projector onto the trivial subspace within $\rho(G)$. The corresponding representation-theoretic spectral gap can be defined accordingly.

The study of spectral gap is important to proving the necessary circuit depth to form relative-error approximate unitary designs, as introduced below.

\begin{lemma}[Gap amplification and designs, Lemma 2.6 in Ref.~\cite{chen2024incompressibilityspectralgapsrandom}]\label{lemma:gap-to-design}
Suppose $\nu$ is a distribution on $n$-qubit unitaries and
\begin{equation}
g(\nu,k)\leq 1-\Delta.
\end{equation}
Then $\nu^{*L} = \nu*\nu*\cdots *\nu$ is an $\varepsilon$-relative-error approximate unitary $k$-design once
\begin{equation}\label{eq:prelim-gap-to-design}
L=\cO\left(\frac{1}{\Delta}\left(nk+\log\frac{1}{\varepsilon}\right)\right).
\end{equation}
In particular, if $\Delta=\Omega(1)$, then
\begin{equation}
L=\cO\left(nk+\log\frac{1}{\varepsilon}\right).
\end{equation}
\end{lemma}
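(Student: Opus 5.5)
The statement to prove is Lemma~\ref{lemma:gap-to-design}: a $k$-moment gap $\Delta$ for $\nu$ implies that $\nu^{*L}$ is an $\varepsilon$-relative-error design. The plan has three steps. First, convert the gap into a geometric bound on the moment operator of the convolution. Second, use the Choi-state criterion~\eqref{eq:epr_norm} to pass from operator norm to relative error. Third, pay the dimensional factor $2^{2nk}$ in that criterion with the exponential decay, which is where the $nk$ term in $L$ comes from.

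\textbf{Step 1: operator-norm decay.} Convolution corresponds to multiplying moment operators, so $M_k(\nu^{*L}) = M_k(\nu)^L$. Haar invariance gives
\begin{equation}
M_k(\nu)P_H = P_H M_k(\nu) = P_H .
\end{equation}
Hence $M_k(\nu)-P_H = M_k(\nu)(\id-P_H)$. This operator commutes with $P_H$, and its square-free cross terms vanish. Induction on $L$ then gives
\begin{equation}
M_k(\nu)^L - P_H = \bigl(M_k(\nu)-P_H\bigr)^L ,
\end{equation}
so $\|M_k(\nu^{*L})-P_H\|_\infty \le (1-\Delta)^L \le e^{-\Delta L}$.

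\textbf{Step 2: transfer to the Choi state.} The vectorized twirl acts on $\mathcal{H}^{\otimes k}\otimes\overline{\mathcal{H}}^{\otimes k}$. The Choi state $[(\Phi-\Phi_H)\otimes\id](P^{EPR})$ is a reshuffling of $M_k(\nu^{*L})-P_H$, up to normalization. Reshuffling can enlarge the operator norm by at most the dimension factor. The crude chain is
\begin{equation}
\|\cdot\|_\infty \le \|\cdot\|_2 \le 2^{nk}\,\|M_k-P_H\|_\infty ,
\end{equation}
possibly with a further factor $2^{nk}$ from the normalization of $P^{EPR}$. Altogether the left-hand side of~\eqref{eq:epr_norm} is at most $2^{cnk}e^{-\Delta L}$ for an absolute constant $c$, say $c\le 4$. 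The prefactor $(1+k^2/2^{n+1})/k!$ is $\cO(1)$ in the relevant regime, or in any case only helps.

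\textbf{Step 3: choice of $L$.} It suffices to take
\begin{equation}
L \ge \frac{1}{\Delta}\Bigl(c\,nk\ln 2 + \ln\frac{1}{\varepsilon}\Bigr),
\end{equation}
which is exactly~\eqref{eq:prelim-gap-to-design}. The special case $\Delta=\Omega(1)$ is then immediate.

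The main obstacle is Step 2: bounding cleanly the norm of the partial-transpose/reshuffle relating $M_k$ to the Choi operator. The first thing I would try is the Hilbert–Schmidt detour above, since $\|\cdot\|_2$ is invariant under reshuffling and the loss is at most a dimension factor. That loss is absorbed into the $nk$ term, so it costs nothing asymptotically.
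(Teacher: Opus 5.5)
Your argument is correct and is essentially the standard proof behind the cited Lemma~2.6 of Chen et al., which the paper quotes without proof. Submultiplicativity gives $g(\nu^{*L},k)\le(1-\Delta)^L$. The conversion from operator-norm error to relative error (in the style of Brand\~ao--Harrow--Horodecki) then costs a dimensional factor of $2^{2nk}$, which you recover via the Hilbert--Schmidt/reshuffling detour and Eq.~\eqref{eq:epr_norm}.

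One correction: the normalization of $P^{EPR}$ contributes a factor $2^{-nk}$, not $2^{nk}$. This cancels the $2^{nk}$ from the Hilbert--Schmidt step, so $\|[(\Phi^{(k)}_{\nu^{*L}}-\Phi^{(k)}_H)\otimes\id](P^{EPR})\|_\infty\le\|M_k(\nu)^L-P_H\|_\infty$, and $c=2$ already suffices.
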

Lemma~\ref{lemma:gap-to-design} is the standard spectral-gap-to-relative-design estimate. Bounding the spectral gap offers a way to prove the depth upper bound to form approximate unitary designs.

To evaluate the spectral gap of a generating set for certain groups, we resort to the tool of the Kazhdan constant. The Kazhdan constant of a large group can be bounded by the Kazhdan constants of subgroups decomposed from the large group. We provide the definition of Kazhdan constants and the related results below.

\begin{definition}[Kazhdan constants]\label{def:kazhdanconstant}
Consider a group generated by a set $S$. The Kazhdan constant for $G$ with respect to $S$, denoted by $\cK(G;S)$ is the largest $\varepsilon\geq 0$ such that for every unitary representation $\rho: G\rightarrow \U(\mathcal{H})$ that contains no trivial subrepresentation, there exists $g\in S$ satisfying that for every unit vector $\ket{\psi}\in \mathcal{H}$,
\begin{equation}
\Vert\rho(g)\ket{\psi}-\ket{\psi}\Vert\geq \varepsilon.
\end{equation}
\end{definition}

There is another definition called the average Kazhdan constant, which is closely related to the above definition.

\begin{definition}[Average Kazhdan constant]
Let $G$ be a finite group and let $S$ be a finite generating set. The average Kazhdan constant is
\begin{equation}
\cK_{\av}(G;S)=
\inf_{\rho\in \widehat G_0}\inf_{\Vert v\Vert=1}
\frac{1}{|S|}\sum_{s\in S}\Vert\rho(s)v-v\Vert^2,
\end{equation}
where $\widehat G_0$ denotes the set of all non-trivial irreducible representations.
\end{definition}

The Kazhdan constant and the average Kazhdan constant are mutually bounded with each other:
\begin{equation}
\cK(G;S)^2\geq \cK_{\av}(G;S)\geq \frac{\cK(G;S)^2}{|S|}.
\end{equation}

Below, we list two useful lemmas related to Kazhdan constants proved in Ref.~\cite{chen2024incompressibilityspectralgapsrandom}. One establishes the Kazhdan constant of a product group, and the other relates the Kazhdan constant to the essential norm and the spectral gap.

\begin{lemma}[Bounded-product Kazhdan comparison, Corollary 2.16 in Ref.~\cite{chen2024incompressibilityspectralgapsrandom}]
\label{lem:bounded-product}
Let $H$ be generated by $S$, and let $\phi_i:H\to G$ be group homomorphisms. Suppose every element of $G$ is a product of at most $M$ elements from $\bigcup_i\phi_i(H)$. Then
\begin{equation}
    \cK\left(G;\bigcup_i\phi_i(S)\right)
    \geq \frac{1}{M\sqrt{2}}\cK(H;S).
\end{equation}
\end{lemma}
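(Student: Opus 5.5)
We need to prove Lemma (bounded-product Kazhdan comparison): if every element of $G$ is a product of at most $M$ elements of $\bigcup_i \phi_i(H)$, then $\cK(G;\bigcup_i\phi_i(S))\ge \cK(H;S)/(M\sqrt2)$.

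The plan is a contrapositive argument. Fix a unitary representation $\rho$ of $G$ with no trivial subrepresentation and a unit vector $v$. Set $\delta := \max_i\max_{s\in S}\Vert\rho(\phi_i(s))v - v\Vert$; we want to show $\delta \ge \cK(H;S)/(M\sqrt2)$. The idea is to transport near-invariance under the generators to near-invariance under all of $G$, which is impossible because $\rho$ has no invariant vectors.

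First, a single subgroup. For each $i$, the representation $\rho\circ\phi_i$ of $H$ decomposes as $\mathcal{H} = \mathcal{H}_i^{H}\oplus (\mathcal{H}_i^{H})^\perp$, where the first summand is the $\phi_i(H)$-invariant subspace. Let $P_i$ be the orthogonal projector onto it. The complement carries a representation of $H$ with no trivial subrepresentation, so the definition of $\cK(H;S)$ applied to the normalized component $(1-P_i)v$ gives some $s\in S$ with
\begin{equation}
\Vert\rho(\phi_i(s))v - v\Vert = \Vert\rho(\phi_i(s))(1-P_i)v-(1-P_i)v\Vert \ge \cK(H;S)\,\Vert(1-P_i)v\Vert .
\end{equation}
Hence $\Vert v-P_iv\Vert\le \delta/\cK(H;S)$. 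Since $P_iv$ is fixed by every element of $\phi_i(H)$, this yields $\Vert\rho(h)v-v\Vert\le 2\Vert v-P_iv\Vert\le 2\delta/\cK(H;S)$ for all $h\in\phi_i(H)$.

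Next, products. For $g=h_1\cdots h_m$ with $m\le M$ and each $h_j$ in some $\phi_{i_j}(H)$, unitarity and a telescoping triangle inequality give $\Vert\rho(g)v-v\Vert\le\sum_j\Vert\rho(h_j)v-v\Vert\le 2M\delta/\cK(H;S)$, uniformly over $g\in G$.

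Finally, the contradiction with the absence of invariant vectors. Averaging over $G$ (finite here; otherwise use the invariant mean or a compactness argument) gives the vector $\E_g\rho(g)v$, which is $G$-invariant and therefore zero. Then
\begin{equation}
1=\Vert v\Vert=\Vert v-\E_g\rho(g)v\Vert\le \sup_g\Vert v-\rho(g)v\Vert .
\end{equation}
A slightly sharper version uses $\E_g\Vert\rho(g)v-v\Vert^2 = 2-2\,\mathrm{Re}\langle v,\E_g\rho(g)v\rangle = 2$, so some $g$ has $\Vert\rho(g)v-v\Vert\ge\sqrt2$. Combining with the bound from the products step gives $2M\delta/\cK(H;S)\ge\sqrt2$, that is, $\delta\ge \cK(H;S)/(M\sqrt2)$, which is the claimed constant.

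The main obstacle is getting exactly the $\sqrt2$ rather than a lossier constant. This requires the factor-$2$ bound in the single-subgroup step: one bounds $\Vert\rho(h)v-v\Vert\le\Vert\rho(h)(v-P_iv)\Vert+\Vert v-P_iv\Vert$, and pairs this with the $\sqrt2$ from the averaging identity in the last step. If $G$ is infinite, that identity needs a substitute; for the paper's finite groups it holds directly.
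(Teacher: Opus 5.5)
Your proof is correct. It is the standard argument behind the cited Corollary~2.16 (the paper gives no proof of its own, only the citation): you project onto the $\phi_i(H)$-invariant subspace, telescope along the bounded product, and use $\E_g\Vert\rho(g)v-v\Vert^2=2$, which recovers exactly the constant $1/(M\sqrt{2})$. The one imprecision is your aside on infinite $G$, where an invariant mean need not exist. The general substitute is the minimal-norm point of the closed convex hull of $\rho(G)v$: it is $G$-fixed and nonzero whenever $c:=\sup_g\Vert\rho(g)v-v\Vert<\sqrt{2}$, because $\mathrm{Re}\langle v,x\rangle\geq 1-c^2/2>0$ on that hull. Hence the bound holds for arbitrary $G$.
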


Lemma~\ref{lem:bounded-product} manifests that the Kazhdan constants of a product group can be lower bounded by the Kazhdan constant of each product element.

\begin{lemma}[Spectral gap from Kazhdan constant, Lemma 2.17 in Ref.~\cite{chen2024incompressibilityspectralgapsrandom}]\label{lem:kazhdan-constant-to-spectral-gap}
Suppose $G$ is a compact group generated by a finite subset $S$ where $S$ is symmetric or closed under taking inverses. Let $\mu(S)$ be the uniform distribution over $S$. Then for any finite-dimensional unitary representation $\rho$ of $G$, the essential norm
\begin{equation}
g(\mu(S), \rho, G)\leq 1 - \frac{\cK(G;S)^2}{2\abs{S}}.
\end{equation}
\end{lemma}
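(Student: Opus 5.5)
The plan is to diagonalize the averaging operator and bound both ends of its spectrum on the non-trivial part separately. Let $A:=\E_{s\sim\mu(S)}\rho(s)=\frac{1}{|S|}\sum_{s\in S}\rho(s)$. Since $S$ is closed under inverses and $\rho$ is unitary, $\rho(s^{-1})=\rho(s)^\dagger$, so $A$ is Hermitian with spectrum in $[-1,1]$.

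\textbf{Reduction to the non-trivial part.} Because $\rho$ is finite-dimensional and $G$ is compact, $\rho$ decomposes into an orthogonal sum $\mathcal{H}=\mathcal{H}^G\oplus\mathcal{H}_0$. Here $\mathcal{H}^G$ is the trivial isotypic part, which is the range of $P_{\rho^G}$. The complement $\mathcal{H}_0$ is an invariant subspace containing no trivial subrepresentation. $A$ acts as the identity on $\mathcal{H}^G$ and preserves $\mathcal{H}_0$, so $A-P_{\rho^G}=A|_{\mathcal{H}_0}\oplus 0$. Hence
\[
g(\mu(S),\rho,G)=\max_{v\in\mathcal{H}_0,\ \Vert v\Vert=1}\bigl|\langle v,Av\rangle\bigr|.
\]
The Kazhdan property of Definition~\ref{def:kazhdanconstant} applies to the restriction $\rho|_{\mathcal{H}_0}$.

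\textbf{Upper end of the spectrum.} For a unit vector $v\in\mathcal{H}_0$, the polarization identity gives
\[
\langle v,Av\rangle=1-\frac{1}{2|S|}\sum_{s\in S}\Vert\rho(s)v-v\Vert^2 .
\]
(We use that $\Re\langle v,\rho(s)v\rangle=1-\tfrac12\Vert\rho(s)v-v\Vert^2$, and that the imaginary parts cancel by symmetry of $S$.) By the definition of $\cK(G;S)$, at least one $s\in S$ has $\Vert\rho(s)v-v\Vert\ge\cK(G;S)$. Dropping all other nonnegative terms yields $\langle v,Av\rangle\le 1-\cK(G;S)^2/(2|S|)$.

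\textbf{Lower end of the spectrum.} This is the main obstacle. Analogously,
\[
\langle v,Av\rangle=-1+\frac{1}{2|S|}\sum_{s\in S}\Vert\rho(s)v+v\Vert^2 ,
\]
so we need a single $s\in S$ with $\Vert\rho(s)v+v\Vert\ge\cK(G;S)$. This cannot follow from the Kazhdan property alone. For example, take $G=\mathbb{Z}_2$, $S=\{g\}$, and $\rho$ the sign representation. Then $A=-1$, so the essential norm is $1$, while $\cK=2$ makes the claimed bound equal to $-1$. The step therefore needs the standard convention, used implicitly in Ref.~\cite{chen2024incompressibilityspectralgapsrandom}, that the generating set contains the identity (equivalently, that one works with the lazy walk built into $\mu(S)$). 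Under this convention, take $s=e$: then $\Vert\rho(e)v+v\Vert=2$.

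It remains to check that $2\ge\cK(G;S)$. This holds because every unitary satisfies $\Vert\rho(g)v-v\Vert\le 2$, so the Kazhdan constant never exceeds $2$. This gives $\langle v,Av\rangle\ge -1+\cK(G;S)^2/(2|S|)$.

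Combining the two ends gives $|\langle v,Av\rangle|\le 1-\cK(G;S)^2/(2|S|)$ on $\mathcal{H}_0$, which is the claimed bound. In writing this up, I would state the identity-in-$S$ convention explicitly, since the $\mathbb{Z}_2$ example shows the two-sided bound fails without it. I would also check that the generating sets used later in the paper (Kassabov's permutation generators and $S_{\Cl,\ell}$) can be taken to contain the identity. Adding $e$ changes $|S|$ by at most one, and it cannot decrease $\cK(G;S)$, since $\cK$ is a maximum over $s\in S$ and enlarging $S$ can only increase it.
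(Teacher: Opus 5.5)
The paper does not prove this lemma; it cites it from Ref.~\cite{chen2024incompressibilityspectralgapsrandom} without proof, so the comparison is with the standard argument. Your upper-end step is exactly that argument. On $\mathcal{H}_0$ you use $\langle v,Av\rangle = 1-\frac{1}{2|S|}\sum_{s\in S}\Vert\rho(s)v-v\Vert^2$, and then keep one generator that moves $v$ by at least $\cK(G;S)$. It is correct. It uses only the weaker of the two quantifier orders one could read into Definition~\ref{def:kazhdanconstant}, so it also holds under the paper's literal wording.

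Your objection to the lower end is also correct, and it shows the printed statement is false, not merely unproven. The essential norm is two-sided, while a Kazhdan constant only controls the top of the spectrum of $A|_{\mathcal{H}_0}$. Your $\mathbb{Z}_2$ example meets every hypothesis. Here is an example where the right-hand side is not negative. Take $G=\mathbb{Z}_4=\langle g\rangle$ and $S=\{g,g^{-1}\}$. Every non-trivial character sends $g$ to $i$, $-1$ or $-i$, each at distance at least $\sqrt{2}$ from $1$, with equality for $\pm i$. Hence $\cK(G;S)=\sqrt{2}$ and the claimed bound is $1/2$. Yet for $\rho(g)=-1$ one gets $A=-1$, so the essential norm is $1$.

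Your repair is sound: with $e\in S$ you get $\langle v,Av\rangle\geq -1+2/|S|\geq -1+\cK(G;S)^2/(2|S|)$, because $\cK\leq 2$. In the write-up, state $e\in S$ as an added hypothesis rather than attributing it to the cited reference as an implicit convention. Nothing in this paper's statement of the lemma includes it.

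This matters for how the paper uses the lemma. When $S_{\Cl,\ell}$ acts through the moment representation $U^{\otimes k}\otimes\overline{U}^{\otimes k}$, the two central generators $e^{\pm i\pi/4}I$ act as the identity. Your lower-end argument therefore applies there without changing the generating set. In two other places the cited lemma does not by itself give the two-sided bound: general representations, as claimed in Lemma~\ref{lem:1d-clifford}, and the permutation generators. There, adding $e$ to the generating set is the simplest fix and costs nothing in depth.
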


Lemma~\ref{lem:kazhdan-constant-to-spectral-gap} shows that as long as the Kazhdan constant is constant and the generating set has a bounded size, the spectral gap is also a constant. It is important to note that the generating set must be symmetric to apply Lemma~\ref{lem:kazhdan-constant-to-spectral-gap}.

In this work, we also evaluate how Kazhdan constants behave under group extensions. As preliminary context, we employ the standard homological algebra notation of a short exact sequence to describe group decompositions and extensions. A sequence of groups and group homomorphisms
\begin{equation}
1 \longrightarrow N \longrightarrow G \longrightarrow H \longrightarrow 1
\end{equation}
represents that $N$ is a normal subgroup of $G$ ($N \triangleleft G$), and $G/N\cong H$. In the context of group theory, such a sequence denotes that $G$ is an extension of the group $H$ by the group $N$. For example, when $N$ is an abelian group, it is referred to as an abelian extension. This algebraic framework enables us to systematically construct a generating set for $G$ by bootstrapping Kazhdan constants from its quotient $H$ and normal subgroup $N$. Specifically, we invoke a key result established by Hadad~\cite{hadad2010kazhdanconstantsgroupextensions}, which bound Kazhdan constants under group extensions using the notion of average Kazhdan constants.

\begin{lemma}[Hadad abelian-extension theorem, Theorem 1.6 in Ref.~\cite{hadad2010kazhdanconstantsgroupextensions}]\label{lem:hadad-abelian}
Let
\begin{equation}
    1\longrightarrow A\longrightarrow \Gamma\longrightarrow H\longrightarrow 1
\end{equation}
be an extension with $A$ finite abelian. Let $S\subseteq H$, let $\widetilde S$ be any lift of $S$ to $\Gamma$, and let $B\subseteq A$. Let
\begin{equation}
    \widetilde B=\{a^h:a\in B,\ h\in H\}
\end{equation}
be the union of the $H$-orbits of $B$. If $H$ has average Kazhdan constant $\cK_{\av,H}$ with respect to $S$, and $A$ has average Kazhdan constant $\cK_{\av,A}$ with respect to $\widetilde B$, then $\Gamma$ has average Kazhdan constant at least
\begin{equation}
    \frac{\cK_{\av,H}\cK_{\av,A}}{512\left(1+|S|/|B|+|B|/|S|\right)}
\end{equation}
with respect to $\widetilde S\cup B$.
\end{lemma}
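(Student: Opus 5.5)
The last statement is Lemma~\ref{lem:hadad-abelian}, Hadad's abelian-extension theorem, which the paper imports from Ref.~\cite{hadad2010kazhdanconstantsgroupextensions}. The plan below reconstructs a proof in the spirit of Hadad's argument.

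\emph{Setup.} Work with average Kazhdan constants throughout. Fix a nontrivial irreducible unitary representation $\rho$ of $\Gamma$ on $\mathcal{H}$ and a unit vector $v$. Let $\delta$ denote the average of $\Vert\rho(s)v-v\Vert^2$ over $s\in\widetilde S\cup B$. The goal is to show $\delta\gtrsim \cK_{\av,H}\cK_{\av,A}/(1+|S|/|B|+|B|/|S|)$.

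\emph{Decomposition over $A$.} Since $A$ is finite abelian, restricting $\rho$ to $A$ splits $\mathcal{H}$ into character isotypic components $\mathcal{H}=\bigoplus_{\chi\in\widehat A}\mathcal{H}_\chi$. Because $A$ is normal, $\Gamma$ permutes these components: $\rho(\gamma)\mathcal{H}_\chi=\mathcal{H}_{\chi^{\gamma}}$, where the action factors through $H$. Split
\[
v=v_0+v_1,\qquad v_0\in\mathcal{H}_{\mathbf 1}=\mathcal{H}^A .
\]

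\emph{Case 1: most of the mass is in the $A$-invariant part.} Suppose $\Vert v_0\Vert^2\ge 1/2$.
\begin{itemize}
\item[(i)] $\mathcal{H}^A$ is $\Gamma$-invariant, so it carries a representation of $H=\Gamma/A$.
\item[(ii)] Since $\rho$ is irreducible and nontrivial, either $\mathcal{H}^A=0$ or $\mathcal{H}^A=\mathcal{H}$ with no $H$-invariant vectors. In both cases, if $v_0\neq 0$, the average Kazhdan constant of $H$ gives
\[
\frac{1}{|S|}\sum_{s}\Vert\rho(\tilde s)v_0-v_0\Vert^2\ge \cK_{\av,H}\Vert v_0\Vert^2 .
\]
\item[(iii)] Transfer this to $v$ by controlling $v_1$. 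Each $\Vert\rho(\tilde s)v-v\Vert^2$ is at least the square of its projection onto $\mathcal{H}^A$, and the projection commutes with $\rho(\tilde s)$. Hence the $\widetilde S$-part of $\delta$ alone is at least about $\cK_{\av,H}/2$, up to the weighting factor $|S|/(|S|+|B|)$.
\end{itemize}

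\emph{Case 2: most of the mass is on nontrivial characters.} Suppose $\Vert v_1\Vert^2\ge 1/2$. The $B$-part of $\delta$ is
\[
\frac{1}{|B|}\sum_{b\in B}\sum_{\chi\neq\mathbf 1}|\chi(b)-1|^2\Vert v_\chi\Vert^2 .
\]
The hypothesis only controls the average of $|\chi(a)-1|^2$ over the orbit set $\widetilde B$, not over $B$. This is the \textbf{main obstacle}: passing from $B$ to its $H$-orbits.

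\emph{Plan for the obstacle.} Hadad's idea is to use $\widetilde S$ to move between conjugates. For $\tilde s\in\widetilde S$, $b\in B$ and $h=\tilde s$ acting by conjugation,
\[
\Vert\rho(b^{h})v-v\Vert\le \Vert\rho(b)\rho(\tilde s)^{-1}v-\rho(\tilde s)^{-1}v\Vert\le \Vert\rho(b)v-v\Vert+2\Vert\rho(\tilde s)v-v\Vert .
\]
So displacement under a conjugate $b^{h}$, with $h$ a word in $S$, is bounded by displacement under $b$ plus twice the word length times the $S$-displacements. Words can be long, so a plain triangle inequality is not enough.
\begin{itemize}
\item[(a)] Instead, use a spectral argument on the measure $\eta=\sum_\chi\Vert v_\chi\Vert^2\delta_\chi$ on $\widehat A$.
\item[(b)] Small $S$-displacement means $\eta$ is nearly $H$-invariant. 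Quantitatively, this uses $\cK_{\av,H}$ applied to the permutation representation of $H$ on $\ell^2(\widehat A\setminus\{\mathbf 1\})$ via $\chi\mapsto\sqrt{\eta(\chi)}$, a positive-vector trick.
\item[(c)] Near-invariance lets the $B$-average of $|\chi(b)-1|^2$ under $\eta$ be compared with the $\widetilde B$-average, and that is bounded below by $\cK_{\av,A}\Vert v_1\Vert^2$.
\item[(d)] The losses in (b) and (c) are of order $\cK_{\av,H}$ times constants, and the reweighting between the $S$- and $B$-averages produces the factor $1+|S|/|B|+|B|/|S|$.
\end{itemize}

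\emph{Conclusion.} Taking the minimum of the bounds from the two cases, and absorbing all the numerical constants from the triangle inequalities and the square-root trick, should give the stated $1/512$. Step (b)–(c), the near-invariance transfer between $B$ and $\widetilde B$, is where I expect the real difficulty to lie.
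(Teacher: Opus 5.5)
The paper does not prove this lemma; it quotes it from Hadad. Your sketch therefore has to stand on its own. Its strategy is sound: decompose $\mathcal H$ into $A$-isotypic pieces, pass to the nonnegative function $f(\chi)=\Vert v_\chi\Vert$ on $\widehat A$, and use the expansion of $H$ in its permutation representation $\lambda$ on $\ell^2(\widehat A)$.

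The steps you leave open, (b) and (d), close quickly. Put $\epsilon_S=\frac{1}{|S|}\sum_{s\in S}\Vert\rho(\tilde s)v-v\Vert^2$ and $\epsilon_B=\frac{1}{|B|}\sum_{b\in B}\Vert\rho(b)v-v\Vert^2$. Let $g$ be the orbit average of $f$, i.e.\ its projection onto $H$-invariant functions. Then $g(\mathbf 1)=f(\mathbf 1)=0$ in your Case 2, and $\Vert g\Vert^2=1-\Vert f-g\Vert^2$. Since $\Vert\lambda(s)f-f\Vert\le\Vert\rho(\tilde s)v-v\Vert$ and $f-g$ is orthogonal to the invariant functions, $\cK_{\av,H}\Vert f-g\Vert^2\le\epsilon_S$. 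Since $|\chi(b)-1|\le 2$,
\[
\sqrt{\epsilon_B}\ \ge\ \Big(\frac{1}{|B|}\sum_{b\in B}\sum_{\chi}|\chi(b)-1|^2g(\chi)^2\Big)^{1/2}-2\Vert f-g\Vert .
\]
Suppose the first term on the right is at least $\sqrt{\cK_{\av,A}}\,\Vert g\Vert$. Then either $\Vert f-g\Vert^2\ge\cK_{\av,A}/64$, whence $\epsilon_S\ge\cK_{\av,H}\cK_{\av,A}/64$, or $\epsilon_B\ge\cK_{\av,A}/2$. Weighting by $|S|/(|S|+|B|)$ and $|B|/(|S|+|B|)$ and using $\cK_{\av,H},\cK_{\av,A}\le 4$ gives the claimed bound, with $1/64$ in place of $1/512$. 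Your Case 1 needs no $1/2$ threshold, because irreducibility forces $\mathcal H^A\in\{0,\mathcal H\}$.

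\textbf{The genuine gap is the ``suppose'' above, i.e.\ your step (c).} $H$-invariance of $g$ turns the $B$-sum into
\[
\sum_\chi g(\chi)^2\,\frac{1}{|B||H|}\sum_{b\in B,\,h\in H}|\chi(b^h)-1|^2 .
\]
This is an average over pairs $(b,h)$, which weights each orbit by how many elements of $B$ it contains. It is not the uniform average over the set $\widetilde B=\bigcup_{b}b^H$ that the statement literally uses. No near-invariance argument can bridge the two, because under the set reading the lemma is false. Here is a counterexample.
\begin{itemize}
\item Take $A=\F_2^{w+1}$ and $W=\{a: y\cdot a=0\}$ for some $y\neq 0$.
\item Let $H$ be the stabilizer of $W$ in $\mathrm{GL}(w+1,\F_2)$, and $\Gamma=A\rtimes H$ with the lift $\widetilde S=\{(0,h)\}$ of $S=H$. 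Then $\cK_{\av,H}=2$.
\item Take $B=(W\setminus\{0\})\cup\{b_1\}$ with $b_1\notin W$. The set $\widetilde B$ is $A\setminus\{0\}$, so $\cK_{\av,A}>2$, and the claimed bound is at least $2^w/(384|H|)$.
\item But $\psi(a,h)=(-1)^{y\cdot a}$ is a nontrivial character of $\Gamma$, since $y\cdot ha=y\cdot a$ for $h\in H$. It is trivial on $\widetilde S$ and on $W$, so $\cK_{\av,\Gamma}\le 4/(|H|+2^w)$.
\item This contradicts the claimed bound once $w\ge 11$.
\end{itemize}
You must therefore read $\widetilde B$ as the multiset $\{b^h: b\in B,\ h\in H\}$, i.e.\ define $\cK_{\av,A}$ through the $(b,h)$-average. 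With that reading, (c) is exactly the identity displayed above. In the paper's only application, $B=\{X_1\}$ is a single orbit, so the two readings coincide and nothing downstream is affected.
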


The important point of Lemma~\ref{lem:hadad-abelian} is that the generating set of $\Gamma$ only contains the small set $B$, while the expansion of the abelian kernel is checked with respect to the full orbit $\widetilde B$.

Together, the lemmas above establish the basis of our proof for the Clifford group having a constant Kazhdan constant with respect to a constant-size, 1D-local generating set.

\subsection{The Clifford commutant theory}
In this work, we use magic-augmented circuits to construct unitary designs, which involves $k$-fold twirling channel for the Clifford group and the Clifford commutant theory. We introduce the basic and necessary preliminaries of the Clifford commutant theory below. For more information and proofs, one can refer to Refs.~\cite{Gross2021Clifford,bittel2025completetheorycliffordcommutant}. Before we discuss the Clifford case, we first introduce the Haar Weingarten calculus as a first step.

Based on the Haar Weingarten calculus, the $k$-fold twirling channel for the Haar random ensemble $\Phi_H^{(k)}$ can be explicitly represented with the symmetric group $S_k$.  For $\pi\in S_k$, define the permutation operator $T_\pi$ on $\mathcal{H}^{\otimes k}$ by
\begin{equation}
T_\pi\left(\ket{v_1}\otimes\cdots\otimes\ket{v_k}\right):=\ket{v_{\pi^{-1}(1)}}\otimes\cdots\otimes\ket{v_{\pi^{-1}(k)}}.
\end{equation}
Schur-Weyl duality ensures that the commutant of the $U^{\otimes k}$ action is spanned by a linearly independent set $\{T_\pi:\pi\in S_k\}$ when the dimension $d\geq k$.
\begin{equation}
\mathrm{Com}(\U(d)^{\otimes k}) = \{X, XU^{\otimes k} = U^{\otimes k}X, \forall U\in \U(d)\} = \Span\{T_\pi: \pi\in S_k\}.
\end{equation}
This allows us to expand $\Phi_H^{(k)}$ in the basis of $\{T_\pi: \pi\in S_k\}$, where the coefficient is determined by the Weingarten matrix.

The Haar Gram matrix is
\begin{equation}\label{eq:prelim-haar-gram}
G_H(T_\pi,T_\sigma):=\Tr\left(T_\pi^\dagger T_\sigma\right)=d^{c(\pi^{-1}\sigma)},
\end{equation}
where $d$ is the Hilbert-space dimension under discussion and $c(\tau)$ is the number of cycles of $\tau\in S_k$. The Haar Weingarten matrix is the inverse Gram matrix:
\begin{equation}
\Wg(T_\pi,T_\sigma):=\left(G_H^{-1}\right)(T_\pi,T_\sigma).
\end{equation}
For simplicity, one also writes $G_H(\pi,\sigma) = G_H(T_\pi,T_\sigma)$ and $\Wg(\pi,\sigma) = \Wg(T_\pi,T_\sigma)$. It is easy to verify that $\Wg(\pi,\sigma)$ depends only on $\pi^{-1}\sigma$, so one often writes it as $\Wg(\pi^{-1}\sigma)$.

By Schur-Weyl duality, the Haar twirl $\Phi_H^{(k)}$ has the expansion
\begin{equation}\label{eq:prelim-haar-twirl}
    \Phi_H^{(k)}(X)
    =
    \sum_{\pi,\sigma\in S_k}
    T_\pi\Wg(\pi,\sigma)\Tr\left(T_\sigma^\dagger X\right).
\end{equation}
In this work, we denote $\cket{T}$ to represent $T$ and $\cbra{T}$ to represent $T^{\dagger}$. In this notation, the inner product $\Tr\left(T_\sigma^\dagger X\right)$ can be written as $\cbraket{T_{\sigma}}{X}$. Thus, we have
\begin{equation}\label{eq:prelim-haar-twirl_braket_notation}
\Phi_H^{(k)}(X)=\sum_{\pi,\sigma\in S_k} \Wg(\pi,\sigma) \cketbra{T_{\pi}}{T_{\sigma}}\cket{X}.
\end{equation}
The Haar twirl $\Phi_H^{(k)}$ is represented as
\begin{equation}\label{eq:haar-twirl_braket_notation}
\Phi_H^{(k)} =\sum_{\pi,\sigma\in S_k} \Wg(\pi,\sigma) \cketbra{T_{\pi}}{T_{\sigma}}.
\end{equation}

The Haar commutant is generated only by permutations. When we move to Clifford group and consider the Clifford commutant, we have to consider a larger set than the permutations. The set is labeled by stochastic Lagrangian subspaces.

\begin{definition}[Stochastic Lagrangian subspace]
A linear subspace $T\subseteq \F_2^{2k}$ is a stochastic Lagrangian subspace iff
\begin{enumerate}
\item $\forall (\mathbf{x}, \mathbf{y})\in T, x\cdot x-y\cdot y = 0\bmod 4$.
\item $\dim T = k$.
\item $1^{2k}\in T$.
\end{enumerate}
\end{definition}

The $n$-qubit Clifford commutant is generated by a set $\Sigma_{k,k}$ of stochastic Lagrangian subspaces of $\F_2^{2k}$ (which we will refer to as Clifford commutant elements, in a slight abuse of language). Suppose $n\geq k-1$, then
\begin{equation}
\mathrm{Com}(\Cl_n^{\otimes k}) = \{X, XC^{\otimes k} = C^{\otimes k}X, \forall C\in \Cl_n\} = \Span\{R(T): T\in \Sigma_{k,k}\},
\end{equation}
where the bases below are linearly independent:
\begin{equation}
R(T) = r(T)^{\otimes n},\quad r(T) = \sum_{(\mathbf{x}, \mathbf{y})\in T}\ketbra{\mathbf{x}}{\mathbf{y}}.
\end{equation}
One can see that permutations $S_k$ are subset of $\Sigma_{k,k}$. For any permutation $\pi$, one can define
\begin{equation}
T_{\pi} = \{(x, \pi x), x\in \F_2^k\},
\end{equation}
which is a valid stochastic Lagrangian subspace. In the following, we will use $T$ to also represent $R(T)$ and $r(T)$ for simplicity. Meanwhile, when writing a Clifford commutant element $T\in S_k$, we mean there exists a permutation $\pi\in S_k$, such that $T = T_{\pi}$. In this sense, $S_k\subseteq \Sigma_{k,k}$. We use $\Sigma_{k,k}\setminus S_k$ to represent all non-permutation Clifford commutant elements, and $\Sigma_{k,k}/S_k$ to represent all $\Sigma_{k,k}$ equivalence classes under the right action of permutations.

Similar to the Haar case, the Clifford Gram matrix is defined as
\begin{equation}\label{eq:prelim-clifford-gram}
G_C(T,T'):= \Tr\left(R(T)^\dagger R(T')\right)=d^{\dim(T\cap T')}.
\end{equation}
One can notice that the permutation block of $G_C$ agrees with $G_H$. The Clifford Weingarten matrix is the inverse Clifford Gram matrix:
\begin{equation}
\Wg_C(T,T'):=\left(G_C^{-1}\right)(T,T').
\end{equation}

Correspondingly, by Clifford Schur-Weyl duality, the Clifford twirl $\Phi_C^{(k)}$ has the expansion
\begin{equation}\label{eq:prelim-clif-twirl}
\Phi_C^{(k)} =\sum_{T,T'\in \Sigma_{k,k}} \Wg_C(T,T') \cketbra{T}{T'},
\end{equation}
where $\cket{T}$ represents $R(T)$, and $\cbraket{T}{X} = \Tr\left( R(T)^{\dagger} X\right)$. In particular, for a map of the form $\cketbra{T_1}{T_2}$, we have
\begin{align}
&[\cketbra{T_1}{T_2}\otimes \id] (P^{EPR}) = 2^{-nk} \cket{T_1}\otimes \cket{T_2},\\
&\Vert [\cketbra{T_1}{T_2}\otimes \id] (P^{EPR})\Vert_{\infty}
 = 2^{-nk}\Vert T_1\otimes T_2\Vert_{\infty} = 2^{-nk} \Vert T_1\Vert_{\infty}\Vert T_2\Vert_{\infty}.
\end{align}
For any Clifford commutant element $T$, we have the bound
\begin{equation}
1\leq \Vert T\Vert_{\infty}\leq 2^{nk},
\end{equation}
where for any permutation $\pi$,
\begin{equation}
\Vert T_{\pi}\Vert_{\infty} = 1.
\end{equation}
For the permutation group generating set $S_k$ and Clifford commutant generating set $\Sigma_{k,k}$, we can classify elements depending on their distance to the identity element. For $T,T'\in\Sigma_{k,k}$, we define the distance
\begin{equation}
\delta(T,T') := k-\dim(T\cap T').
\end{equation}
For permutation subspaces $T_\pi,T_\sigma\in S_k$, this distance agrees with the transposition length:
\begin{equation}
\delta(T_\pi,T_\sigma) = k-c(\pi^{-1}\sigma).
\end{equation}
We denote $\delta(T) = \delta(T, T_e)$ for simplicity, where $e$ is the identity. The number of permutations with $\delta(T_\pi)=m$ is the unsigned Stirling number
\begin{equation}
\#\{\pi\in S_k:\delta(T_\pi)=m\}
=
\begin{bmatrix}
k\\ k-m
\end{bmatrix},
\end{equation}
which satisfies
\begin{equation}
\sum_{m=0}^{k-1} \begin{bmatrix}
k\\ k-m
\end{bmatrix} x^m = \prod_{j=0}^{k-1}(1+jx).
\end{equation}
One can verify that
\begin{equation}
|S_k| = \sum_{m=0}^{k-1} \begin{bmatrix}
k\\ k-m
\end{bmatrix} = k!.
\end{equation}

Similarly, for the Clifford commutant, the number of commutant elements with $\delta(T)=m$ is given by the Gaussian binomial coefficient over $\mathbb{F}_2$,
\begin{equation}
\#\{T\in \Sigma_{k,k}:\delta(T)=m\} = 2^{m(m-1)/2}\binom{k-1}{m}_{2},
\end{equation}
which satisfies
\begin{equation}\label{eq:clifweingartensum}
\sum_{m=0}^{k-1} 2^{m(m-1)/2}
\binom{k-1}{m}_{2} x^m = \prod_{j=0}^{k-2}(1+2^jx).
\end{equation}
Correspondingly, the size of $\Sigma_{k,k}$ is given by
\begin{equation}
|\Sigma_{k,k}| = \prod_{j=0}^{k-2}(1+2^j) = 2^{\cO(k^2)}.
\end{equation}

\section{Unitary designs with magic-augmented circuits}\label{sc:magic-augmented}
In this section, we show that the magic-augmented circuit can form an approximate design at a small block size and prove Theorem~\ref{thm:MagicAugmented}. For convenience, we restate Theorem~\ref{thm:MagicAugmented} below.

\ThmMagicAugmented*

We first consider the case where the two layers of tensor-product random unitary gates are sampled from an exact unitary $k$-design. Later in the proof we will replace these with approximate unitary $k$-designs. The unitary gate on a $\xi$-qubit block has the form
\begin{equation}
U = \big(\bigotimes_{i=1}^{\xi/\ell} V^2_i\big) C  \big(\bigotimes_{i=1}^{\xi/\ell} V^1_i\big),
\end{equation}
where $V^j_i$ is an $\ell$-qubit random unitary gate. The corresponding unitary ensemble is denoted as $\mathcal{E}_u$.
The associated $k$-fold twirling operation of $\mathcal{E}_u$ is given by
\begin{equation}
\Phi^{(k)}_{\mathcal{E}_u} = \sum_{\vec{\sigma}, \vec{\pi}, \vec{\mu}, \vec{\nu}}\sum_{T_1, T_2} \Wg(\vec{\sigma}, \vec{\pi})\Wg(\vec{\mu}, \vec{\nu}) \Wg_C(T_1, T_2) \cketbra{\vec{\sigma}}{\vec{\pi}}\cketbra{T_1}{T_2}\cketbra{\vec{\mu}}{\vec{\nu}}.
\end{equation}
Note that we denote $\vec\sigma=(\sigma_1,\ldots,\sigma_{\xi/\ell})\in S_k^{\xi/\ell}$, $\ket{\vec\sigma}:=\bigotimes_j\ket{T_{\sigma_j}}$, and $\Wg(\vec\sigma,\vec\pi):=\prod_j\Wg(\sigma_j,\pi_j)$.

We want to compare this operator to its Haar equivalent.
Following \cite{Zhang2026magic}, we can split up this summation into three different groups:
\begin{align}
&\Phi^{(k), 1}_{\mathcal{E}_u} = \sum_{\vec{\sigma}, \vec{\pi}, \vec{\mu}, \vec{\nu}}\sum_{T_1, T_2\in S_k} \Wg(\vec{\sigma}, \vec{\pi})\Wg(\vec{\mu}, \vec{\nu}) \Wg_C(T_1, T_2) \cketbra{\vec{\sigma}}{\vec{\pi}}\cketbra{T_1}{T_2}\cketbra{\vec{\mu}}{\vec{\nu}};\\
&\Phi^{(k), 2}_{\mathcal{E}_u} = \sum_{\vec{\sigma}, \vec{\pi}, \vec{\mu}, \vec{\nu}}\sum_{T\in \Sigma_{k,k}\setminus S_k} \Wg(\vec{\sigma}, \vec{\pi})\Wg(\vec{\mu}, \vec{\nu}) \Wg_C(T, T) \cketbra{\vec{\sigma}}{\vec{\pi}}\cketbra{T}{T}\cketbra{\vec{\mu}}{\vec{\nu}};\\
&\Phi^{(k), 3}_{\mathcal{E}_u} = \sum_{\vec{\sigma}, \vec{\pi}, \vec{\mu}, \vec{\nu}}\sum_{\substack{T_1\neq T_2\\(T_1,T_2)\notin S_k\times S_k}} \Wg(\vec{\sigma}, \vec{\pi})\Wg(\vec{\mu}, \vec{\nu}) \Wg_C(T_1, T_2) \cketbra{\vec{\sigma}}{\vec{\pi}}\cketbra{T_1}{T_2}\cketbra{\vec{\mu}}{\vec{\nu}}.
\end{align}
For the first part, with $\sum_{\pi}\Wg(\sigma, \pi)\cbraket{\pi}{T} = \delta_{\sigma, T}$ for $T\in S_k$, we have that
\begin{equation}
\Phi^{(k), 1}_{\mathcal{E}_u} = \sum_{T_1, T_2\in S_k} \Wg_C(T_1, T_2) \cketbra{T_1}{T_2}.
\end{equation}
Thus,
\begin{equation}
\begin{split}
\Vert [(\Phi^{(k),1}_{\mathcal{E}_u}-\Phi^{(k)}_{H})\otimes \id] (P^{EPR})\Vert_{\infty} &= \Vert\sum_{T_1, T_2\in S_k} (\Wg_C(T_1, T_2)-\Wg(T_1,T_2)) [\cketbra{T_1}{T_2}\otimes \id](P^{EPR})\Vert_{\infty}\\
&\leq \sum_{T_1, T_2\in S_k} 2^{-\xi k}\abs{\Wg_C(T_1, T_2)-\Wg(T_1,T_2)}.
\end{split}
\end{equation}
The above error is the discrepancy between Clifford and Haar Weingarten functions on all permutation elements. The scaling of the error is shown in the following lemma.

\begin{lemma}[Permutation-block Clifford--Haar Weingarten discrepancy]
\label{lem:permutation-block-wg-discrepancy}
For any $\xi$-qubit system and design order $k$ with $k\leq \xi-1$,
\begin{equation}
\sum_{T_1,T_2\in S_k}2^{\xi k}\abs{\Wg_C(T_1,T_2)-\Wg(T_1,T_2)} \leq
k!2^{k-2\xi+3}.
\end{equation}
\end{lemma}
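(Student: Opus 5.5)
We compare the two inverses through a block decomposition of the Clifford Gram matrix. Write $d=2^\xi$ and order $\Sigma_{k,k}$ as $S_k\sqcup N$, where $N=\Sigma_{k,k}\setminus S_k$. Then
\begin{equation}
G_C=\begin{pmatrix} G_H & B\\ B^\dagger & D\end{pmatrix},
\end{equation}
since the permutation block of $G_C$ coincides with $G_H$. By the Schur complement formula, the permutation block of $\Wg_C=G_C^{-1}$ equals $(G_H-BD^{-1}B^\dagger)^{-1}$. It is cleaner to normalize by $d^k$ and work with $\widetilde G_C=d^{-k}G_C$, whose entries are $d^{-\delta(T,T')}$. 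In this normalization $\widetilde G_C=I+E$, where $E$ has zero diagonal and off-diagonal entries $d^{-\delta}$ with $\delta\ge1$. The quantity to bound is then
\begin{equation}
\sum_{\pi,\sigma}\abs{(\widetilde G_C^{-1})_{\pi\sigma}-(\widetilde G_H^{-1})_{\pi\sigma}},
\end{equation}
and the factor $2^{\xi k}$ in the lemma is exactly absorbed by this normalization.

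The first step is to bound row sums of $E$. The identity \eqref{eq:clifweingartensum} with $x=1/d$ gives, for each fixed $T$,
\begin{equation}
\sum_{T'\neq T} d^{-\delta(T,T')}=\prod_{j=0}^{k-2}(1+2^j/d)-1 .
\end{equation}
This uses that $\Sigma_{k,k}$ is homogeneous under the right action of the stochastic orthogonal group, so the count of elements at each distance is the same from every $T$. Since $2^{k-1}/d\le 1/2$ when $k\le\xi-1$, the right-hand side is at most $e^{2^{k-1}/d}-1\le 2^{k}/d$. The same estimate holds for the permutation-only sum $\prod_j(1+j/d)-1\le k^2/d$. Hence $\|E\|_{1\to1}\le 2^k/d<1$, so $\widetilde G_C^{-1}=\sum_{r\ge0}(-E)^r$ converges absolutely in the $\ell^1$ (max row sum) norm, and likewise for $\widetilde G_H^{-1}$.

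The second step is a path expansion of the difference. Write $E=E_{SS}+E'$, where $E'$ collects all entries touching $N$. On the permutation block, $(\widetilde G_C^{-1}-\widetilde G_H^{-1})_{SS}$ is the sum over walks from $\pi$ to $\sigma$ that visit $N$ at least once. Each such walk has length $r\ge2$ and contains at least one step $S\to N$ and one step $N\to S$. Taking absolute values and summing over $\sigma$ and over intermediate vertices, the contribution from each starting point $\pi$ is at most
\begin{equation}
\sum_{r\ge2}\; r^2\, \eta\, \|E\|_{1\to1}^{\,r-1},
\end{equation}
where $\eta=\max_{T}\sum_{T'\in N}d^{-\delta(T,T')}$ is the row mass going into $N$. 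One factor of $\eta$ suffices, because the remaining $N\to S$ step is controlled by the full row-sum bound. Since $\|E\|_{1\to1}\le 2^k/d$ and $\eta\le 2^k/d$, the leading term is of order $(2^k/d)^2=2^{2k-2\xi}$. Summing over the $k!$ starting permutations gives a total of order $k!\,2^{2k-2\xi}$.

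This is weaker than the claimed $k!\,2^{k-2\xi+3}$ by a factor of $2^k$. The main obstacle is therefore to sharpen the count of nonpermutation elements near a permutation. The plan is to show that every $T\in N$ satisfies $\delta(T,T_\pi)\ge1$ for all $\pi$, and that the number of $T\in N$ with $\delta(T,T_\pi)=m$ is dominated by $2^{m}$ times the permutation count (or by a similarly small quantity). This would give $\eta\lesssim 2^{k}/d$ while the return mass into $S$ from any $T\in N$ satisfies $\sum_{\sigma}d^{-\delta(T,T_\sigma)}\lesssim$ a constant, since there are few permutations near any subspace. Organizing the product as (mass into $N$) $\times$ (mass back to $S$), with one factor $2^k/d$ and one factor $O(1)/d$, and using geometric series with ratio $\le1/2$ to absorb the tail into the constant $2^3$, would yield the lemma. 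If the refined count of permutations near a fixed $T\in N$ proves hard, I would instead first try to bound it using the fact that $T\cap T_\sigma$ has dimension at most $k-1$ and that distinct $\sigma$ give distinct intersections.
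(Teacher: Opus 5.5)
Your setup matches the paper's: normalize by $d^k$, use the Schur complement or walk expansion, and bound the $S\to N$ row mass by $\eta\le 2^k/d$. You also correctly see that bounding every other step by the full row sum only gives $k!\,2^{2k-2\xi}$. The step that would close the gap is missing, however, and the routes you sketch would not supply it.

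The factor $2^k$ cannot be removed on the $S\to N$ side. From a fixed permutation there are $2^{k-1}-1-\binom{k}{2}$ non-permutation elements at distance one, so $\eta=\Theta(2^k/d)$. Your proposed domination ``by $2^m$ times the permutation count'' is therefore false already at $m=1$ once $k\ge 8$.

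What is needed is that the \emph{return} step is small. Write $\widetilde G_C=\begin{pmatrix}\widetilde G_H & \widetilde B\\ \widetilde B^{\dagger} & \widetilde D\end{pmatrix}$. Then the difference factorizes exactly as $(\widetilde G_C^{-1})_{SS}\,\widetilde B\,\widetilde D^{-1}\widetilde B^{\dagger}\,\widetilde G_H^{-1}$; equivalently, you isolate the last $N\to S$ step of each walk instead of bounding it by $\Vert E\Vert$. With this, you need
\[
\Vert\widetilde B^{\dagger}\Vert_{\infty\to\infty}=\max_{T\in\Sigma_{k,k}\setminus S_k}\ \sum_{\sigma\in S_k} d^{-\delta(T,T_\sigma)}=\cO(1/d)
\]
uniformly in $k$. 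Note that this must be $\cO(1/d)$, not ``a constant''. In words, every non-permutation $T$ is at distance one from only $\cO(1)$ permutations, with controlled counts further out.

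Your fallback does not deliver this. Injectivity of $\sigma\mapsto T\cap T_\sigma$ only bounds the number of permutations at distance $m$ by the number of codimension-$m$ subspaces of $T$ containing $1_{2k}$. That number is already $2^{k-1}-1$ at $m=1$, so the return mass comes out as order $2^k/d$ and you are back to $2^{2k}/d^2$.

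The paper gets the needed bound from a non-combinatorial input. Since $\sum_{\pi}T_\pi=d(d+1)\cdots(d+k-1)\,\mathbb{E}_\psi\psi^{\otimes k}$, the return mass of $T$ equals $\prod_{i=1}^{k-1}(1+i/d)\,\tr\bigl(T\,\mathbb{E}_\psi\psi^{\otimes k}\bigr)$. Theorem~10 of Bittel--Leone then gives $\tr\bigl(T\,\mathbb{E}_\psi\psi^{\otimes k}\bigr)\le 4/(d+3)$ for every non-permutation $T$, with the $k=4$ element $T_4$ extremal. This yields $\Vert\widetilde B^{\dagger}\Vert_{\infty\to\infty}\le 4.1/d$. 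Combined with $\Vert\widetilde B\Vert_{\infty\to\infty}\le 2^k/d$ and the $\cO(1)$ bounds on the three inverse Gram matrices, it gives $2^{k+3}d^{-2}$ per row.

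A minor point: your justification that every $T$ sees the same distance counts is wrong. $\Sigma_{k,k}$ is not a single orbit of the stochastic orthogonal group, since defect-subspace elements are not in the orbit of the identity. The correct reason is that stochastic Lagrangians are the maximal totally singular subspaces of the $(2k-2)$-dimensional quadratic space $1_{2k}^{\perp}/\langle 1_{2k}\rangle$. The full orthogonal group of that space acts transitively on them and preserves intersection dimensions.
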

We postpone the proof of Lemma~\ref{lem:permutation-block-wg-discrepancy} to the end of this section. By Lemma~\ref{lem:permutation-block-wg-discrepancy}, the first term in the $k$-fold twirl is close to its Haar equivalent in the "EPR" distance required for relative error. In particular we have
\begin{equation}
\begin{split}
2^{2\xi k}\Vert [(\Phi^{(k),1}_{\mathcal{E}_u}-\Phi^{(k)}_{H})\otimes \id] (P^{EPR})\Vert_{\infty}\leq k!2^{k-2\xi+3}.
\end{split}
\end{equation}
Furthermore we can show that the second and third terms are small in the same norm. These terms can be bounded as follows:
\begin{equation}
\begin{split}
&\Vert [\Phi^{(k), 2}_{\mathcal{E}_u}\otimes \id](P^{EPR})\Vert_{\infty}\\
\leq&  \sum_{T\in \Sigma_{k,k}\setminus S_k} \abs{\Wg_C(T, T)} \times \Vert [\sum_{\vec{\sigma}, \vec{\pi}, \vec{\mu}, \vec{\nu}} \Wg(\vec{\sigma}, \vec{\pi})\Wg(\vec{\mu}, \vec{\nu})  \cketbra{\vec{\sigma}}{\vec{\pi}}\cketbra{T}{T}\cketbra{\vec{\mu}}{\vec{\nu}}\otimes \id](P^{EPR})\Vert_{\infty}\\
=& \sum_{T\in \Sigma_{k,k}\setminus S_k} 2^{-\xi k}\abs{\Wg_C(T, T)} \times \Vert \sum_{\vec{\sigma}, \vec{\pi}, \vec{\mu}, \vec{\nu}} \Wg(\vec{\sigma}, \vec{\pi})\Wg(\vec{\mu}, \vec{\nu})  \cketbra{\vec{\sigma}}{\vec{\pi}}\cketbra{T}{T}\cket{\vec{\mu}}\cket{\vec{\nu}} \Vert_{\infty}\\
\leq& \sum_{T\in \Sigma_{k,k}\setminus S_k} 2^{-\xi k}\abs{\Wg_C(T, T)} \times (f(T))^{2\xi/\ell},
\end{split}
\end{equation}

\begin{equation}
\begin{split}
&\Vert [\Phi^{(k), 3}_{\mathcal{E}_u}\otimes \id](P^{EPR})\Vert_{\infty}\\
\leq&  \sum_{\substack{T_1\neq T_2\\(T_1,T_2)\notin S_k\times S_k}} \abs{\Wg_C(T_1, T_2)} \times \Vert [\sum_{\vec{\sigma}, \vec{\pi}, \vec{\mu}, \vec{\nu}} \Wg(\vec{\sigma}, \vec{\pi})\Wg(\vec{\mu}, \vec{\nu})  \cketbra{\vec{\sigma}}{\vec{\pi}}\cketbra{T_1}{T_2}\cketbra{\vec{\mu}}{\vec{\nu}}\otimes \id](P^{EPR})\Vert_{\infty}\\
=&  \sum_{\substack{T_1\neq T_2\\(T_1,T_2)\notin S_k\times S_k}} 2^{-\xi k} \abs{\Wg_C(T_1, T_2)} \times \Vert \sum_{\vec{\sigma}, \vec{\pi}, \vec{\mu}, \vec{\nu}} \Wg(\vec{\sigma}, \vec{\pi})\Wg(\vec{\mu}, \vec{\nu})  \cketbra{\vec{\sigma}}{\vec{\pi}}\cketbra{T_1}{T_2}\cket{\vec{\mu}}\cket{\vec{\nu}}\Vert_{\infty}\\
\leq& \sum_{\substack{T_1\neq T_2\\(T_1,T_2)\notin S_k\times S_k}} 2^{-\xi k}\abs{\Wg_C(T_1, T_2)} \times (f(T_1)f(T_2))^{\xi/\ell}.
\end{split}
\end{equation}
The function $f$ is defined as follows:
\begin{equation}
f(T) = \Vert \Phi^{(k)}_H(T) \Vert_{\infty} = \Vert \sum_{\sigma, \pi}\Wg(\sigma, \pi) \cketbra{\sigma}{\pi}\cket{T} \Vert_{\infty}.
\end{equation}
Note that $f(T)$ is evaluated on $\ell$ qubits. Loosely speaking it measures how far away an element of the Clifford commutant generating set is from the unitary commutant. In particular if $T\in S_k$, then $\Phi^{(k)}_H(T) = T$ and $f(T) = 1$.\\

There are two factors determining the values of $\Vert [\Phi^{(k), 2}_{\mathcal{E}_u}\otimes \id](P^{EPR})\Vert_{\infty}$ and $\Vert [\Phi^{(k), 3}_{\mathcal{E}_u}\otimes \id](P^{EPR})\Vert_{\infty}$. One is the Clifford Weingarten function, and the other is the scaling of $f(T)$. We will deal with each of these in turn. For the summation of Clifford Weingarten functions, we have the following lemma.

\begin{lemma}\label{lem:CliffordWeingartenSum}
For any $\xi$-qubit system, order $k\leq \xi-1$, and any Clifford commutant element $T\in \Sigma_{k,k}$,
\begin{align}
2^{\xi k}\abs{\Wg_C(T, T)}&\leq \frac{1}{1-\eta};\\
\sum_{T'\neq T} 2^{\xi k}\abs{\Wg_C(T, T')} &\leq \frac{\eta}{1-\eta},
\end{align}
where
\begin{equation}
\eta = \Vert (2^{\xi k}\Wg_C)^{-1} - \id \Vert_{\infty\rightarrow\infty} \leq 2^{k-\xi}.
\end{equation}
\end{lemma}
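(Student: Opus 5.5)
We write the normalized Gram matrix as $\widetilde G := 2^{-\xi k} G_C$, indexed by $\Sigma_{k,k}$. By \eqref{eq:prelim-clifford-gram} with $d=2^\xi$,
\begin{equation}
\widetilde G(T,T') = 2^{-\xi\,\delta(T,T')},
\end{equation}
so $\widetilde G = \id + E$, where $E$ has zero diagonal and off-diagonal entries $2^{-\xi\delta(T,T')}$ with $\delta(T,T')\geq 1$. Note that $2^{\xi k}\Wg_C = \widetilde G^{-1}$, so $\eta = \Vert \widetilde G - \id\Vert_{\infty\to\infty} = \Vert E\Vert_{\infty\to\infty}$. The plan is to bound $\eta$ by a row sum and then use a Neumann series for the two claimed inequalities.

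\textbf{Step 1 (bounding $\eta$).} The $\infty\to\infty$ norm is the maximal absolute row sum, so
\begin{equation}
\eta = \max_T \sum_{T'\neq T} 2^{-\xi\delta(T,T')}.
\end{equation}
The distance distribution from a fixed $T$ should be the same as from $T_e$. I would justify this through the group structure acting transitively on $\Sigma_{k,k}$ while preserving $\dim(T\cap T')$; alternatively, the row-independent count is standard in \cite{Gross2021Clifford,bittel2025completetheorycliffordcommutant} and I would invoke it. Granting this, \eqref{eq:clifweingartensum} with $x = 2^{-\xi}$ gives
\begin{equation}
\eta = \prod_{j=0}^{k-2}\bigl(1+2^{j-\xi}\bigr) - 1 \leq \exp\Bigl(\sum_{j=0}^{k-2}2^{j-\xi}\Bigr)-1 \leq \exp(2^{k-1-\xi})-1 .
\end{equation}
Since $k\leq \xi-1$, we have $y:=2^{k-1-\xi}\leq 1/4$, and therefore $e^y-1\leq 2y = 2^{k-\xi}$. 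This establishes $\eta\leq 2^{k-\xi}<1$.

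\textbf{Step 2 (Neumann series).} Because $\eta<1$, we can expand
\begin{equation}
\widetilde G^{-1} = \sum_{m\geq 0}(-E)^m .
\end{equation}
Write $A := \widetilde G^{-1}$. For each row $T$,
\begin{equation}
\sum_{T'}|A(T,T')| \leq \Vert A\Vert_{\infty\to\infty} \leq \frac{1}{1-\eta}.
\end{equation}
This immediately gives $|A(T,T)|\leq 1/(1-\eta)$, which is the first claim.

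For the off-diagonal part, observe that $A - \id = -EA$. Then
\begin{equation}
\sum_{T'\neq T}|A(T,T')| \leq \sum_{T'}|(A-\id)(T,T')| \leq \Vert E\Vert_{\infty\to\infty}\Vert A\Vert_{\infty\to\infty} \leq \frac{\eta}{1-\eta}.
\end{equation}
This is the second claim.

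\textbf{Main obstacle.} The delicate point is the row-independence of the distance distribution in Step 1, i.e., that $\#\{T':\delta(T,T')=m\}$ does not depend on $T$. If transitivity is unavailable, a weaker fallback still suffices: bound each row sum by the number of $T'$ at distance $m$ from $T$. This count can be controlled by counting $k$-dimensional stochastic Lagrangian subspaces meeting $T$ in a subspace of codimension $m$, which is at most the same Gaussian-binomial expression. That bound yields the same $\eta$ estimate.
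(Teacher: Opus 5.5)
Your proposal is correct and follows essentially the same route as the paper: normalize the Gram matrix to $\id+E$, bound the maximal row sum of $E$ via the Gaussian-binomial generating function \eqref{eq:clifweingartensum} to get $\eta\leq 2^{k-\xi}$, and finish with a Neumann series. Your identity $A-\id=-EA$ is just a compact form of the paper's estimate of $\sum_{r\geq 1}(-A_C)^r$.

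The row-independence of the distance count that you flag is also used without comment in the paper, and your transitivity idea does work. The map $T\mapsto T/\langle 1^{2k}\rangle$ identifies $\Sigma_{k,k}$ with the maximal totally singular subspaces of the hyperbolic quadratic space $(1^{2k})^{\perp}/\langle 1^{2k}\rangle$, with $Q(x,y)=\tfrac12(|x|-|y|)\bmod 2$. By Witt's theorem the orthogonal group of this space acts transitively on those subspaces while preserving $\dim(T\cap T')$.
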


The proof of Lemma~\ref{lem:CliffordWeingartenSum} is contained in the proof of Lemma~\ref{lem:permutation-block-wg-discrepancy}, since they require the same proof technique. The derivation of these lemmas is tedious but straightforward (though care must be taken not to introduce factors of $k^2$ in the final approximation). In short, the targets in the two lemmas are the absolute sum of the Weingarten function's elements, and hence can be upper-bounded by the infinite norm $\Vert\cdot\Vert_{\infty\rightarrow\infty}$. To evaluate the infinite norm, like $\eta = \Vert (2^{\xi k}\Wg_C)^{-1} - \id \Vert_{\infty\rightarrow\infty}$, we transform it to a counting procedure, allowing us to upper bound $\eta$ with
\begin{equation}
\eta \leq \sum_{m=1}^{k-1} 2^{m(m-1)/2}
\binom{k-1}{m}_{2} 2^{-\xi m}\leq 2^{k-\xi}.
\end{equation}
The right inequality utilizes the property of Gaussian binomial coefficients in Eq.~\eqref{eq:clifweingartensum}.\\

With these lemmas, we can continue upper bounding the contributions to the $k$-fold twirling operator. By Lemma~\ref{lem:CliffordWeingartenSum}, we have that
\begin{equation}
\begin{split}
2^{2\xi k}\Vert [\Phi^{(k), 2}_{\mathcal{E}_u}\otimes \id](P^{EPR})\Vert_{\infty} \leq \frac{1}{1-\eta}\sum_{T\in \Sigma_{k,k}\setminus S_k} (f(T))^{2\xi/\ell}.
\end{split}
\end{equation}
\begin{equation}
\begin{split}
2^{2\xi k}\Vert [\Phi^{(k), 3}_{\mathcal{E}_u}\otimes \id](P^{EPR})\Vert_{\infty} &\leq \sum_{\substack{T_1\neq T_2\\(T_1,T_2)\notin S_k\times S_k}} 2^{\xi k}\abs{\Wg_C(T_1, T_2)} \times \frac{f(T_1)^{2\xi/\ell}+f(T_2)^{2\xi/\ell}}{2}\\
&\leq \frac{\eta}{1-\eta}\sum_{T\in \Sigma_{k,k}} f(T)^{2\xi/\ell}\\
&= \frac{\eta}{1-\eta}(k!+\sum_{T\in \Sigma_{k,k}\setminus S_k} f(T)^{2\xi/\ell}).
\end{split}
\end{equation}
In summary, we have that
\begin{equation}
2^{2\xi k}\left(\Vert [\Phi^{(k), 2}_{\mathcal{E}_u}\otimes \id](P^{EPR})\Vert_{\infty} + \Vert [\Phi^{(k), 3}_{\mathcal{E}_u}\otimes \id](P^{EPR})\Vert_{\infty}\right) \leq \frac{\eta}{1-\eta}k! + \frac{1+\eta}{1-\eta}\sum_{T\in \Sigma_{k,k}\setminus S_k} f(T)^{2\xi/\ell},
\end{equation}
which reduces to
\begin{equation}
2^{2\xi k}\left(\Vert [\Phi^{(k), 2}_{\mathcal{E}_u}\otimes \id](P^{EPR})\Vert_{\infty} + \Vert [\Phi^{(k), 3}_{\mathcal{E}_u}\otimes \id](P^{EPR})\Vert_{\infty}\right) \leq k!2^{k-\xi+1} + 3\sum_{T\in \Sigma_{k,k}\setminus S_k} f(T)^{2\xi/\ell},
\end{equation}
given $k \leq \xi-1$ and $\eta \leq 2^{k-\xi}\leq 2^{-1}$. The following lemma, which is the core novelty of our argument, provides a bound on $\sum_{T\in \Sigma_{k,k}\setminus S_k} f(T)^{2\xi/\ell}$.

\begin{lemma}\label{lem:twirledCliffordCommutant}
For any $\xi$-qubit system and order $5\leq k\leq \xi-1$, suppose $\ell < \xi$, $2^{\ell}\geq k^8$, and define $f(T) = \Vert \Phi^{(k)}_H(T) \Vert_{\infty} = \Vert \sum_{\sigma, \pi}\Wg(\sigma, \pi) \cketbra{\sigma}{\pi}\cket{T} \Vert_{\infty}$ on an $\ell$-qubit system, we have that
\begin{equation}
\begin{split}
\sum_{T\in \Sigma_{k,k}\setminus S_k} f(T)^{2\xi/\ell}\leq k! 2^{k} (\frac{k^4}{2^\ell})^{2\xi/\ell}.
\end{split}
\end{equation}
\end{lemma}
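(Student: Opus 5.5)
The plan is to bound $f(T)$ for each non-permutation $T$, exploiting the right (and left) action of permutations, and then to count. The key structural fact is that $T_\pi T$ and $T T_\pi$ give the same value of $f$ as $T$. This holds because $\Phi_H^{(k)}$ commutes with left and right multiplication by permutation operators, and $T_\pi$ is unitary. So $f$ is constant on double cosets $S_k\backslash\Sigma_{k,k}/S_k$, and it suffices to study a representative nearest to the identity, i.e.\ one minimizing $\delta(T)$ within its coset.

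\textbf{Step 1: a pointwise estimate for $f(T)$.} Write $\Phi_H^{(k)}(T)=\sum_{\sigma}T_\sigma\,c_\sigma$ with $c_\sigma=\sum_\pi \Wg(\sigma,\pi)\Tr(T_\pi^\dagger T)$ on $\ell$ qubits, so that $\Tr(T_\pi^\dagger T)=2^{\ell\dim(T_\pi\cap T)}$. Set $D=2^\ell$. Using $|\Wg(\sigma,\pi)|\le D^{-k}D^{-\delta(\sigma,\pi)}(1+O(k^2/D))$, which holds since $D\ge k^8$, we get
\begin{equation*}
f(T)\le \sum_{\sigma,\pi}|\Wg(\sigma,\pi)|\,D^{\dim(T_\pi\cap T)}\lesssim \sum_{\sigma,\pi}D^{-\delta(\sigma,\pi)-\delta(T_\pi,T)}.
\end{equation*}
The dominant contribution comes from $\sigma=\pi$ with $\pi$ maximizing $\dim(T_\pi\cap T)$. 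Let $\delta_*(T)=\min_\pi\delta(T_\pi,T)$. For $T\notin S_k$ we have $\delta_*(T)\ge1$; the minimizing representative has $\delta_*\ge 1$ with a defect space. I aim for
\begin{equation*}
f(T)\le \bigl(k^4/D\bigr)^{\lceil \delta_*(T)/2\rceil}\cdot\#\{\text{near-optimal }\pi\}\quad\text{or simply}\quad f(T)\le k^{4\delta_*(T)}D^{-\delta_*(T)/2}\ (\text{crudely}).
\end{equation*}
Only a factor $k^4/D$ per ``unit of defect'' is needed. The $k$-polynomial loss comes from summing over $\pi$ at distance $j$ from the optimum, about $k^{2j}$ of them, and similarly over $\sigma$. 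Geometric sums then converge because $k^2/D\le k^{-6}$.

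\textbf{Step 2: counting cosets by defect.} Using Eq.~\eqref{eq:clifweingartensum}, the number of $T$ with $\delta(T)=m$ is at most $2^{m(m-1)/2}\binom{k-1}{m}_2$. Among permutation-coset representatives, however, I count non-permutation $T$ with $\delta_*(T)=m$ as at most $k!\cdot(\text{number of representatives at defect }m)$. The representatives are parametrized by a defect subspace of dimension about $m$ plus a small amount of data. I will try to bound this by $k!\,2^{k}\,k^{O(m)}$, or alternatively by $k!\,2^{cm^2}$ with $2^{cm^2}$ absorbed when $m$ is small. Then
\begin{equation*}
\sum_{T\notin S_k}f(T)^{2\xi/\ell}\le k!\,2^k\sum_{m\ge1}N_m\,(k^4/D)^{m\cdot 2\xi/\ell},
\end{equation*}
and since $k^4/D\le k^{-4}$, the $m=1$ term dominates, giving the claim.

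\textbf{Main obstacle.} The hard part is Step 1 for high-defect $T$. One must check that $\Tr(T_\pi^\dagger T)$ is uniformly smaller than $D^{k}$ by a factor $D^{\delta_*}$, and that the Weingarten sum does not cancel this gain. For $T$ with large defect but many $\pi$ achieving near-optimal overlap, the multiplicity must be controlled by $k^{O(\delta_*)}$ rather than $k!$. I would first try the bound $\#\{\pi:\delta(T_\pi,T)\le \delta_*+j\}\le k^{2(\delta_*+j)}$. This would use the fact that $T\cap T_\pi$ contains a large common subspace, which forces $\pi$ to agree with a fixed matching on most coordinates. If an $m$-dependent count $N_m$ grows like $2^{m^2}$, I will instead use the trivial $f(T)\le 1$-type bound $f\le k!\,D^{-\delta_*}\cdot$poly for large $m$. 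There the exponent $2\xi/\ell\ge 2$ together with $D\ge k^8$ beats $2^{m^2}$ only if $\ell\gtrsim m$. So for $m\gtrsim \ell$ I will use directly $\|T\|_\infty\le D^{k}$ on the twirled element, combined with $|\Sigma_{k,k}|=2^{O(k^2)}$, absorbed by the factor $2^{-\Omega(\xi)}$ that the regime $k\le\xi-1$ allows.
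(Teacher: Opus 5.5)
\textbf{There is a genuine gap.} Your outline has the right skeleton: permutation invariance of $f$, minimal-distance coset representatives, the Weingarten expansion, and counting by defect. But neither step is actually carried out, and several specific claims are wrong.

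\textbf{Step~1.} The lemma's right-hand side needs the pointwise bound $f(\bar T)\le (k^4/2^\ell)^{\delta(\bar T)}$. It must hold with the full exponent and with constant at most $1$, since any constant $C>1$ gets raised to the unbounded power $2\xi/\ell$. The targets you write down, $k^{4\delta_*}D^{-\delta_*/2}$ or $(k^4/D)^{\lceil\delta_*/2\rceil}$ times a multiplicity, are weaker than this. The estimate that would give the full exponent rests on the count $\#\{\pi:\delta(T_\pi,T)\le\delta_*+j\}\le k^{2(\delta_*+j)}$, which you support only heuristically. Your entrywise bound $|\Wg(\sigma,\pi)|\le D^{-k-\delta(\sigma,\pi)}(1+O(k^2/D))$ is also false: the leading coefficient is, up to sign, a product of Catalan numbers over the cycles of $\sigma^{-1}\pi$, which is exponentially large in $k$ for long cycles.

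The missing ingredient is elementary. $\delta(T,T')=k-\dim(T\cap T')$ is a metric on $\Sigma_{k,k}$, and it is invariant under right multiplication by permutations, since $TT_\pi=\{(x,\pi y):(x,y)\in T\}$. So minimality of $\bar T$ (with $\delta(\bar T)=m$) plus the triangle inequality give $\delta(\bar TT_\pi)\ge\max\{m,\,r(\pi)-m\}$, where $r(\pi)=k-c(\pi)$. The paper gets this from Lemma~44 of Bittel et al. Then use three facts:
\begin{itemize}
\item the exact identity $\sum_\sigma|\Wg(\sigma,\pi)|=(D-k)!/D!$;
\item the Stirling bound $\#\{\pi:r(\pi)=r\}\le K^r/r!$ with $K=k(k-1)/2$;
\item a split of the $\pi$-sum at $r=2m$.
\end{itemize}
This gives $f(\bar T)\le \frac{c_k}{4^m(2m)!}(k^4/D)^m$ with $c_k<8$ for $k\ge5$, hence $f(\bar T)\le(k^4/D)^m$. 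The $k^4$ comes from allowing $r(\pi)$ up to $2m$ at distance $m$, so your sharper count is not needed.

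\textbf{Step~2.} Your proposed bounds on the number of representatives at defect $m$ are unjustified. The bound $k!\,2^kk^{O(m)}$ cannot hold for all $m$: summed over $m<k$ it gives $2^{O(k\log k)}$, whereas $|\Sigma_{k,k}|\ge 2^{(k-1)(k-2)/2}$.

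The reason you give for the $m=1$ term dominating is also not the operative one. There are $2^{m(m-1)/2}\binom{k-1}{m}_2\approx 2^{m(k-1)-m(m+1)/2}$ elements at distance $m$. What beats this growth is $x:=(k^4/2^\ell)^{2\xi/\ell}\le 2^{-\xi}$, which follows from $2^\ell\ge k^8$ (so $k^4/2^\ell\le 2^{-\ell/2}$), together with $k\le\xi-1$. The condition $k^4/D\le k^{-4}$ is not enough.

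Your fallback for $m\gtrsim\ell$ fails outright. $\|T\|_\infty\le D^k$ only gives $f(T)^{2\xi/\ell}\le 2^{2\xi k}$, and a factor $2^{O(k^2)}$ cannot be absorbed by $2^{-\Omega(\xi)}$ when $k$ is comparable to $\xi$.

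No case split is needed. Use right cosets, not double cosets (otherwise the multiplicity is $(k!)^2$). Each right coset has at most $k!$ elements, and its minimal representative is itself a non-permutation element. Hence
\[
\sum_{T\in\Sigma_{k,k}\setminus S_k}f(T)^{2\xi/\ell}\le k!\sum_{T\in\Sigma_{k,k}\setminus S_k}x^{\delta(T)}\le k!\Bigl(\prod_{i=0}^{k-2}(1+2^ix)-1\Bigr)\le k!\,2^kx,
\]
using the generating function~\eqref{eq:clifweingartensum} and $2^{k-1}x\le 2^{k-1-\xi}\le\tfrac14$.
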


We leave the proof of Lemma~\ref{lem:twirledCliffordCommutant} to the end of this section. Proving this lemma is substantially more complicated than the preceding arguments. The key is to utilize the invariance of $f(T)$ under permutation actions. Applying Lemma~\ref{lem:twirledCliffordCommutant}, we would get for a $\xi$-qubit magic-augmented circuit with $\ell$-qubit exact unitary $k$-design, the approximation relative error $\varepsilon$ is upper bounded by
\begin{align}
\frac{2^{2\xi k}}{k!}(1+\frac{k^2}{2^{\xi+1}})\Vert [(\Phi^{(k)}_{\mathcal{E}_u}&-\Phi^{(k)}_{H})\otimes \id] (P^{EPR})\Vert_{\infty}\notag\\
=&\frac{2^{2\xi k}}{k!}(1+\frac{k^2}{2^{\xi+1}})\bigg(\Vert [(\Phi^{(k),1}_{\mathcal{E}_u}-\Phi^{(k)}_{H})\otimes \id] (P^{EPR})\Vert_{\infty} \notag\\
&\hspace{5em}+ \Vert [\Phi^{(k), 2}_{\mathcal{E}_u}\otimes \id](P^{EPR})\Vert_{\infty} + \Vert [\Phi^{(k), 3}_{\mathcal{E}_u}\otimes \id](P^{EPR})\Vert_{\infty}\bigg)\\
\leq& (1+\frac{k^2}{2^{\xi+1}})\left(2^{k-2\xi+3} + 2^{k-\xi+1} + 3\cdot 2^{k} (\frac{k^4}{2^\ell} )^{2\xi/\ell} \right)\\
\leq& 2^{k-\xi+3} + 2^{k+3} (\frac{k^4}{2^\ell})^{2\xi/\ell}.
\end{align}
The last inequality utilizes the fact that $k\leq \xi-1$ and $\xi$ is sufficiently large. Hence, $1+\frac{k^2}{2^{\xi+1}}\leq 2$ and $2^{k-2\xi+3}\leq 2^{k-\xi+1}$. Applying Eq.~\eqref{eq:epr_norm} gives us the result we want for Haar random $\ell$-qubit blocks\\

Now we replace the Haar random blocks with approximate unitary $k$-designs with relative error $\varepsilon''$. We will denote the associated ensemble as $\mathcal{E}_{v,\ell}$. We have the following result.

\begin{lemma}\label{lem:lblockapprdesign}
For any $\ell$-qubit system and Clifford commutant element $T\in \Sigma_{k,k}$, we have that
\begin{equation}
\Vert \Phi^{(k)}_{\mathcal{E}_{v,\ell}}(T) \Vert_{\infty}\leq (1+\varepsilon''2^{2\ell k})\Vert \Phi^{(k)}_{H}(T) \Vert_{\infty}.
\end{equation}
\end{lemma}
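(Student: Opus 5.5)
The plan is to write $\Phi^{(k)}_{\mathcal{E}_{v,\ell}}=\Phi^{(k)}_H+\Delta$ and bound $\Vert\Delta(T)\Vert_\infty$ by a multiple of $\Vert\Phi^{(k)}_H(T)\Vert_\infty$. The triangle inequality then gives
\[
\Vert\Phi^{(k)}_{\mathcal{E}_{v,\ell}}(T)\Vert_\infty\leq \Vert\Phi^{(k)}_H(T)\Vert_\infty+\Vert\Delta(T)\Vert_\infty .
\]
So it suffices to show $\Vert\Delta(T)\Vert_\infty\leq \varepsilon''2^{2\ell k}\Vert\Phi^{(k)}_H(T)\Vert_\infty$. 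Here $\Phi^{(k)}_{\mathcal{E}_{v,\ell}}$ and $\Phi^{(k)}_H$ are the $k$-fold twirls on $\ell$ qubits. I do not try to exploit the CP ordering directly on $T$, since $R(T)$ is generally not positive. Instead I use a crude operator-norm estimate on $\Delta(T)$ together with a lower bound on $\Vert\Phi^{(k)}_H(T)\Vert_\infty$.

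\emph{Step 1: bound $\Delta$ as a map on operator norm.} Relative error $\varepsilon''$ means both $\Phi^{(k)}_{\mathcal{E}_{v,\ell}}-(1-\varepsilon'')\Phi^{(k)}_H$ and $(1+\varepsilon'')\Phi^{(k)}_H-\Phi^{(k)}_{\mathcal{E}_{v,\ell}}$ are CP. Hence $\Delta=A-\varepsilon''\Phi^{(k)}_H$ with $A$ CP and $A\preceq 2\varepsilon''\Phi^{(k)}_H$.

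Both twirls are unital, since $U^{\otimes k}\id U^{\dagger\otimes k}=\id$. Therefore $A(\id)\preceq 2\varepsilon''\id$, and by Russo--Dye $\Vert A\Vert_{\infty\to\infty}=\Vert A(\id)\Vert_\infty\le 2\varepsilon''$. The same argument gives $\Vert\varepsilon''\Phi^{(k)}_H\Vert_{\infty\to\infty}\le\varepsilon''$. Combining,
\[
\Vert\Delta(T)\Vert_\infty\le 3\varepsilon''\Vert R(T)\Vert_\infty\le 3\varepsilon''2^{\ell k},
\]
using the stated bound $\Vert T\Vert_\infty\le 2^{\ell k}$ on $\ell$ qubits. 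Keeping $\Delta$ in the form $A-\varepsilon''\Phi^{(k)}_H$ may sharpen the constant $3$ toward $2$.

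\emph{Step 2: lower bound $\Vert\Phi^{(k)}_H(T)\Vert_\infty$.} The Haar twirl preserves the trace, so
\[
\Tr\Phi^{(k)}_H(T)=\Tr R(T)=\bigl(\#\{x:(x,x)\in T\}\bigr)^{\ell}\ge 2^{\ell}.
\]
The inequality holds because $0$ and $1^{2k}$ both lie in $T$. Since $\vert\Tr X\vert\le 2^{\ell k}\Vert X\Vert_\infty$, this yields
\[
\Vert\Phi^{(k)}_H(T)\Vert_\infty\ge 2^{\ell-\ell k}.
\]

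\emph{Step 3: combine.} Putting Steps 1 and 2 together,
\[
\Vert\Delta(T)\Vert_\infty\le 3\varepsilon''2^{\ell k}\cdot 2^{\ell k-\ell}\Vert\Phi^{(k)}_H(T)\Vert_\infty\le\varepsilon''2^{2\ell k}\Vert\Phi^{(k)}_H(T)\Vert_\infty,
\]
where the last step uses $2^{\ell}\ge 3$; for $\ell=1$ the constant $2$ from the sharper form of Step 1 is needed. This proves the lemma.

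The main point needing care is Step 1: the $\infty\to\infty$ norm of a difference of CP maps must be controlled through unitality (Russo--Dye) rather than through a diamond-norm bound, which controls the $1\to1$ direction. Everything else is bookkeeping of dimension factors, and the slack $2^{\ell}$ absorbs the constants.
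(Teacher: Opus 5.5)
Your proof is correct and follows essentially the same route as the paper: split $\Delta$ into completely positive pieces, bound $\Vert\Delta\Vert_{\infty\to\infty}$ via unitality and Russo--Dye, then combine $\Vert T\Vert_\infty\le 2^{\ell k}$ with a trace-based lower bound on $\Vert\Phi^{(k)}_H(T)\Vert_\infty$ (your $\Tr R(T)\ge 2^{\ell}$ is a slightly sharper form of the paper's $\tr T\ge 1$). The only loose end is $\ell=1$, where you appeal to an unproven ``sharper form'' of Step 1, but it is immediate: unitality gives $A(\id)=\varepsilon''\id$ exactly, so $\Vert A\Vert_{\infty\to\infty}=\varepsilon''$ and $\Vert\Delta\Vert_{\infty\to\infty}\le 2\varepsilon''$, and the paper's symmetric split $\Delta=\tfrac12(\Gamma_+-\Gamma_-)$ even gives $\Vert\Delta\Vert_{\infty\to\infty}\le\varepsilon''$, so the extra $2^{\ell}$ slack is not needed at all.
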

\begin{proof}[Proof of Lemma~\ref{lem:lblockapprdesign}]
Since $\mathcal{E}_{v,\ell}$ is an $\varepsilon''$-relative-error
approximate unitary $k$-design, we have
\begin{equation}
(1-\varepsilon'')\Phi_H^{(k)}
\preceq
\Phi_{\mathcal{E}_{v,\ell}}^{(k)}
\preceq
(1+\varepsilon'')\Phi_H^{(k)}.
\end{equation}
Define the difference operator
\begin{equation}
\Delta
:=
\Phi_{\mathcal{E}_{v,\ell}}^{(k)}
-
\Phi_H^{(k)},
\end{equation}
and the following two maps,
\begin{equation}
\Gamma_{+}
:=
\varepsilon''\Phi_H^{(k)}+\Delta = \Phi_{\mathcal{E}_{v,\ell}}^{(k)} - (1-\varepsilon'')\Phi_H^{(k)},
\quad
\Gamma_{-}
:=
\varepsilon''\Phi_H^{(k)}-\Delta = (1+\varepsilon'')\Phi_H^{(k)} - \Phi_{\mathcal{E}_{v,\ell}}^{(k)}.
\end{equation}
By the definition of relative-error approximate unitary design, both $\Gamma_+$ and $\Gamma_-$ are completely positive. Both $\Phi_{\mathcal{E}_{v,\ell}}^{(k)}$ and $\Phi_H^{(k)}$ are
unital, and hence
\begin{equation}
\Gamma_{\pm}(\id)=\varepsilon''\id.
\end{equation}
For any completely positive map $\Gamma$, by the Russo-Dye Theorem for positive linear maps~\cite{russo1966note}, which is stated as Theorem~1.1 in Ref.~\cite{bourin2019russodyetheorempositivelinear}, its induced infinity norm satisfies
\begin{equation}
\Vert\Gamma\Vert_{\infty\rightarrow\infty} = \Vert\Gamma(\id)\Vert_{\infty}.
\end{equation}
Consequently,
\begin{equation}
\Vert\Gamma_{\pm}\Vert_{\infty\rightarrow\infty}
=
\varepsilon''.
\end{equation}
Since
\begin{equation}
\Delta=\frac{1}{2}\left(\Gamma_{+}-\Gamma_{-}\right),
\end{equation}
the triangle inequality gives
\begin{equation}
\Vert\Delta\Vert_{\infty\rightarrow\infty}
\leq
\frac{1}{2}
\left(
\Vert\Gamma_{+}\Vert_{\infty\rightarrow\infty}
+
\Vert\Gamma_{-}\Vert_{\infty\rightarrow\infty}
\right)
\leq
\varepsilon''.
\end{equation}
It follows that
\begin{equation}
\begin{split}
\Vert\Phi_{\mathcal{E}_{v,\ell}}^{(k)}(T)\Vert_{\infty}
&\leq
\Vert\Phi_H^{(k)}(T)\Vert_{\infty}
+
\Vert\Delta(T)\Vert_{\infty}\\
&\leq
\Vert\Phi_H^{(k)}(T)\Vert_{\infty}
+
\varepsilon''\Vert T\Vert_{\infty}.
\end{split}
\end{equation}

For an $\ell$-qubit Clifford commutant element acting on the $k$-fold
system, we have the bound
\begin{equation}
\Vert T\Vert_{\infty}\leq 2^{\ell k}.
\end{equation}
Moreover, Haar twirling is trace preserving, and every Clifford
commutant element satisfies $\tr T\geq 1$. Therefore,
\begin{equation}
\begin{split}
2^{\ell k}\Vert\Phi_H^{(k)}(T)\Vert_{\infty}
\geq
\left|\tr\Phi_H^{(k)}(T)\right|=
|\tr T|
\geq 1,
\end{split}
\end{equation}
which implies
\begin{equation}
\Vert\Phi_H^{(k)}(T)\Vert_{\infty}
\geq
2^{-\ell k}.
\end{equation}
Combining the previous two estimates, we obtain
\begin{equation}
\Vert T\Vert_{\infty}
\leq
2^{2\ell k}
\Vert\Phi_H^{(k)}(T)\Vert_{\infty}.
\end{equation}
Hence,
\begin{equation}
\begin{split}
\Vert\Phi_{\mathcal{E}_{v,\ell}}^{(k)}(T)\Vert_{\infty}
&\leq
\Vert\Phi_H^{(k)}(T)\Vert_{\infty}
+
\varepsilon''2^{2\ell k}
\Vert\Phi_H^{(k)}(T)\Vert_{\infty}\\
&=
\left(1+\varepsilon''2^{2\ell k}\right)
\Vert\Phi_H^{(k)}(T)\Vert_{\infty}.
\end{split}
\end{equation}
\end{proof}

Using Lemma~\ref{lem:lblockapprdesign}, one can set $\varepsilon''=2^{-2\ell k}$, and obtain the following inequality
\begin{equation}
\sum_{T\in \Sigma_{k,k}\setminus S_k} \Vert \Phi^{(k)}_{\mathcal{E}_{v,\ell}}(T) \Vert_{\infty}^{2\xi/\ell}\leq k! 2^{k} (\frac{k^4}{2^{\ell-1}})^{2\xi/\ell}.
\end{equation}
Correspondingly, the final approximation error of the magic-augmented circuit (with the approximate unitary designs slotted in) is
\begin{equation}
\begin{split}
&\frac{2^{2\xi k}}{k!}(1+\frac{k^2}{2^{\xi+1}})\Vert [(\Phi^{(k)}_{\mathcal{E}_u}-\Phi^{(k)}_{H})\otimes \id] (P^{EPR})\Vert_{\infty}\\
\leq& 2^{k-\xi+3} + 2^{k+3} (\frac{k^4}{2^{\ell-1}} )^{2\xi/\ell}\\
=& 2^{k-\xi+3} + 2^{k+3 - 2\xi [1 - (4\log k + 1)/\ell]}\\
\leq& 2^{k-\xi+3} + 2^{k+3 - \xi (1 - 2/\ell)}.
\end{split}
\end{equation}
Thus, choosing $\ell \geq 8\log k$ will let the approximation error be upper-bounded by $2^{-\Omega(\xi)+\cO(k)}$, which finishes the proof of Theorem~\ref{thm:MagicAugmented}.\\

We end this section with the proofs of Lemmas \ref{lem:permutation-block-wg-discrepancy}, \ref{lem:CliffordWeingartenSum} and~\ref{lem:twirledCliffordCommutant}.

\begin{proof}[Proof of Lemmas~\ref{lem:permutation-block-wg-discrepancy} and~\ref{lem:CliffordWeingartenSum}]
Here, we set $d = 2^{\xi}$ for the system dimension. Let 
\begin{equation}
A = d^k((\Wg_C)_{P,P} - \Wg),
\end{equation}
where $(\Wg_C)_{P,P}$ is the restriction of $\Wg_C$ on the permutation block. Then our target is the absolute summation of all elements of $A$, which is upper bounded by the product of number of rows, $k!$, and the maximum absolute row sum of $A$,
\begin{align}
\sum_{T_1,T_2\in S_k}d^{k}\abs{\Wg_C(T_1,T_2)-\Wg(T_1,T_2)} \leq k!\Vert A\Vert_{\infty\rightarrow\infty}\\
\Vert A\Vert_{\infty\rightarrow\infty} = \max_{1\leq i\leq k!}\sum_{j=1}^{k!}\abs{A_{ij}}.
\end{align}
The remaining part is to get the upper bound for $\Vert A\Vert_{\infty\rightarrow\infty}$. 

Recall that the Weingarten function is the inverse of the Gram matrix. We have that $\Wg_C = G_C^{-1}$ and $\Wg = G_H^{-1}$. For further elaboration, we define the normalized Gram matrix,
\begin{equation}
\Gamma_C = d^{-k}G_C, \quad \Gamma_H = d^{-k}G_H,
\end{equation}
where $\Gamma_C(T,T') = d^{-\delta(T,T')}$ and $\Gamma_H(\pi,\sigma) = d^{-\delta(T_\pi,T_\sigma)}$. By definition of the Gram matrix, within the permutation block, $(\Gamma_C)_{P,P} = \Gamma_H$. Particularly, let $P=S_k$ and $Q=\Sigma_{k,k}\setminus S_k$. With respect to the decomposition $P\sqcup Q$, we have that
\begin{equation}
\Gamma_C = \begin{pmatrix}
\Gamma_H & B\\
B^T & \Gamma_Q
\end{pmatrix},
\end{equation}
where $B$ is the non-diagonal block, representing the interaction between the permutation elements and the non-permutation elements.

With the notion of the normalized Gram matrix, $A = (\Gamma_C^{-1})_{P,P} - \Gamma_H^{-1}$. By the Schur complement formula,
\begin{equation}
(\Gamma_C^{-1})_{P,P} = \left(\Gamma_H - B\Gamma_Q^{-1}B^T\right)^{-1}.
\end{equation}
Thus,
\begin{equation}
\begin{split}
A &= \left(\Gamma_H - B\Gamma_Q^{-1}B^T\right)^{-1} - \Gamma_H^{-1}\\
&= \left(\Gamma_H - B\Gamma_Q^{-1}B^T\right)^{-1}\left[I - \left(\Gamma_H - B\Gamma_Q^{-1}B^T\right)\Gamma_H^{-1} \right]\\
&=\left(\Gamma_H - B\Gamma_Q^{-1}B^T\right)^{-1}B\Gamma_Q^{-1}B^T\Gamma_H^{-1}\\
&= (\Gamma_C^{-1})_{P,P}B\Gamma_Q^{-1}B^T\Gamma_H^{-1}.
\end{split}
\end{equation}
Since the matrix norm is the sub-multiplicative, we have that
\begin{equation}
\begin{split}
\Vert A\Vert_{\infty\rightarrow\infty}&\leq 
\Vert (\Gamma_C^{-1})_{P,P}\Vert_{\infty\rightarrow\infty}\Vert B\Vert_{\infty\rightarrow\infty}\Vert \Gamma_Q^{-1}\Vert_{\infty\rightarrow\infty}\Vert B^T\Vert_{\infty\rightarrow\infty}\Vert \Gamma_H^{-1}\Vert_{\infty\rightarrow\infty}\\
&\leq \Vert \Gamma_C^{-1}\Vert_{\infty\rightarrow\infty}\Vert \Gamma_Q^{-1}\Vert_{\infty\rightarrow\infty}\Vert \Gamma_H^{-1}\Vert_{\infty\rightarrow\infty}\Vert B\Vert_{\infty\rightarrow\infty}\Vert B^T\Vert_{\infty\rightarrow\infty}.
\end{split}
\end{equation}
We now bound the norm of each term separately. We first focus on $\Gamma_C^{-1}$. Define
\begin{equation}
A_C = \Gamma_C - \id,
\end{equation}
where $\forall T, (A_C)_{T,T} = 0$. Then we have that
\begin{equation}
\Gamma_C^{-1} = (\id+A_C)^{-1} = \sum_{r=0}^{\infty} (-A_C)^r,
\end{equation}
if $\Vert A_C\Vert_{\infty\rightarrow\infty} < 1$. In this case,
\begin{equation}
\Vert \Gamma_C^{-1}\Vert_{\infty\rightarrow\infty} \leq \sum_{r=0}^{\infty} \Vert A_C\Vert_{\infty\rightarrow\infty}^r = \frac{1}{1-\Vert A_C\Vert_{\infty\rightarrow\infty}}.
\end{equation}
By definition, we have that
\begin{equation}
\begin{split}
\Vert A_C\Vert_{\infty\rightarrow\infty}
&=\max_{T\in\Sigma_{k,k}}\sum_{T'\neq T}d^{-\delta(T,T')}\\
&= \sum_{m=1}^{k-1}2^{m(m-1)/2}\binom{k-1}{m}_{2} d^{-m}\\
&=\prod_{i=0}^{k-2}(1+2^id^{-1})-1,
\end{split}
\end{equation}
where the last line utilizes Eq.~\eqref{eq:clifweingartensum}. Since
\begin{equation}
\sum_{i=0}^{k-2}2^id^{-1}\leq 2^{k-1}d^{-1} \leq\frac{1}{2},
\end{equation}
we can use induction method to obtain that
\begin{equation}
\Vert A_C\Vert_{\infty\rightarrow\infty} = \prod_{i=0}^{k-2}(1+2^id^{-1})-1\leq 2\cdot 2^{k-1}d^{-1} = 2^k d^{-1}.
\end{equation}
Since $k\leq \xi-1$ and $d=2^\xi$, we have that $\Vert A_C\Vert_{\infty\rightarrow\infty}\leq 0.5$. To obtain a tighter upper bound, we use the following inequality,
\begin{equation}
\Vert A_C\Vert_{\infty\rightarrow\infty}\leq \prod_{r=3}^{\infty}(1+2^{-r})-1 < 0.272.
\end{equation}
As a result,
\begin{equation}
\Vert \Gamma_C^{-1}\Vert_{\infty\rightarrow\infty}\leq \frac{1}{1-\Vert A_C\Vert_{\infty\rightarrow\infty}} < 1.38.
\end{equation}
By a similar argument,
\begin{equation}
\Vert\Gamma_Q^{-1}\Vert_{\infty\rightarrow\infty}\leq\frac{1}{1-\Vert A_C\Vert_{\infty\rightarrow\infty}} < 1.38.
\end{equation}
One can do the same derivation for the Haar Weingarten function, which has been derived in Ref.~\cite{harrow2023permutation}. We have that
\begin{equation}
\Vert \Gamma_H^{-1}\Vert_{\infty\rightarrow\infty} \leq \frac{1}{1-\Vert A_H\Vert_{\infty\rightarrow\infty}},
\end{equation}
where $A_H = \Gamma_H-\id$ and
\begin{equation}
\Vert A_H\Vert_{\infty\rightarrow\infty}\leq\exp\left(\frac{k(k-1)}{2d}\right)-1=o(1).
\end{equation}
Thus, for a sufficiently large system, $\Vert \Gamma_H^{-1}\Vert_{\infty\rightarrow\infty} < 1.01$.

The remaining terms are $\Vert B \Vert_{\infty\rightarrow\infty}$ and $\Vert B^T\Vert_{\infty\rightarrow\infty}$. For the first term, we have that
\begin{equation}
\begin{split}
\Vert B \Vert_{\infty\rightarrow\infty} = \max_{\pi\in S_k}\sum_{T\in \Sigma_{k,k}\setminus S_k}d^{-\delta(T_\pi,T)}\leq \Vert A_C\Vert_{\infty\rightarrow\infty}\leq 2^k d^{-1}.
\end{split}
\end{equation}
For the second term, we have that
\begin{equation}
\Vert B^T\Vert_{\infty\rightarrow\infty} = \max_{T\in \Sigma_{k,k}\setminus S_k}\sum_{\pi\in S_k} d^{-\delta(T,T_\pi)} = \max_{T\in \Sigma_{k,k}\setminus S_k}d^{-k} \sum_{\pi\in S_k} \cbraket{T}{T_{\pi}}.
\end{equation}
Recall that the $k$-th moment operator of Haar random states is
\begin{equation}
\int_{\psi}d\psi \ketbra{\psi}^{\otimes k} = \frac{1}{(d+k-1)\cdots (d+1)d}\sum_{\pi} T_{\pi}.
\end{equation}
We have that
\begin{equation}
\begin{split}
\Vert B^T\Vert_{\infty\rightarrow\infty} &= \prod_{i=1}^{k-1}(1+id^{-1})\max_{T\in \Sigma_{k,k}\setminus S_k} \tr( T \int_{\psi}d\psi \ketbra{\psi}^{\otimes k})\\
&\leq \exp\left(\frac{k(k-1)}{2d}\right) \max_{T\in \Sigma_{k,k}\setminus S_k} \tr(T \int_{\psi}d\psi \ketbra{\psi}^{\otimes k}).
\end{split}
\end{equation}
From Theorem~10 in Ref.~\cite{Bittel2026operational}, we have that
\begin{equation}
\tr(T \int_{\psi}d\psi \ketbra{\psi}^{\otimes k}) \leq \tr(T_4 \int_{\psi}d\psi \ketbra{\psi}^{\otimes 4}) = \frac{4}{d+3},
\end{equation}
where $T_4 = \frac{1}{d}\sum_{P\in \mathbb{P}_\xi}P^{\otimes 4}$ is the non-permutation Clifford commutant element on 4 copies.

Thus, for sufficiently large $\xi$, $d = 2^{\xi} \gg k(k-1)$, $\exp\left(\frac{k(k-1)}{2d}\right) < 1.01$, and $4/(d+3)<4.01d^{-1}$. We obtain that
\begin{equation}
\Vert B^T\Vert_{\infty\rightarrow\infty}\leq 4.1d^{-1}.
\end{equation}

Combining the preceding estimates gives
\begin{equation}
\begin{split}
\Vert A\Vert_{\infty\rightarrow\infty}&\leq \Vert \Gamma_C^{-1}\Vert_{\infty\rightarrow\infty}\Vert \Gamma_Q^{-1}\Vert_{\infty\rightarrow\infty}\Vert \Gamma_H^{-1}\Vert_{\infty\rightarrow\infty}\Vert B\Vert_{\infty\rightarrow\infty}\Vert B^T\Vert_{\infty\rightarrow\infty}\\
&\leq 1.38^2\times 1.01\times 4.1 \times 2^{k}d^{-2}\\
&\leq 2^{k+3}d^{-2}.
\end{split}
\end{equation}
Therefore,
\begin{equation}
\sum_{T_1,T_2\in S_k}d^{k}\abs{\Wg_C(T_1,T_2)-\Wg(T_1,T_2)} \leq k!2^{k+3}d^{-2}.
\end{equation}
This proves Lemma~\ref{lem:permutation-block-wg-discrepancy}.

To prove Lemma~\ref{lem:CliffordWeingartenSum}, we notice that
\begin{equation}
d^k\Wg_C = (\id + A_C)^{-1} = \sum_{r=0}^{\infty} (-A_C)^r,
\end{equation}
and
\begin{equation}
\begin{split}
d^k \abs{\Wg_C(T, T)}&= \abs{\sum_{r=0}^{\infty}(-A_C)^r_{T,T}}\\
&\leq \sum_{r=0}^{\infty} \abs{(-A_C)^r_{T,T}}\\
&\leq \sum_{r=0}^{\infty} \Vert A_C^r\Vert_{\infty\rightarrow\infty}\\
&\leq \frac{1}{1-\Vert A_C\Vert_{\infty\rightarrow\infty}}.
\end{split}
\end{equation}
Similarly, given $T$,
\begin{equation}
\begin{split}
\sum_{T'\neq T} d^k\abs{\Wg_C(T, T')} &= \sum_{T'\neq T} \abs{\sum_{r=1}^{\infty} (-A_C)^r_{T,T'}}\\
&\leq \sum_{r=1}^{\infty} \sum_{T'\neq T} \abs{(-A_C)^r_{T,T'}}\\
&\leq \sum_{r=1}^{\infty} \Vert A_C^r\Vert_{\infty\rightarrow\infty}\\
&\leq \frac{\Vert A_C\Vert_{\infty\rightarrow\infty}}{1-\Vert A_C\Vert_{\infty\rightarrow\infty}}.
\end{split}
\end{equation}
Substituting $\Vert A_C\Vert_{\infty\rightarrow\infty}\leq 2^{k-\xi}$ finishes the proof.
\end{proof}

\begin{proof}[Proof of Lemma~\ref{lem:twirledCliffordCommutant}]
First, we clarify a notation. In this proof, we set system dimension $d=2^{\ell}$ and require that $d\geq k^8$.

The quantity $f(T) = \Vert \Phi^{(k)}_H(T) \Vert_{\infty} = \Vert \sum_{\sigma, \pi}\Wg(\sigma, \pi) \cketbra{\sigma}{\pi}\cket{T} \Vert_{\infty}$ is invariant under multiplying permutations. For any permutation $\pi$,
\begin{equation}
f(TT_{\pi}) = \Vert \Phi^{(k)}_H(TT_{\pi}) \Vert_{\infty} = \Vert \Phi^{(k)}_H(T) T_{\pi} \Vert_{\infty} = \Vert \Phi^{(k)}_H(T) \Vert_{\infty} = f(T).
\end{equation}
This invariance allows us to define an equivalence relation on Clifford commutant elements: two commutant elements $T_1$ and $T_2$ are equivalent if and only if they differ by a right permutation, i.e.,
\begin{equation}
T_1 \sim T_2 \iff \exists \pi \in S_k \text{ such that } T_1 = T_2 T_{\pi}.
\end{equation}
By permutation covariance, we can select a canonical representative $\Bar{T}$ for each equivalence class $[T]$, so that $\Bar{T}$ minimizes its distance to the identity element in the equivalence class,
\begin{equation}
\delta(\Bar{T}) \leq \delta(\Bar{T} T_\pi), \quad \forall \pi \in S_k.
\end{equation}
Recall that $\delta(T) = \delta(T, T_e)$ is the distance between $T$ and identity $T_e$. In the following, we restrict our focus to evaluating $f(\Bar{T})$ on these canonical representatives.

We set
\begin{equation}
\delta(\Bar{T}) = m.
\end{equation}
Since we only consider non-permutation commutant, we have that $1\leq m \leq k-1$. For $\pi\in S_k$, let
\begin{equation}
r(\pi)=\delta(T_\pi)=k-c(\pi).
\end{equation}
By Lemma 44 of Ref.~\cite{bittel2025completetheorycliffordcommutant},
\begin{equation}
\delta(\Bar{T}T_\pi)=m+r(\pi)-2\beta(\pi),
\end{equation}
where $\beta(\pi)$ is the number of shared linear dependencies between $\Bar{T}$ and $T_\pi$. We have that $\beta(\pi)\leq \min\{m, r(\pi)\}$. Since $\Bar{T}$ is the representative, we have that
\begin{equation}
\delta(\Bar{T}T_\pi)\geq \max\{m,r(\pi)-m\}.
\end{equation}

The Weingarten estimate gives
\begin{equation}
\begin{split}
f(\Bar{T}) &= \Vert \sum_{\sigma, \pi} \Wg(\sigma, \pi)\cketbra{\sigma}{\pi}\cket{\Bar{T}} \Vert_{\infty}\\
&\leq \sum_{\sigma, \pi}\abs{\Wg(\sigma, \pi)} \tr(T_{\pi}^{\dagger}\Bar{T})\\
&= \frac{1}{d(d-1)\cdots(d-k+1)}\sum_{\pi\in S_k}\tr(\Bar{T}T_{\pi}),
\end{split}
\end{equation}
where we use $\forall \pi, \sum_{\sigma}\abs{\Wg(\sigma, \pi)} = (d-k)!/d!$.

Using
\begin{equation}
\tr(\Bar{T}T_{\pi})=d^{k-\delta(\Bar{T}T_\pi)},
\end{equation}
we get
\begin{equation}
f(\Bar{T})
\le
A_{k,d}
\sum_{\pi\in S_k}d^{-\delta(\Bar{T}T_\pi)},
\end{equation}
where
\begin{equation}
A_{k,d}
=
\frac{d^k}{d(d-1)\cdots(d-k+1)}
=
\prod_{j=1}^{k-1}\left(1-\frac{j}{d}\right)^{-1}.
\end{equation}
Since $d\ge k^8$ and $j\le k-1$,
\begin{equation}
0\le \frac{j}{d}\le \frac{1}{k^7}\le \frac12.
\end{equation}
For $0\le x\le 1/2$,
\begin{equation}
\log(1-x)^{-1}\le 2x.
\end{equation}
Therefore
\begin{equation}
\log A_{k,d}
=
\sum_{j=1}^{k-1}
\log\left(1-\frac{j}{d}\right)^{-1}
\le
\frac{2}{d}
\sum_{j=1}^{k-1}j
=
\frac{k(k-1)}{d}
\le
\frac{1}{k^2}.
\end{equation}
Hence
\begin{equation}
A_{k,d}\le \exp\left(\frac{1}{k^2}\right).
\end{equation}

Now we bound $\sum_{\pi\in S_k}d^{-\delta(\Bar{T}T_\pi)}$. We first group permutations by $r=r(\pi)$. The number of permutations with $r(\pi)=r$ is the unsigned Stirling number
\begin{equation}
\#\{\pi\in S_k:r(\pi)=r\}
=
\begin{bmatrix}
k\\ k-r
\end{bmatrix}.
\end{equation}
Thus
\begin{equation}
\sum_{\pi\in S_k}d^{-\delta(\Bar{T}T_\pi)}
\le
\sum_{r=0}^{k-1}
\begin{bmatrix}
k\\ k-r
\end{bmatrix}
d^{-\max\{m,r-m\}}.
\end{equation}
Splitting at $r=2m$ gives
\begin{equation}
\sum_{\pi\in S_k}d^{-\delta(\Bar{T}T_\pi)}
\le
\Sigma_{-}+\Sigma_{+},
\end{equation}
where
\begin{equation}
\Sigma_{-} = d^{-m}\sum_{r=0}^{2m}
\begin{bmatrix}
k\\ k-r
\end{bmatrix},
\end{equation}
and
\begin{equation}
\Sigma_{+} = d^m\sum_{r=2m+1}^{k-1}
\begin{bmatrix}
k\\ k-r
\end{bmatrix}
d^{-r}.
\end{equation}
We have a bound for the unsigned Stirling number,
\begin{equation}
\begin{bmatrix}
k\\ k-r
\end{bmatrix}
\le
\frac{K^r}{r!},\quad K=\frac{k(k-1)}{2}
\end{equation}
Then,
\begin{equation}
\Sigma_{-}\leq d^{-m}\sum_{r=0}^{2m}\frac{K^r}{r!},
\end{equation}
and
\begin{equation}
\Sigma_{+}\leq d^{m}\sum_{r=2m+1}^{k-1}\frac{1}{r!}(\frac{K}{d})^r.
\end{equation}
We first bound $\Sigma_{-}$. For $0\le r\le 2m$, the ratio of consecutive terms satisfies
\begin{equation}
\frac{K^{r-1}/(r-1)!}{K^r/r!}=\frac{r}{K}\leq \frac{2m}{K}\leq\frac{4}{k}.
\end{equation}
Hence
\begin{equation}
\Sigma_{-}\leq d^{-m}\frac{K^{2m}}{(2m)!}\frac{1}{1-\frac{4}{k}}
\leq
\left(1-\frac{4}{k}\right)^{-1}\frac{1}{4^m(2m)!}\left(\frac{k^4}{d}\right)^m.
\end{equation}
We now bound $\Sigma_{+}$. Since $d\geq k^8$,
\begin{equation}
\frac{K}{d}\le \frac{k^2}{2d}\le \frac{1}{2k^6}\leq \frac{1}{2}.
\end{equation}
Therefore,
\begin{equation}
\Sigma_{+}\leq d^m
\frac{(K/d)^{2m+1}}{(2m+1)!}\sum_{r=0}^{+\infty}(\frac{1}{2})^r
\leq 2d^m\frac{(K/d)^{2m+1}}{(2m+1)!}.
\end{equation}
Rewriting,
\begin{equation}
\Sigma_{+}
\leq
\frac{2K}{d(2m+1)}\frac{K^{2m}}{d^m(2m)!}\leq \frac{1}{3k^2}
\frac{1}{4^m(2m)!}
\left(\frac{k^4}{d}\right)^m.
\end{equation}
Combining the bounds for $\Sigma_{-}$ and $\Sigma_{+}$ gives
\begin{equation}
\sum_{\pi\in S_k}d^{-\delta(\Bar{T}T_\pi)}
\le
\left[
\left(1-\frac{4}{k}\right)^{-1}
+
\frac{1}{3k^2}
\right]
\frac{1}{4^m(2m)!}
\left(\frac{k^4}{d}\right)^m.
\end{equation}
Multiplying the factor $A_{k,d}$, we get
\begin{equation}
\begin{split}
f(\Bar{T})&\leq \exp\left(\frac{1}{k^2}\right)\left[\left(1-\frac{4}{k}\right)^{-1}+\frac{1}{3k^2}\right]\frac{1}{4^m(2m)!}\left(\frac{k^4}{d}\right)^m\\
&\leq \exp\left(\frac{1}{k^2}\right)\left[\left(1-\frac{4}{k}\right)^{-1}+\frac{1}{3k^2}\right]\frac{1}{8}\left(\frac{k^4}{d}\right)^m.
\end{split}
\end{equation}
For $k\geq 5$, $\exp\left(\frac{1}{k^2}\right)\left[\left(1-\frac{4}{k}\right)^{-1}+\frac{1}{3k^2}\right] < 8$, and thus we obtain
\begin{equation}
f(\Bar{T})\le
\left(\frac{k^4}{d}\right)^m.
\end{equation}
Hence, the final target is upper-bounded by
\begin{equation}
\begin{split}
\sum_{T\in \Sigma_{k,k}\setminus S_k} f(T)^{2\xi/\ell}&\leq k! \sum_{\Bar{T}\in \Sigma_{k,k}/S_k \setminus \{e\}} f(\Bar{T})^{2\xi/\ell}\\
&\leq k! \sum_{T\in \Sigma_{k,k}/S_k \setminus \{e\}} \left[(\frac{k^4}{2^\ell})^{\delta(\Bar{T})}\right]^{2\xi/\ell}\\
&\leq k! \sum_{T\in \Sigma_{k,k}\setminus S_k} [(\frac{k^4}{2^\ell})^{2\xi/\ell}]^{\delta(T)}\\
&=k!\left[ \prod_{i=0}^{k-2}(1+2^i(\frac{k^4}{2^\ell})^{2\xi/\ell}) - \prod_{j=1}^{k-1}(1+j(\frac{k^4}{2^\ell})^{2\xi/\ell}) \right]\\
&\leq k!\left[ \prod_{i=0}^{k-2}(1+2^i(\frac{k^4}{2^\ell})^{2\xi/\ell}) - 1 \right]\\
&\leq k! 2^{k} (\frac{k^4}{2^\ell})^{2\xi/\ell}.
\end{split}
\end{equation}
Here, $\Sigma_{k,k}/S_k$ represents the quotient set under permutation transformation symmetry. The last line can be proved by induction and is also utilized in the proof of Lemma~\ref{lem:permutation-block-wg-discrepancy}. The key point is that when $d = 2^{\ell}\geq k^8$, we have that
\begin{equation}
\sum_{i=0}^{k-2}2^i(\frac{k^4}{2^\ell})^{2\xi/\ell}\leq 2^{k-1}(\frac{k^4}{2^\ell})^{2\xi/\ell} \leq\frac{1}{2}.
\end{equation}
Then,
\begin{equation}
\prod_{i=0}^{k-2}(1+2^i(\frac{k^4}{2^\ell})^{2\xi/\ell})\leq 1 + 2^k(\frac{k^4}{2^\ell})^{2\xi/\ell}.
\end{equation}
\end{proof}

\section{Constant depth circuits with constant spectral gap}\label{sc:spectral-gap}
In this section, we prove Theorem~\ref{thm:CPZPCdesign}. We first restate and present a more detailed version.

\begin{theorem}[Constant-depth CPZPC generators and design depth]\label{thm:main-cpzpc}
Let $\ell$ be the number of qubits and $k$ be the design order. Assume
\begin{equation}
k\leq c 2^{\ell/6.1}
\end{equation}
for a sufficiently small constant $c>0$. There is a probability distribution $\widetilde\nu_{\CPZPC,\ell}$ on $\ell$-qubit Clifford and permutation circuits such that the following hold.
\begin{enumerate}
\item Each sample from $\widetilde\nu_{\CPZPC,\ell}$ is realizable by a nearest-neighbor one-dimensional circuit of depth $\cO(1)$.
\item The $k$-th moment operator has constant spectral gap:
\begin{equation}
g(\widetilde\nu_{\CPZPC,\ell},k)\leq 1-\Delta_{\widetilde\nu_{\CPZPC}}
\end{equation}
for a constant $\Delta_{\widetilde\nu_{\CPZPC}}>0$ independent of $\ell$ and $k$.
\end{enumerate}
Consequently, $L$ independent samples from $\widetilde\nu_{\CPZPC,\ell}$ form an $\varepsilon''$-relative-error approximate unitary $k$-design whenever
\begin{equation}
L=\cO\left(\ell k+\log \frac{1}{\varepsilon''}\right).
\end{equation}
Since one sample has depth $\cO(1)$, the total circuit depth is
\begin{equation}
D_{1\mathrm{D}}=\cO\left(\ell k+\log \frac{1}{\varepsilon''}\right).
\end{equation}
\end{theorem}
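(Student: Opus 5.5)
The plan is to start from the exact CPZPC ensemble of Ref.~\cite{chen2024incompressibilityspectralgapsrandom}: $U = C_2 P_2 Z P_1 C_1$, with $C_i$ uniformly random in $\Cl_\ell$ and $P_i$ uniformly random classical permutations of the computational basis. That reference shows its $k$-th moment operator has essential norm bounded by a constant $1-\Delta_0<1$ whenever $k\leq c2^{\ell/6.1}$. I will take this as the input and replace each Haar-random group factor by a short random walk on a bounded, 1D-local symmetric generating set. The goal is to show this replacement costs only a constant factor in the gap.

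\textbf{Step 1: generating sets.} For the Clifford factor I use Theorem~\ref{thm:cliffordgenerator}. It gives $S_{\Cl,\ell}$ with $|S_{\Cl,\ell}|\leq 451$, depth $\cO(1)$ in 1D, and Kazhdan constant $\Omega(1)$. By Lemma~\ref{lem:kazhdan-constant-to-spectral-gap}, $g(\mu(S_{\Cl,\ell}),\rho,\Cl_\ell)\leq 1-\Omega(1)$ for every representation $\rho$, in particular for $\rho(C)=C^{\otimes k}\otimes\overline C^{\otimes k}$.

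For the permutation factor I use Kassabov's bounded generating set of $\mathrm{Sym}(2^\ell)$ (or of the alternating group, with bit flips added to reach the full group). It has an $\Omega(1)$ Kazhdan constant, as recalled in Ref.~\cite{chen2024incompressibilityspectralgapsrandom}. The new ingredient is to show each Kassabov generator has a constant-depth realization on an open 1D chain. These generators are built from affine and reversible maps on blocks of bits, and were previously realized in constant depth only on a ring or with all-to-all connectivity. I will use a folded layout: place qubit $j$ and qubit $\ell+1-j$ adjacent, so that the ring becomes a ladder embedded in a line. A cyclic shift or ring-local gate then becomes nearest-neighbor up to a constant number of SWAP layers.

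\textbf{Step 2: comparison.} Let $\widetilde C$ be the product of $r_C$ independent uniform samples from $S_{\Cl,\ell}$, and $\widetilde P$ the product of $r_P$ samples from the permutation generators, flanked by random bit flips as in Fig.~\ref{fig:circuitCPZPC}. By symmetry and Lemma~\ref{lem:kazhdan-constant-to-spectral-gap}, $M(\widetilde C)=P_{\Cl}+E_C$ with $\|E_C\|_\infty\leq(1-\Omega(1))^{r_C}$, where $P_{\Cl}$ is the exact Clifford-twirl projector. The analogous statement holds for the permutation factor.

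Expanding
\begin{equation}
M(\widetilde\nu_{\CPZPC,\ell})=(P_{\Cl}+E_C)(P_{\Sym}+E_P)M(Z)(P_{\Sym}+E_P)(P_{\Cl}+E_C),
\end{equation}
the leading term is the exact CPZPC moment operator. All other terms have norm at most $4\max(\|E_C\|_\infty,\|E_P\|_\infty)$, since every factor is a contraction. All these operators fix the Haar-invariant subspace, so subtracting $P_H$ gives
\begin{equation}
g(\widetilde\nu,k)\leq 1-\Delta_0+4(1-\Omega(1))^{\min(r_C,r_P)}.
\end{equation}
Choosing constants $r_C,r_P$ large enough yields a gap of at least $\Delta_0/2$, independent of $\ell$ and $k$. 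The depth of one sample is $\cO(r_C+r_P)=\cO(1)$.

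\textbf{Step 3: amplification.} Lemma~\ref{lemma:gap-to-design} then gives a relative $\varepsilon''$ design after $L=\cO(\ell k+\log(1/\varepsilon''))$ samples, so the total depth is $\cO(\ell k+\log(1/\varepsilon''))$.

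The main obstacle is Step 1 for permutations: verifying that every Kassabov generator, including the bit-flip flanking needed to pass from alternating to symmetric group and to make the set symmetric, embeds in the folded open chain with $\cO(1)$ depth uniformly in $\ell$. Divisibility mismatches between $\ell$ and Kassabov's block structure may need padding via overlapping blocks, which Lemma~\ref{lem:bounded-product} tolerates at a constant-factor loss.
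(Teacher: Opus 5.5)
Your overall route is the paper's. You anchor on the exact CPZPC ensemble, replace the Haar-random Clifford and permutation factors by constant-length walks on bounded 1D-local generating sets, control the replacement by a triangle inequality, and amplify with Lemma~\ref{lemma:gap-to-design}. Phrase the comparison as telescoping over contractions, which gives $2\|E_C\|_\infty+2\|E_P\|_\infty$; literally expanding the product into terms does not. Your qubit-level folding of the ring is equivalent to the paper's folded cell order $C_1,C_s,C_2,C_{s-1},\ldots$.

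The weak point is the permutation factor. The generators that actually admit constant-depth circuits (Ref.~\cite{chen2024incompressibilityspectralgapsrandom}) are Kassabov's generators of $\Alt(K_s)$, with $K_s=(\F_2^{3s}\setminus\{0^{3s}\})^6\subsetneq\{0,1\}^{18s}$, flanked by uniformly random bit flips. Nothing is known about 1D realizations of Kassabov's abstract bounded generators for general $\Alt(2^\ell)$. The flips in the realizable set spread the walk from $K_s$ to all of $\{0,1\}^\ell$. They cannot take you from $\Alt$ to $\Sym$: each bit flip is $2^{\ell-1}$ disjoint transpositions, hence even for $\ell\geq 2$. The passage to $\Sym$ is instead the moment identity $\E_{\Alt}P^{\otimes t}=\E_{\Sym}P^{\otimes t}$ for $t\leq 2^\ell-2$. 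More importantly, this flanked set is not known to have an $\Omega(1)$ Kazhdan constant for $\Alt(2^\ell)$. It only has a gap on $P^{\otimes t}$ with $t\leq\cO(2^{\ell/6.1})$. Here $t=2k$, since $P^{\otimes k}\otimes\overline P^{\otimes k}=P^{\otimes 2k}$, and that is exactly where the hypothesis $k\leq c2^{\ell/6.1}$ enters.

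This is why your proposed fix for $\ell\not\equiv 0\pmod{18}$ fails. Lemma~\ref{lem:bounded-product} has two requirements:
\begin{itemize}
\item a Kazhdan constant for the local group with respect to its generators, over all nontrivial representations;
\item a bounded-width factorization of $\Alt(2^\ell)$ into copies of $\Alt(2^{18s})$ acting on the overlapping registers $A\sqcup B$ and $B\sqcup C$.
\end{itemize}
The first is unavailable for the flanked Kassabov set, and the second is not established.

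The paper instead uses the moment-level permutation overlap theorem (Lemma~\ref{lem:overlap}):
\begin{equation*}
\|\E P_{AB}^{\otimes k}\,\E P_{BC}^{\otimes k}-\E P_{ABC}^{\otimes k}\|_\infty\leq\cO\!\left(\frac{(k\log k+\log|B|)^3}{\sqrt{|B|}}\right),
\end{equation*}
with $|A|=|C|=2^{\ell_0}$, $\ell_0\leq 17$, and $|B|=2^{\ell-2\ell_0}$. This is exponentially small under $k\leq\cO(2^{\ell/6.1})$. You then apply a constant number of $18s$-qubit generator steps on $A\sqcup B$, followed by a constant number on $B\sqcup C$. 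Each step stays constant depth in the folded layout. With this substitution your argument goes through.
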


\begin{proof}[Proof of Theorem~\ref{thm:main-cpzpc}]
As demonstrated in Fig.~\ref{fig:circuitCPZPC}, the probability distribution $\widetilde\nu_{\CPZPC,\ell}$ is constructed by
\begin{equation}\label{eq:implemented-cpzpc}
\widetilde{\nu}_{\CPZPC,\ell}=\nu_{\Cl,\ell}^{*r_C}*\nu_{P,\ell}^{*r_P}*\delta_Z*\nu_{P,\ell}^{*r_P}*\nu_{\Cl,\ell}^{*r_C},
\end{equation}
where $\nu_{P,\ell}$ is a probability distribution over the 1D-local Kassabov generators and $\nu_{\Cl,\ell}$ is a probability distribution over the 1D-local Clifford generators. The integers $r_C$ and $r_P$ are fixed constants chosen later, which make sure that the essential norm $g(\widetilde{\nu}_{\CPZPC,\ell}, k) < 1$, and that the spectral gap $\Delta(\widetilde{\nu}_{\CPZPC,\ell}, k)$ is a constant between $0$ and $1$. To achieve this goal, we prove the existence of 1D-local generators for the permutation group and the Clifford group with constant gaps. We formulate the results as two lemmas below. Recall that the $k$-th moment operator of the permutation group can be simulated by the alternating group~\cite{chen2024incompressibilityspectralgapsrandom}.
\begin{lemma}[Lemma~4.4 in Ref.~\cite{chen2024incompressibilityspectralgapsrandom}]
For all $k \le 2^{\ell} - 2$ we have
\begin{equation}
\E_{\pi \sim \mu(\Alt_U(2^{\ell}))} P(\pi)^{\otimes k} = \E_{\pi \sim \mu(\Sym_U(2^{\ell}))} P(\pi)^{\otimes k},
\end{equation}
where $P(\pi)\ket{x} = \ket{\pi(x)}$ for all $x \in [2^{\ell}]$ and $\pi \in \Sym_U(2^{\ell})$.
\end{lemma}
We only need to show the efficient realization of Kassabov's generators for the alternating group. In fact, it is implicitly proved in Section 5.2 of Ref.~\cite{chen2024incompressibilityspectralgapsrandom} that all the Kassabov's generators for the alternating group can be realized in a 1D system with a periodic boundary condition. Here, we use a folded lay-out to show the realization is also available in a 1D system with an open boundary condition. The result is summarized below.
\begin{lemma}[Improved from Theorem B.1 in Ref.~\cite{chen2024incompressibilityspectralgapsrandom}]\label{lem:1d-kassabov}
For any integers $\ell\geq 1$ and $k\leq \cO(2^{\ell/6.1})$, there exists an explicit set $S$ such that each element in $S$ is a product of $\mathrm{NOT}$, $\mathrm{CNOT}$, and Toffoli gates with circuit depth $\cO(1)$ in a one-dimensional nearest-neighbor system, and
\begin{equation}
g(\mu(S), \tau, \Alt(2^{\ell})) = 1-\Omega(1),
\end{equation}
where $\tau: \pi\mapsto P^{\otimes k}(\pi)$ and $P(\pi)\ket{z} = \ket{\pi(z)}$ for any $z\in \{0, 1\}^{\ell}$ and $\pi\in\Sym(2^{\ell})$.
\end{lemma}

We leave the proof of Lemma~\ref{lem:1d-kassabov} to the end of this section. Besides the alternating group, we  must also construct constant depth 1D-local generators for the Clifford group. This construction is given in the following Lemma:
\begin{lemma}[Constant-depth 1D Clifford generators]\label{lem:1d-clifford}
For any integer $\ell\geq 1$, there exists a constant-size symmetric generating set $S_{\Cl,\ell} \subset \Cl_{\ell}$ such that each element in $S_{\Cl,\ell}$ is a product of single-qubit Pauli, Hadamard, phase, $\mathrm{CZ}$, and $\mathrm{CNOT}$ gates with circuit depth $\cO(1)$ in a one-dimensional nearest-neighbor system, and
\begin{equation}
    g(\mu(S_{\Cl,\ell}), \rho, \Cl_{\ell}) = 1-\Omega(1),
\end{equation}
for any finite-dimensional unitary representation $\rho$ of $\Cl_{\ell}$ with no trivial subrepresentation.
\end{lemma}

The proof of Lemma~\ref{lem:1d-clifford} is presented in next section (section \ref{sc:clif}), where we start from constructing the generating set for the binary symplectic group and extend it to the Clifford group.\\

We now apply Lemma~\ref{lem:1d-kassabov} and Lemma~\ref{lem:1d-clifford} to prove Theorem~\ref{thm:main-cpzpc}. First, we notice that the representation $P^{\otimes k}(\pi)\otimes \overline{P}^{\otimes k}(\pi) = P^{\otimes 2k}(\pi)$ since this representation is real. Thus, from the spectral gap guarantee in Lemma~\ref{lem:1d-kassabov}, as long as $2k\leq \cO(2^{\ell/6.1})$, then there exists a positive constant $\Delta_P = \Omega(1)$, such that
\begin{equation}
\Vert M_k(\nu_{P}^{*r}) - M_k(\mu(\Alt(2^{\ell})))\Vert_{\infty} \leq (1-\Delta_P)^r.
\end{equation}
Similarly, from Lemma~\ref{lem:1d-clifford}, there exists a positive constant $\Delta_{\Cl} = \Omega(1)$ such that
\begin{equation}
\Vert M_k(\nu_{\Cl}^{*r}) - M_k(\mu(\Cl_{\ell}))\Vert_{\infty} \leq (1-\Delta_{\Cl})^r.
\end{equation}
Thus, for $k \leq \cO(2^{\ell/6.1})$, by triangle inequality, we bound the difference between the implemented generating set $\widetilde{\nu}_{\CPZPC,\ell}=\nu_{\Cl,\ell}^{*r_C}*\nu_{P,\ell}^{*r_P}*\delta_Z*\nu_{P,\ell}^{*r_P}*\nu_{\Cl,\ell}^{*r_C}$ and the ideal CPZPC ensemble:
\begin{equation}
    \Vert M_k(\widetilde\nu_{\CPZPC,\ell}) - M_k(\nu_{\CPZPC})\Vert_{\infty} 
    \leq 2(1-\Delta_{\Cl})^{r_C} + 2(1-\Delta_P)^{r_P}.
\end{equation}
Because $\Delta_{\Cl}$ and $\Delta_P$ are strictly positive constants, we can choose fixed integers $r_C = \cO(1)$ and $r_P = \cO(1)$, independent of $\ell$, $k$, and the design precision $\varepsilon$, sufficiently large such that:
\begin{equation}
    2(1-\Delta_{\Cl})^{r_C} + 2(1-\Delta_P)^{r_P} \leq \frac{1}{4}.
\end{equation}
For the ideal CPZPC ensemble, it is guaranteed that the essential norm is bounded by $\cO(k/2^{\ell/2})$~\cite{chen2024incompressibilityspectralgapsrandom}. For sufficiently large $\ell$, this difference is strictly away from 1:
\begin{equation}
g(\nu_{\CPZPC},k)=\Vert M_k(\nu_{\CPZPC})-P_H\Vert_\infty\leq \frac{1}{4}.
\end{equation}
Applying the triangle inequality, the essential norm for $\widetilde{\nu}_{\CPZPC,\ell}$ is bounded by:
\begin{equation}
\begin{split}
g(\widetilde\nu_{\CPZPC,\ell},k) &= 
\Vert M_k(\widetilde\nu_{\CPZPC,\ell})-P_H\Vert_\infty\\
&\leq \Vert M_k(\nu_{\CPZPC})-P_H\Vert_\infty + \Vert M_k(\widetilde\nu_{\CPZPC,\ell}) - M_k(\nu_{\CPZPC})\Vert_\infty\\
&\leq \frac{1}{4} + \frac{1}{4} < \frac{1}{2}.
\end{split}
\end{equation}
Thus, the spectral gap of the implemented CPZPC generator circuit satisfies $\Delta_{\widetilde\nu_{\CPZPC}} \geq 1 - 1/2 = 1/2$. We have proved property $2$ of Theorem~\ref{thm:main-cpzpc}.\\

We now evaluate the physical circuit complexity. Each sample drawn from $\widetilde\nu_{\CPZPC,\ell}$ consists of $2 r_C$ Clifford generators, $2 r_P$ Kassabov permutation generators, and one Pauli $Z$ gate. As proved in Lemmas~\ref{lem:1d-kassabov} and~\ref{lem:1d-clifford}, every individual generator admits a 1D nearest-neighbor circuit of depth $\cO(1)$. Because $r_C$ and $r_P$ are positive constants, the entire circuit requires a 1D circuit depth of:
\begin{equation}
2 r_C \cdot \cO(1) + 2 r_P \cdot \cO(1) + 1 = \cO(1).
\end{equation}
This establishes property 1 of Theorem~\ref{thm:main-cpzpc}. Following a standard spectral-gap-to-relative-design reduction or Lemma~\ref{lemma:gap-to-design}, we get an $\ell$-qubit $\varepsilon''$-relative-error approximate unitary $k$-design at depth $\cO\left(\ell k+\log \frac{1}{\varepsilon''}\right)$.
\end{proof}
We end this section by providing a proof of Lemma~\ref{lem:1d-kassabov}:

\begin{proof}[Proof of Lemma~\ref{lem:1d-kassabov}]
We first discuss the case that $\ell \bmod 18 = 0$ and set $\ell = 18s$. In this case, the authors of Ref.~\cite{chen2024incompressibilityspectralgapsrandom} have already constructed a generating set $S$ with a constant spectral gap, i.e. $g(\mu(S), \tau, \Alt(2^{\ell})) = 1-\Omega(1)$. This set is given as
\begin{equation}
S = \{xyz: x,z\in P_X, y\in S_0\},
\end{equation}
where $P_X = \{I, X\}^{\otimes \ell}$ is the set of all bit-flip operators, and $S_0$ is the set of Kassabov's generators of alternating group $\Alt(K_s)$ with $K_s = (\mathbb F_2^{3s}\setminus\{0^{3s}\})^6 \subseteq (\mathbb F_2^{3s})^6$. Note that $P_X$ only contains local operators, so any operator within this set can be realized in depth $1$. It remains to show all of Kassabov's generators for the alternating group in $S_0$ can be realized in constant depth in 1D. Below, we give a review of Kassabov's generators of $\Alt(K_s)$. We start with defining the qubit layout to realize these generators. The total $\ell$ qubits have already been divided into $18$ patches with each patch having $s$ qubits. The computational basis is hence a bit string belonging to $\F_2^{18s}$. We write an input string as six registers
\begin{equation}
(z_1,z_2,z_3,z_4,z_5,z_6),
\quad
z_r\in\mathbb F_2^{3s}.
\end{equation}
We further split each $3s$-bit register into three $s$-bit blocks:
\begin{equation}\label{eq:kassbovbitlabel}
z_r=
(z_{r,1,1},\ldots,z_{r,1,s},
 z_{r,2,1},\ldots,z_{r,2,s},
 z_{r,3,1},\ldots,z_{r,3,s}).
\end{equation}
Here $r\in\{1,\ldots,6\}$ labels the register, $\alpha\in\{1,2,3\}$ labels the block inside the register, and $a\in\{1,\ldots,s\}$ labels the coordinate inside an $s$-bit block. Next, we group all qubits with the same coordinate and define the coordinate cell
\begin{equation}
C_a=\{z_{r,\alpha,a}:1\leq r\leq 6,\ 1\leq \alpha\leq 3\}.
\end{equation}
Each cell contains exactly $6\times 3=18$ qubits, which is a constant. Next we lightly deviate from \cite{chen2024incompressibilityspectralgapsrandom}. We place the cells on a one-dimensional line in a \emph{folded order}:
\begin{equation}
\label{eq:kassabov-folded-order}
C_1,\ C_s,\ C_2,\ C_{s-1},\ C_3,\ C_{s-2},\ldots .
\end{equation}
This ordering has the following useful property: every cyclic-neighbor pair $C_a,C_{a+1\pmod s}$ is separated by at most one cell and interacts with at most 54 qubits. Therefore, a gate whose inputs are contained in one cell, or in two cyclic-neighbor cells, can be routed in constant nearest-neighbor depth in a one-dimensional system. We summarize and generalize this observation below and depict the diagram of the folded order in Fig.~\ref{fig:kassabov-folded-cells}. We now make the following observation:

\begin{observation}[Constant-size local routing]\label{observation:constant-routing}
Suppose a layer of constant-local gates, including CNOT and Toffoli gates, has the following property: every gate is supported either inside one cell $C_a$ or inside a constant number of neighboring cells, and each cell participates in only $\cO(1)$ gates. Then the whole layer has one-dimensional nearest-neighbor depth $\cO(1)$.
\end{observation}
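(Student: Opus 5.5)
The plan is to localize and then route. First localize the layer so each gate sits in a bounded stretch of the line, then route that stretch with constant-depth SWAP networks. The layout places the cells $C_a$ consecutively on the line, each occupying a contiguous segment of $|C_a|=\cO(1)$ qubits (18 in the application). If a gate is supported inside at most $c=\cO(1)$ neighboring cells, its support lies in a window of at most $c$ consecutive cells, hence within $\cO(1)$ consecutive qubits. Here ``neighboring'' is taken to mean within bounded distance in the layout. For cyclic neighbors $C_a,C_{a+1 \bmod s}$ in the folded order this distance is at most two cell positions.

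Next I would handle a single gate $g$ supported in a window $W$ of $w=\cO(1)$ consecutive qubits. Apply a nearest-neighbor SWAP network inside $W$ that brings the (at most three) qubits of $g$ to adjacent positions. Since $W$ has constant length, this takes depth at most $w$ (odd--even transposition sort). Then apply $g$ as a constant-depth nearest-neighbor circuit; a Toffoli on three adjacent qubits has constant nearest-neighbor depth. Finally undo the SWAP network. The total depth is $\cO(w)=\cO(1)$, and the circuit acts only on $W$.

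The main step, and the expected obstacle, is parallelizing all gates of the layer. Gates whose windows overlap cannot be run simultaneously. To control this, build a conflict graph whose vertices are the gates, with an edge between two gates whose windows intersect. Each cell participates in $\cO(1)$ gates, and each window contains $\cO(1)$ cells. Every cell of a window lies in the windows of only $\cO(1)$ gates, because a gate's window contains a cell only if the gate touches a cell within distance $c$. Hence the conflict graph has maximum degree $\Delta=\cO(1)$. A greedy coloring partitions the gates into $\Delta+1=\cO(1)$ classes with pairwise disjoint windows. Within each class, all localized implementations act on disjoint qubit sets and run in parallel with depth $\cO(1)$. Running the classes sequentially gives total depth $(\Delta+1)\cdot\cO(1)=\cO(1)$.

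It remains to check that sequential execution reproduces the layer. The gates of a ``layer'' are meant to be applied in some fixed order. If they commute (for instance they have disjoint supports, as in the intended use), any order works. Otherwise I would first split the layer into its constant number of sublayers of disjoint-support gates and apply the argument to each, which still gives depth $\cO(1)$. Because every SWAP network is undone after use, each qubit ends in its original position, so the composed circuit equals the original layer exactly.
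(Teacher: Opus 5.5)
Your proposal is correct and follows essentially the paper's own argument: confine each gate to a constant-size window of nearby cells, implement it there by a constant-depth SWAP-in/gate/SWAP-out network, and use the bounded degree of the conflict (interaction) graph to color the gates into $\cO(1)$ batches of disjoint windows that run in parallel. The one loose point is your fallback for non-commuting overlapping gates: it is not needed, since a layer here means gates on disjoint supports, as in all of Kassabov's generator layers. It would also fail for an arbitrary prescribed order, because e.g.\ an ordered chain of CNOTs from $C_a$ to $C_{a+1}$, $a=1,2,\ldots$, computes prefix parities and needs depth $\Omega(s)$.
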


\begin{proof}
A constant number of neighboring cells contains only $\cO(1)$ bits, because each cell contains $18$ bits.  Inside such a constant-size window, we can use a constant number of SWAP gates to bring the relevant two or three bits next to each other, apply the CNOT or Toffoli gate, and then undo the SWAPs. The interaction graph between cells has bounded degree, so the gates can be colored into $\cO(1)$ non-overlapping batches. Gates in the same batch act on disjoint constant-size windows and can be run in parallel. Hence the depth is $\cO(1)$.
\end{proof}

\begin{figure}[!t]
\centering
\begin{tikzpicture}[
    x=1.0cm,y=0.8cm,
    cell/.style={draw,rounded corners,minimum width=0.9cm,minimum height=0.55cm,fill=gray!10},
    arr/.style={-{Latex},thick,blue}
]
\foreach \i/\lab in {0/$C_1$,1/$C_s$,2/$C_2$,3/$C_{s-1}$,4/$C_{3}$,5/$C_{s-2}$,6/$C_{4}$,7/$\cdots$}{
  \node[cell] (c\i) at (\i,0) {\lab};
}
\draw[decorate,decoration={brace,amplitude=4pt}] (5.55,0.55) -- (6.45,0.55)
node[midway,above=4pt] {$18$ bits};
\draw[arr,bend left=35] (c0.north) to node[above] {$C_1\sim C_2$} (c2.north);
\draw[arr,bend left=35] (c2.north) to node[above] {$C_2\sim C_3$} (c4.north);
\draw[arr,bend left=35] (c1.south) to node[below] {$C_s\sim C_1$} (c0.south);
\draw[arr,bend left=35] (c5.south) to node[below] {$C_{s-2}\sim C_{s-1}$} (c3.south);
\end{tikzpicture}
\caption{Folded cell order.  The cyclic shift matrix used by Kassabov couples $C_a$ only to $C_{a+1\pmod s}$.  In this ordering, those cyclic neighbors are constant-distance neighbors on an open one-dimensional line.}
\label{fig:kassabov-folded-cells}
\end{figure}

Now we introduce Kassabov's generators for the alternating group and discuss their realization. We refer the reader to Section 5.1 in Ref.~\cite{chen2024incompressibilityspectralgapsrandom} for the mathematical details of the construction. The alternating group $\Alt(K_s)$ is generated by the union set~\cite{Kassabov2007symmetric},
\begin{equation}\label{eq:alternating_union}
\bigcup_{i=1}^{6}\pi_i(\Gamma_0).
\end{equation}
Here, $\Gamma_0$ is the generating set of $\Gamma = \SL(3s; \F_2)^{\times K^5}$ with $K = 2^{3s}-1$, and $\pi_i$ is the embedding map from the (direct product) special linear group $\Gamma$ to the alternating group $\Alt(K_s)$. $\SL(m;\F_2)$ represents the special linear group on $m\times m$ matrices with field $\F_2$. Recall that the input is written as $(z_1,z_2,z_3,z_4,z_5,z_6)\in (\F_2^{3s})^{\times 6}$. We denote $z_{\Bar{i}} = (z_1,\cdots z_{i-1}, z_{i+1}, \cdots z_6)$. For any element $g = (g_w)_{w\in (\F_2^{3s}\setminus\{0^{3s}\})^{\times 5}}\in \Gamma$, $g_w\in \SL(3s; \F_2)$, the map $\pi_i$ satisfies
\begin{equation}
\pi_i(g)(z_1,z_2,z_3,z_4,z_5,z_6) = (z_1,\cdots z_{i-1}, g_{z_{\Bar{i}}}z_i, z_{i+1}, \cdots z_6).
\end{equation}
Note that $\SL(3s;\mathbb F_2)$ naturally permutes $\F_2^{3s}\setminus\{0^{3s}\}$, and the induced permutation is even. Thus, $\pi_i(g)$ is a valid alternating group element in $\Alt(K_s)$. Thanks to Eq.~\eqref{eq:alternating_union}, we only need to focus on the generating set $\Gamma_0$ for $\Gamma = \SL(3s; \F_2)^{\times K^5}$ and its mapping $\pi_1(\Gamma_0)$. Note that the $\pi_i$ are symmetric for different $i$, so we only need to focus on the realization of $\pi_1(\Gamma_0)$.\\

In Ref.~\cite{chen2024incompressibilityspectralgapsrandom}, the authors actually discuss the generating set for a larger group $\SL(3s; \F_2)^{\times 2^{15s}}$ instead of $\SL(3s; \F_2)^{\times K^5}$, which still covers the target group. We follow this approach here. To construct the generating set for $\SL(3s; \F_2)^{\times 2^{15s}}$, the key point is to utilize the following isomorphisms:
\begin{equation}
\SL(3s; \F_2)^{\times 2^{15s}} = \EL(3s; \F_2)^{\times 2^{15s}}\cong \EL(3s; \F_2^{\times 2^{15s}}),
\end{equation}
and the following surjective group homomorphisms,
\begin{equation}\label{eq:homomorphism}
\begin{split}
&\EL(3;\mathbb{Z}\langle a,b,x_1, \cdots, x_{15}\rangle)\\
\xrightarrow{\ \varphi_1\ }&\EL(3;\Mat(s;\F_2)[x_{1,1},\ldots,x_{15,s}]) = \EL(3s;\F_2[x_{1,1},\ldots,x_{15,s}])\\
\xrightarrow{\ \varphi_2\ }&\EL(3s; \F_2^{\times 2^{15s}}).
\end{split}
\end{equation}
Here, $\EL(m;R)$ is the multiplicative matrix group generated by all elementary matrices $E_{i,j}(r)=I_m + re_{i,j}$, with $1\leq i\neq j\leq m$, and $r$ taken from a unital ring $R$. It satisfies $\EL(m;R)^{\times k}\cong \EL(m;R^{\times k})$. When $R$ is a field, we have that $\EL(m;R) = \SL(m;R)$. The element $x_c = \diag(x_{c,1}, \cdots, x_{c,s})$ contains $s$ variables. The equation in the second line is from Lemma~5.10 in Ref.~\cite{chen2024incompressibilityspectralgapsrandom}, which is listed below.
\begin{lemma}[Lemma 5.10 in Ref.~\cite{chen2024incompressibilityspectralgapsrandom}]
Let $R$ be any unital ring that may be noncommutative. Then, for any $n,m\geq 2$ we have
\begin{equation}
\EL(n;\Mat(m;R)) = \EL(nm;R).
\end{equation}
\end{lemma}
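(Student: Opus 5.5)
The plan is to identify $\Mat(n;\Mat(m;R))$ with $\Mat(nm;R)$ by the standard block identification, and then prove the two inclusions $\EL(n;\Mat(m;R))\subseteq\EL(nm;R)$ and $\EL(nm;R)\subseteq\EL(n;\Mat(m;R))$ separately. Under this identification, the row/column index $(i,p)$ with block index $i\in\{1,\ldots,n\}$ and inner index $p\in\{1,\ldots,m\}$ corresponds to the single index $(i-1)m+p$. We write $e_{(i,p),(j,q)}$ for the matrix unit of $\Mat(nm;R)$ at that position, and $E_{(i,p),(j,q)}(r)=I_{nm}+r\,e_{(i,p),(j,q)}$. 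Since both groups are generated by elementary matrices, it suffices to show that each generator of one group is a product of generators of the other.

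\textbf{First inclusion.} Take a block-elementary matrix $E_{i,j}(A)$ with $i\neq j$ and $A=\sum_{p,q}a_{pq}e_{pq}\in\Mat(m;R)$. Then
\begin{equation}
E_{i,j}(A)=I_{nm}+\sum_{p,q}a_{pq}\,e_{(i,p),(j,q)}.
\end{equation}
I will check that the nilpotent parts $N_{pq}=a_{pq}e_{(i,p),(j,q)}$ multiply to zero pairwise, in either order. This holds because the product $e_{(i,p),(j,q)}e_{(i,p'),(j,q')}$ requires $(j,q)=(i,p')$, which is impossible since $j\neq i$. Hence
\begin{equation}
\prod_{p,q}\left(I+N_{pq}\right)=I+\sum_{p,q}N_{pq},
\end{equation}
and no commutativity of $R$ is needed for this, because all cross terms vanish. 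This expresses $E_{i,j}(A)$ as a product of the elementary matrices $E_{(i,p),(j,q)}(a_{pq})$ of $\EL(nm;R)$, each with distinct row and column indices.

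\textbf{Second inclusion.} Take a generator $E_{(i,p),(j,q)}(r)$ with $(i,p)\neq(j,q)$, and split into two cases.
\begin{itemize}
\item If $i\neq j$, the generator equals $E_{i,j}(r\,e_{pq})$, which is a block-elementary matrix over $\Mat(m;R)$.
\item If $i=j$ and $p\neq q$, the generator lies inside a diagonal block and cannot be written directly. Here I use $n\geq2$ to choose a block index $k\neq i$, and apply the Steinberg commutator relation
\begin{equation}
\left[E_{a,b}(x),E_{b,c}(y)\right]=E_{a,c}(xy),
\end{equation}
valid over any unital, possibly noncommutative, ring for pairwise distinct indices $a,b,c$, with the commutator taken as $[X,Y]=XYX^{-1}Y^{-1}$ and $E_{a,b}(x)^{-1}=E_{a,b}(-x)$. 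With $a=(i,p)$, $b=(k,1)$, $c=(i,q)$, this gives
\begin{equation}
E_{(i,p),(i,q)}(r)=\left[E_{(i,p),(k,1)}(r),\,E_{(k,1),(i,q)}(1)\right].
\end{equation}
Both factors on the right are block-off-diagonal, so by the first case they lie in $\EL(n;\Mat(m;R))$.
\end{itemize}

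The only step requiring care is this diagonal-block case. There I will verify the commutator identity by direct expansion: all quadratic terms other than $xy\,e_{a,c}$ vanish because the indices are distinct, and I will keep the order of $x$ and $y$ fixed, since $R$ may be noncommutative. Note that the hypothesis $m\geq2$ is not actually needed in this argument; only $n\geq2$ is used.
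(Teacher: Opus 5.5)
Your proof is correct and complete. Block-off-diagonal elementary matrices over $\Mat(m;R)$ factor into the mutually annihilating scalar elementary matrices $E_{(i,p),(j,q)}(a_{pq})$, and diagonal-block ones are recovered as the commutator $[E_{(i,p),(k,1)}(r),E_{(k,1),(i,q)}(1)]$ through a second block $k\neq i$, with the order of ring elements respected throughout.

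The paper gives no proof of this lemma; it only cites it from Chen et al., so there is no in-paper argument to compare against. Your argument is the standard proof of this statement. Your remark that only $n\geq 2$ is used is also right, since the case $m=1$ is trivial.
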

The homomorphism $\varphi_1$ satisfies
\begin{align}
\varphi_1(1) &= I_s = \begin{pmatrix}
1 & 0 & \cdots & 0\\
0 & 1 & \cdots & 0\\
\vdots & \vdots & \ddots & \vdots\\
0 & 0 & \cdots & 1
\end{pmatrix}, \tag{Identity matrix} \\
\varphi_1(a) &= A = \begin{pmatrix}
0 & 0 & 0 &\cdots & 1\\
1 & 0 & 0 & \cdots & 0\\
0 & 1 & 0 & \cdots & 0\\
\vdots & \vdots & \vdots & \ddots & \vdots\\
0 & 0 & 0 & \cdots & 0
\end{pmatrix}, \tag{Cyclic shift matrix, $A e_a = e_{a+1\pmod s}$} \\
\varphi_1(b) &= B = E_{11} = \begin{pmatrix}
1 & 0 & \cdots & 0\\
0 & 0 & \cdots & 0\\
\vdots & \vdots & \ddots & \vdots\\
0 & 0 & \cdots & 0
\end{pmatrix}, \tag{Single non-zero entry at top-left} \\
\varphi_1(x_c) &=  \operatorname{diag}(x_{c,1}, \ldots, x_{c,s}) = \begin{pmatrix}
x_{c,1} & 0 & \cdots & 0\\
0 & x_{c,2} & \cdots & 0\\
\vdots & \vdots & \ddots & \vdots\\
0 & 0 & \cdots & x_{c,s}
\end{pmatrix}. \tag{Diagonal matrix controlled by $x_c$}
\end{align}
and the homomorphism $\varphi_2$ satisfies:
\begin{equation}
\begin{split}
\varphi_2: \F_2[x_{1,1},\ldots,x_{15,s}]
\rightarrow
\prod_{x\in\F_2^{15s}}\F_2
\cong
\F_2^{\times 2^{15s}},\quad
\ker \varphi_2 = \left(x_{c,a}^2-x_{c,a}\right)_{\substack{1\le c\le 15\\1\le a\le s}}.
\end{split}
\end{equation}
Note that $A$ and $B$ generate $\Mat(s; \F_2)$, including all elementary matrices $E_{ij}$. Then, with multiplication in the form of $E_{ii}\varphi_1(x_c)E_{ii}$, one can get any element $x_{c,i}$. This is why $\varphi_1$ is a surjective group homomorphism from $\mathbb{Z}\langle a,b,x_1, \cdots, x_{15}\rangle)$ to $\EL(3;\Mat(s;\F_2)[x_{1,1},\ldots,x_{15,s}])$. The map $\varphi_2$ is a canonical ring homomorphism from $\F_2[x_{1,1},\ldots,x_{15,s}]$ onto a quotient ring. Its function is to transform the variable $x_{c,a}$ into a bit (i.e. an element of $\F_2$) by evaluation. By definition, $\varphi_2$ acts trivially on $A$ and $B$. That is, $\varphi_2(A) = A$, and $\varphi_2(B) = B$.\\

Meanwhile, we have the following lemma to generate $\EL(3; R)$ where $R$ is a free ring.
\begin{lemma}[Lemma 5.11 in Ref.~\cite{chen2024incompressibilityspectralgapsrandom}]\label{lem:ELgenerator}
An elementary matrix group over a free ring with $l$ generators, $\EL(3;\mathbb{Z}\langle w_1,\ldots,w_l\rangle)$, is generated by the following $4l+6$ elements and their inverses:
\begin{equation}
E_{\alpha,\beta}(1)\quad (\alpha\neq \beta\in \{1, 2, 3\}),\quad
E_{1,2}(w_j),\ E_{2,3}(w_j),\ E_{2,1}(w_j),\ E_{3,2}(w_j),\quad 1\leq j\leq l.
\end{equation}
Note that $E_{\alpha,\beta}(r)=I_s + re_{\alpha,\beta}$.
\end{lemma}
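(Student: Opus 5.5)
The plan is the standard Steinberg-commutator argument. Let $G$ be the subgroup of $\EL(3;R)$, $R=\mathbb{Z}\langle w_1,\ldots,w_l\rangle$, generated by the listed $4l+6$ elements and their inverses. For each ordered pair $\alpha\neq\beta$ define
\begin{equation}
R_{\alpha\beta}:=\{r\in R:\ E_{\alpha,\beta}(r)\in G\}.
\end{equation}
Since $\EL(3;R)$ is by definition generated by all $E_{\alpha,\beta}(r)$, it suffices to show $R_{\alpha\beta}=R$ for every pair.

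The first step is additive closure. From $E_{\alpha,\beta}(r)E_{\alpha,\beta}(s)=E_{\alpha,\beta}(r+s)$ and $E_{\alpha,\beta}(r)^{-1}=E_{\alpha,\beta}(-r)$, each $R_{\alpha\beta}$ is an additive subgroup of $R$. It contains $1$ for all six pairs because the $E_{\alpha,\beta}(1)$ are generators.

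The second step is the commutator relation. For pairwise distinct $i,j,k$, a direct $3\times 3$ computation gives
\begin{equation}
E_{i,j}(r)\,E_{j,k}(s)\,E_{i,j}(r)^{-1}E_{j,k}(s)^{-1}=E_{i,k}(rs).
\end{equation}
Hence $R_{ij}R_{jk}\subseteq R_{ik}$. Taking $s=1$ gives $R_{ij}\subseteq R_{ik}$, and taking $r=1$ gives $R_{jk}\subseteq R_{ik}$. Applying this to all orderings of $\{1,2,3\}$ yields a cycle of inclusions. For example, $(i,j,k)=(1,2,3)$ gives $R_{12}\subseteq R_{13}$, and $(1,3,2)$ gives $R_{13}\subseteq R_{12}$. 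Continuing in this way shows that all six sets coincide with a common $R_0$, and then $R_0R_0\subseteq R_0$. So $R_0$ is a subring of $R$ containing $1$.

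The third step is to check that each $w_j$ lies in $R_0$. This is immediate, since for instance $E_{1,2}(w_j)$ is a generator, so $w_j\in R_{12}=R_0$. The only care needed is that the generators omit positions $(1,3)$ and $(3,1)$. These are recovered via $[E_{1,2}(w_j),E_{2,3}(1)]=E_{1,3}(w_j)$ and $[E_{3,2}(w_j),E_{2,1}(1)]=E_{3,1}(w_j)$, which is already covered by the equality of all $R_{\alpha\beta}$. Since $R$ is generated as a ring by $1,w_1,\ldots,w_l$, it follows that $R_0=R$ and hence $G=\EL(3;R)$. The count is $6$ elements $E_{\alpha,\beta}(1)$ plus $4l$ elements with $w_j$ entries.

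The only mildly delicate point is noncommutativity. The order of factors in the commutator must be chosen so that the product appears as $rs$, with $r$ from position $(i,j)$ and $s$ from position $(j,k)$. Closure under products in both orders then follows because $R_0$ is the same set for all positions. No other obstacle is expected.
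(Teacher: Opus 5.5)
Your proof is correct. The paper gives no proof of its own and simply imports the lemma from Ref.~\cite{chen2024incompressibilityspectralgapsrandom}. Your argument is the standard one for this fact: additive closure of each $R_{\alpha\beta}$, the commutator identity $E_{i,j}(r)E_{j,k}(s)E_{i,j}(r)^{-1}E_{j,k}(s)^{-1}=E_{i,k}(rs)$ to show all six $R_{\alpha\beta}$ equal one unital subring, and then $w_j\in R_{12}$.

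Your proof actually shows more than the statement. The six elements $E_{\alpha,\beta}(1)$ together with only the $l$ elements $E_{1,2}(w_j)$ already generate $\EL(3;\mathbb{Z}\langle w_1,\ldots,w_l\rangle)$. The extra generators in the stated $4l+6$ set are not needed for generation; they are kept to match Kassabov's set used for the Kazhdan-constant bound.
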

This means the generating set of $\SL(3s; \F_2)^{\times 2^{15s}}$ is given by
\begin{equation}
\begin{split}
\Gamma_0 = &\bigg\{(\varphi_2\circ \varphi_1)(E_{\alpha,\beta}(1)), (\varphi_2\circ \varphi_1)(E_{\alpha',\beta'}(a)), (\varphi_2\circ \varphi_1)(E_{\alpha',\beta'}(b)), (\varphi_2\circ \varphi_1)(E_{\alpha',\beta'}(x_c))\;\|\\ 
&\hspace{10em}\alpha\neq\beta\in \{1,2,3\}, (\alpha',\beta')\in \{(1,2), (2,3), (2,1), (3,2)\} \bigg\}.
\end{split}
\end{equation}
Below, we discuss the circuit realization of all generators. This can be obtained by studying the action of $\pi_1(\Gamma_0)$ on the computational basis. The results have already been given in Section 5.2 in Ref.~\cite{chen2024incompressibilityspectralgapsrandom}. Below, we list the results and show that they can be realized in constant depth in a 1D circuit with open boundary conditions. To find the circuit corresponding to $(\pi_1\circ\varphi_2\circ\varphi_1)(E_{\alpha, \beta}(r))$, we look at its action on $z_1 = (z_{1,1,1},\ldots,z_{1,1,s},
 z_{1,2,1},\ldots,z_{1,2,s},
 z_{1,3,1},\ldots,z_{1,3,s})$. We further denote $y_{\alpha,a}=z_{1, \alpha, a}$, and define
\begin{equation}
y_{\alpha} = (y_{\alpha, 1}, y_{\alpha, 2}, \cdots, y_{\alpha, s})^T.
\end{equation}
For embedding map $\pi_1$, the first register acts as the target register, and the remaining five registers act as controls. They provide $5 \times 3s = 15s$ control bits, corresponding to
\begin{equation}
x_{c,a}, \quad c\in\{1,\ldots,15\}, \quad a\in\{1,\ldots,s\},
\end{equation}
which appear in Eq.~\eqref{eq:homomorphism}. In fact, $x_{c,a}$ is $z_{r_c, \alpha, a}$ defined in Eq.~\eqref{eq:kassbovbitlabel} with $r_c\neq 1$ and $1\leq \alpha\leq 3$. Crucially, for any fixed coordinate $a$, the target bits $y_{\alpha,a}$ and the control bits $x_{c,a}$ are all localized within the same physical cell $C_a$. On the input bits $(y_1, y_2, y_3)$, the action of $R = (\varphi_2\circ\varphi_1)(E_{\alpha, \beta}(r))$ is 
\begin{equation}
y_{\alpha}\mapsto y_{\alpha} \oplus (\varphi_2\circ\varphi_1)(r) y_{\beta}.
\end{equation}
This corresponds to a CNOT circuit between qubits $y_{\alpha}$ and $y_{\beta}$. We now analyze its 1D realization for $r$ chosen from each of the four types $\{1, a, b, x_c\}$. We demonstrate that every generator type satisfies the conditions of Observation~\ref{observation:constant-routing}, thus achieving $\cO(1)$ depth.
\begin{enumerate}
\item When $r=1$, the circuit is composed of bitwise CNOTs between blocks:
\begin{equation}\label{eq:slcircuit1}
y_{\alpha,a} \mapsto y_{\alpha,a} \oplus y_{\beta,a}, \quad a=1,\ldots,s.
\end{equation}
This is a parallel layer of CNOT gates $\mathrm{CNOT}_{y_{\beta,a}\to y_{\alpha,a}}$. Since both the control $y_{\beta,a}$ and the target $y_{\alpha,a}$ share the exact same coordinate $a$, the CNOT gate is completely contained inside the single cell $C_a$. Different values of $a$ use disjoint cells, allowing full parallelization. Thus, the layer has a 1D depth of $\cO(1)$.

\item When $r=a$, the transformation offsets the target coordinate by $1$:
\begin{equation}\label{eq:slcircuita}
y_{\alpha,a+1} \mapsto y_{\alpha,a+1} \oplus y_{\beta,a}, \quad a=1,\ldots,s,
\end{equation}
with the coordinate $a+1$ interpreted modulo $s$. Circuit-wise, this applies a CNOT from a bit in cell $C_a$ to a bit in cell $C_{a+1\pmod s}$. These are the only non-intracell gates in Kassabov's generators. However, by the folded ordering defined in Eq.~\eqref{eq:kassabov-folded-order}, the cyclic-neighbor cells $C_a$ and $C_{a+1}$ are at a physical distance of at most three cells (interacting maximum 54 qubits). Since each cell participates in only $\cO(1)$ such bounded-range interactions, Observation~\ref{observation:constant-routing} guarantees the whole layer has a 1D depth of $\cO(1)$.

\item When $r=b$, the generator applies only a single CNOT gate at the first coordinate:
\begin{equation}\label{eq:slcircuitb}
y_{\alpha,1} \mapsto y_{\alpha,1} \oplus y_{\beta,1}.
\end{equation}
Both bits lie strictly inside $C_1$, and no other cells are affected. This trivially has depth $\cO(1)$.

\item When $r=x_c$, the transformation acts as a diagonal matrix of the external control bits $x_{c,a}$. The operation is:
\begin{equation}
y_{\alpha,a} \mapsto y_{\alpha,a} \oplus (x_{c,a} \cdot y_{\beta,a}), \quad a=1,\ldots,s.
\end{equation}
This translates to a parallel layer of Toffoli gates $\mathrm{Toffoli}_{x_{c,a},y_{\beta,a}\to y_{\alpha,a}}$. The critical observation is that the two controls ($x_{c,a}, y_{\beta,a}$) and the target ($y_{\alpha,a}$) all share the same coordinate $a$. Therefore, the entire Toffoli operation is contained strictly inside the single cell $C_a$. Because the operations for different $a$ are disjoint, the full Toffoli layer has a 1D depth of $\cO(1)$.
\end{enumerate}
Consequently, every generator $y \in S_0$ possesses a 1D nearest-neighbor depth of $\cO(1)$. Recall that $P_X = \{I,X\}^{\otimes \ell}$, and the full generating set in the case where $\ell \bmod 18 = 0$ is
\begin{equation}
S = \{xyz : x,z\in P_X, \ y\in S_0\}.
\end{equation}
Every composite generator $xyz \in S$ has a total 1D depth of $1 + \cO(1) + 1 = \cO(1)$. When $\ell$ is not a multiple of $18$, we write $\ell = 18s + \ell_0$, where $0 < \ell_0 \leq 17$. If $\ell < 18$, the system size is constant, and any gate has $\cO(1)$ depth. For $s \geq 1$, we partition the $\ell$ qubits into three contiguous intervals:
\begin{equation}
A \sqcup B \sqcup C, \quad |A|=|C|=\ell_0, \quad |B|=\ell-2\ell_0.
\end{equation}
This partition defines two overlapping subsets $A \sqcup B$ and $B \sqcup C$, both of size $18s$. Let $\nu_{AB}$ be the generator distribution from the $18s$-bit construction applied exclusively to $A \sqcup B$, which acts trivially on $C$, and let $\nu_{BC}$ be the equivalent distribution applied to $B \sqcup C$. Then alternating these localized distributions yields a constant spectral gap on the global alternating group on $\ell$ qubits. This has already been proved with the permutation overlap theorem introduced in Ref.~\cite{chen2024incompressibilityspectralgapsrandom}, which is stated below.

\begin{lemma}[Theorem A.1 in Ref.~\cite{chen2024incompressibilityspectralgapsrandom}]\label{lem:overlap}
For finite sets $A$, $B$, and $C$, let $\mathbb{C}^{|A|} \otimes \mathbb{C}^{|B|} \otimes \mathbb{C}^{|C|}$ be the corresponding tensor product Hilbert space. Let $P(\pi)$ denote the permutation matrix associated with $\pi$. Then, the distance between the product of local permutations and the global permutation is bounded by:
\begin{equation}
\begin{split}
&\left\| \E_{\pi_{AB} \sim \mu(\Sym(A \times B))} P(\pi_{AB})^{\otimes k} \E_{\pi_{BC} \sim \mu(\Sym(B \times C))} P(\pi_{BC})^{\otimes k} - \E_{\pi_{ABC} \sim \mu(\Sym(A \times B \times C))} P(\pi_{ABC})^{\otimes k} \right\|_{\infty}\\
\leq &\cO\left( \frac{(k \log k + \log |B|)^3}{\sqrt{|B|}} \right),
\end{split}
\end{equation}
where $P(\pi_{AB}) \equiv P(\pi_{AB}) \otimes I_{C}$ and $P(\pi_{BC}) \equiv I_{A} \otimes P(\pi_{BC})$. Furthermore, because the $k$-th moments of the alternating group and the symmetric group match exactly for all $k \leq \min\{|A||B|-2, |B||C|-2\}$, the identical bound holds when replacing $\Sym$ with $\Alt$ in this order regime.
\end{lemma}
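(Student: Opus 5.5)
Since the statement is Theorem~A.1 of Ref.~\cite{chen2024incompressibilityspectralgapsrandom}, the plan is to reproduce the structure of that argument in projector language. Write $P_{AB}$, $P_{BC}$, $P_{ABC}$ for the three $k$-th moment operators. Each is the average of a unitary representation over a finite group, so each is an orthogonal projector onto the corresponding invariant subspace of $(\mathbb{C}^{A\times B\times C})^{\otimes k}$. Since $\Sym(A\times B)\times I_C$ and $I_A\times \Sym(B\times C)$ generate $\Sym(A\times B\times C)$, $\mathrm{ran}P_{AB}\cap\mathrm{ran}P_{BC}=\mathrm{ran}P_{ABC}$. Moreover $P_{ABC}$ is absorbed by both $P_{AB}$ and $P_{BC}$. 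Hence
\[
\|P_{AB}P_{BC}-P_{ABC}\|_\infty=\|(P_{AB}-P_{ABC})(P_{BC}-P_{ABC})\|_\infty ,
\]
which is the cosine of the Friedrichs angle between the two ranges. The task therefore becomes: for unit vectors $u\in\mathrm{ran}P_{AB}$ and $v\in\mathrm{ran}P_{BC}$, both orthogonal to $\mathrm{ran}P_{ABC}$, bound $|\langle u,v\rangle|$.

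\textbf{Explicit bases.} I would describe the ranges concretely. A basis of $\mathrm{ran}P_{AB}$ consists of normalized indicator vectors of $\Sym(A\times B)$-orbits on $(A\times B\times C)^k$. Such an orbit is labelled by a set partition $p$ of $[k]$, recording which copies share the same $(A,B)$-coordinate, together with a tuple $c\in C^k$ that is constant on blocks of $p$ where required. The basis of $\mathrm{ran}P_{BC}$ is defined the same way, with the roles of $A$ and $C$ exchanged. The overlap between an $AB$-orbit and a $BC$-orbit is a counting problem: it counts tuples of $B$-values compatible with both equality patterns. When the $B$-components of the $k$ copies are pairwise distinct, the two patterns essentially decouple and the overlap matches the one computed inside the $ABC$-invariant space. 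A deviation occurs only when $B$-coordinates collide, which happens with probability about $k^2/|B|$ under the uniform measure on each orbit.

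\textbf{Markov-operator view.} I would turn this into a norm bound by reading $P_{AB}P_{BC}$ as a Markov operator on $(A\times B\times C)^k$. It first resamples uniformly within the $BC$-orbit and then within the $AB$-orbit. I would then couple this with the single step of $P_{ABC}$, which resamples uniformly within the global orbit, i.e.\ the global equality pattern. Starting from any tuple, after the $BC$-resampling the $B$-coordinates are distinct across distinct $BC$-blocks with probability $1-\cO(k^2/|B|)$. On that event, the subsequent $AB$-resampling produces exactly the uniform distribution on the $ABC$-orbit. This gives an $\ell_1\to\ell_1$ (total variation) bound of order $k^2/|B|$ for the kernel difference, uniformly over starting points. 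To obtain an operator-norm bound, I would apply the Schur test to the symmetrized operator $(P_{AB}-P_{ABC})(P_{BC}-P_{ABC})$, interpolating between the $\ell_1$ and $\ell_\infty$ bounds. This interpolation is where I expect the loss from $1/|B|$ to $1/\sqrt{|B|}$.

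\textbf{Main obstacle.} Starting tuples with many $A$- or $C$-coincidences have orbits of very different sizes, so the uniform total-variation bound does not transfer cleanly to $\ell_2$. A naive Gram-matrix perturbation argument also fails, because the basis is indexed by set partitions, whose number is Bell-number large ($e^{\Theta(k\log k)}$). I would handle this by splitting the space according to the number of distinct $AB$- (resp.\ $BC$-) values. On the typical part I would use the coupling bound. On the atypical parts I would use a tail estimate: the weight of patterns with $j$ extra collisions decays like $(k^2/|B|)^j$, which is summed against the at most $e^{\cO(k\log k)}$ patterns. Choosing the truncation level $j\approx (k\log k+\log|B|)/\log|B|$ and paying polynomially in that level for each of the three interpolation, truncation and tail steps should produce the stated $\cO\bigl((k\log k+\log|B|)^3/\sqrt{|B|}\bigr)$. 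I expect this bookkeeping step to be the hardest.

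\textbf{Alternating groups.} The final sentence of the statement follows from Lemma~4.4 of Ref.~\cite{chen2024incompressibilityspectralgapsrandom}, quoted above. In the stated regime the $k$-th moments of $\Alt$ and $\Sym$ coincide on each of $A\times B$, $B\times C$ and $A\times B\times C$, so every projector is unchanged and the bound carries over verbatim.
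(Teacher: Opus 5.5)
Your frame is sound: the three moment operators are orthogonal projectors, $P_{ABC}$ is absorbed by the other two, and the quantity is the cosine of the angle between $\mathrm{ran}P_{AB}\ominus\mathrm{ran}P_{ABC}$ and $\mathrm{ran}P_{BC}\ominus\mathrm{ran}P_{ABC}$. Your treatment of the $\Alt$ clause via Lemma~4.4 is exactly what the paper does. For the main bound the paper gives no argument of its own and imports Theorem~A.1 of Ref.~\cite{chen2024incompressibilityspectralgapsrandom}.

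The gap is in your Markov-operator step. The claim that from \emph{every} starting tuple the two resamplings reproduce the uniform law on the $ABC$-orbit up to total variation $\cO(k^2/|B|)$ is false. Take $k=2$ and $x=((a,b,c),(a',b,c))$ with $a\neq a'$.
\begin{itemize}
\item The $BC$-step freezes the $A$-coordinates and returns $((a,\beta,\gamma),(a',\beta,\gamma))$. There is only one $BC$-block, so your ``distinct $B$-coordinates'' event holds with probability one.
\item The $AB$-step then freezes the $C$-coordinates, so the output always has $c_1=c_2$.
\item The uniform law on pairs of distinct points has $c_1=c_2$ only with probability about $1/|C|$, so the total-variation distance is about $1-1/|C|$.
\end{itemize}
In general, distinct $ABC$-blocks that share a $BC$-value (or an $AB$-value) carry a coincidence in the coordinate the next step freezes, and no $B$-collision event removes it. Relatedly, your orbit labels are off: copies in the same $AB$-block may carry different $C$-values, and these are precisely the bad configurations.

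Your interpolation heuristic is also inconsistent with the true size of the quantity. All three projectors are symmetric averaging matrices. Suppose the row sums of $P_{AB}P_{BC}-P_{ABC}$, and (by $A\leftrightarrow C$ symmetry) those of its transpose, were $\cO(k^2/|B|)$. Then the Schur/Riesz--Thorin bound would give operator norm $\cO(k^2/|B|)$ with no loss at all. Yet already for $k=2$ and $|A|,|C|\geq 2$ the norm is at least of order $|B|^{-1/2}$. Indeed, $h=\mathbf{1}[b_1=b_2,\,c_1=c_2,\,a_1\neq a_2]$ lies in $\mathrm{ran}P_{BC}$, and $P_{AB}h=\frac{|A|-1}{|A||B|-1}\,\mathbf{1}[(a_1,b_1)\neq(a_2,b_2),\,c_1=c_2]$. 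Hence $\|P_{AB}h\|/\|h\|=\sqrt{(|A|-1)/(|A||B|-1)}$, while $P_{ABC}h$ is smaller by a factor of order $|C|^{-1/2}$.

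So the $|B|^{-1/2}$ is an $\ell_2$ effect, produced by the mismatch of orbit sizes on exactly these atypical sectors, not a loss from interpolation. Your proposed remedy does not reach it. A unit vector may live entirely on an atypical sector, and a tail weight $(k^2/|B|)^j$ is a statement about the uniform measure on tuples, so it says nothing about the operator norm there. What is missing is a direct estimate of the overlaps between normalized orbit vectors of $\mathrm{ran}P_{AB}$ and $\mathrm{ran}P_{BC}$ on these sectors, controlled across the $e^{\cO(k\log k)}$ patterns. That estimate is the substance of the cited theorem.

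A smaller point: $\Sym(A\times B)\otimes I_C$ and $I_A\otimes\Sym(B\times C)$ need not generate $\Sym(A\times B\times C)$. When $|A|$ and $|C|$ are both even, which is the paper's case $|A|=|C|=2^{\ell_0}$, all their elements are even permutations. Your identity $\mathrm{ran}P_{AB}\cap\mathrm{ran}P_{BC}=\mathrm{ran}P_{ABC}$ then requires showing that the generated group contains $\Alt(A\times B\times C)$, together with $k\leq|A||B||C|-2$.
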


In our context, the dimensions of the three sets are: $|A| = |C| = 2^{\ell_0}$ and $|B| = 2^{\ell-2\ell_0}$. Substituting these dimensions into Lemma~\ref{lem:overlap}, provided with $k\leq 2^{\ell-\ell_0}-2$, the overlap error is bounded by:
\begin{equation}
\cO\left( \frac{(k \log k + \ell - 2\ell_0)^3}{\sqrt{2^{\ell-2\ell_0}}} \right).
\end{equation}
Under the assumption $k \leq \cO(2^{\ell/6.1})$, the error term is exponentially small in $\ell$. In particular, it is strictly smaller than any constant for sufficiently large $\ell$. Because this overlap error is small, we can first implement a constant number $q_0/2 = \cO(1)$ generators of $\pi_{AB}$ to achieve a small essential norm to form $\Alt_{AB}$. Then we implement a constant number $q_0/2 = \cO(1)$ generators of $\pi_{BC}$ to achieve a small essential norm to form $\Alt_{BC}$. Then combine these two parts to achieve an essential norm with an arbitrarily small constant. Consequently, a constant number $q_0 = \cO(1)$ of generators suffices. Since each sample from $\nu_{AB}$ or $\nu_{BC}$ utilizes the folded cell ordering on its respective contiguous interval, it naturally retains a 1D depth of $\cO(1)$. The product of $q_0$ such localized generators thus yields a total 1D circuit depth of $q_0 \times \cO(1) = \cO(1)$. Combining this constant-depth analysis with the spectral gap guarantees concludes the proof of the Lemma. 
\end{proof}

\section{Constant-size Constant-1D-depth generators with constant spectral gap for the Clifford group}\label{sc:clif}
In this section, we explicitly construct a set of generators for the Clifford group and prove Lemma~\ref{lem:1d-clifford}. The standard definition for the Clifford group is the group generated by Hadamard, phase, and CNOT gates, $\Cl_{\ell} = \langle H, S, \mathrm{CNOT} \rangle$. This group contains a nontrivial center (i.e. a global phase). If we do not consider any ancillary system or controlled Clifford operations, the global phase will not influence the quantum state. Meanwhile, for the purpose of generating unitary designs, where we care about the unitary representation $U^{\otimes k}\otimes \overline{U}^{\otimes k}$, the global phase of group elements cancels. In this case, the projective Clifford group $\overline{\Cl}_{\ell}$ matters, which differs the full Clifford group $\Cl_{\ell}$ by a discrete phase group~\cite{Selinger2015Clifford}:
\begin{equation}
1 \longrightarrow \langle e^{i\pi/4} I \rangle \longrightarrow \Cl_{\ell} \longrightarrow \overline{\Cl}_{\ell} \longrightarrow 1.
\end{equation}
In the following, we will first discuss a generating set for the projective Clifford group $\overline{\Cl}_{\ell}$, and then extend it to the full Clifford group $\Cl_{\ell}$.

\subsection{Generators for the projective Clifford group}
The projective Clifford group has a normal subgroup, the projective Pauli group $\overline{\mathrm{P}}_{\ell}$, with the quotient isomorphic to the binary symplectic group $\Sp(2\ell; \F_2)$:
\begin{equation}\label{eq:clifford_exact_2}
1 \longrightarrow \overline{\mathrm{P}}_{\ell} \longrightarrow \overline{\Cl}_{\ell} \longrightarrow \Sp(2\ell; \F_2) \longrightarrow 1,
\end{equation}
where $\overline{\mathrm{P}}_{\ell} \cong \F_2^{2\ell}$ is an abelian group. This structure allows us to first investigate the generators of the symplectic group $\Sp(2\ell; \F_2)$ and then lift these through the extension above.

For the symplectic group (and other finite classical groups), constant-size generating sets with bounded Kazhdan constant are known~\cite{Kassabov2006expanders,nikolov2005productdecompositionclassicalquasisimple}. Our contribution consists primarily of translating this work to quantum circuits (this requires a careful retracing of these arguments and is far from trivial).
We will follow Ref.~\cite{nikolov2005productdecompositionclassicalquasisimple} to find the generators for the symplectic group. According to Theorem~1 in Ref.~\cite{nikolov2005productdecompositionclassicalquasisimple}, each symplectic group element is the product of a constant number of elements in the special linear group and conjugates of this group inside the symplectic group. We have the following decomposition:
\begin{lemma}[Symplectic group decomposition]\label{lem:symplecticdecompose}
There is a constant $M\leq 200$ such that there exist $M$ conjugates $S_i = g_i Gg_i^{-1}$ with $G\cong \SL(\ell; \F_2)$ and $g_i\in \Sp(2\ell; \F_2)$ satisfying
\begin{equation}
\Sp(2\ell; \F_2) = \prod_{i=1}^M S_i = \{\prod_{i=1}^M s_i, s_i\in S_i\}.
\end{equation}
\end{lemma}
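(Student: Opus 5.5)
This decomposition is a specialization of Theorem~1 of Ref.~\cite{nikolov2005productdecompositionclassicalquasisimple} to $\Sp(2\ell;\F_2)$. I would re-derive it in symplectic-matrix language, so that the conjugating elements $g_i$ are explicit. The copy of $\SL(\ell;\F_2)$ is the Levi subgroup of the Siegel parabolic: with the symplectic form $J=\begin{pmatrix}0&I\\ I&0\end{pmatrix}$ on $\F_2^{2\ell}=X\oplus Z$, take
\begin{equation}
G=\left\{\begin{pmatrix}A&0\\0&A^{-T}\end{pmatrix}: A\in\SL(\ell;\F_2)\right\}.
\end{equation}
Physically, these are exactly the CNOT circuits. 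The first step is to show that the two unipotent radicals are products of boundedly many conjugates of $G$. These radicals are
\begin{equation}
U^+=\left\{\begin{pmatrix}I&S\\0&I\end{pmatrix}\right\},\qquad U^-=\left\{\begin{pmatrix}I&0\\S&I\end{pmatrix}\right\},\qquad S=S^T,
\end{equation}
physically the CZ/phase layers and their Hadamard conjugates. I would conjugate $G$ by partial-Hadamard-type elements $g\in\Sp(2\ell;\F_2)$ that swap a Lagrangian half of $X$ with a half of $Z$. The images $gGg^{-1}$ then contain elements of $U^\pm$ whose symmetric blocks $S$ are supported on off-diagonal rectangles. Every symmetric matrix over $\F_2$ is a sum of $\cO(1)$ such structured pieces: split the index set into halves, giving an off-diagonal block plus two diagonal sub-blocks, with a diagonal correction from a bounded set of $S$-type conjugates. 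Hence each of $U^+$ and $U^-$ is a product of a constant number of conjugates of $G$.

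The second step is a Bruhat/Gauss-type factorization. I would show that every element of $\Sp(2\ell;\F_2)$ can be written as $u_1^- u_1^+ u_2^- \, h\, u_3^+ \cdots$ with a bounded number of factors, where $h\in G$ and $u_i^\pm\in U^\pm$. Concretely, given $M=\begin{pmatrix}P&Q\\R&W\end{pmatrix}$, first multiply by an element of $U^-$ so that the upper-left block becomes invertible; this uses the standard fact that $P+QS$ can be made invertible for a suitable symmetric $S$. Then perform a block $LDU$ decomposition: the off-diagonal factors are symplectic and lie in $U^\pm$, and the diagonal factor lies in $G$ up to $\GL=\SL$ over $\F_2$. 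This gives a product of at most about five pieces from $U^\pm\cup G$. Substituting the step-one bounds yields a fixed number $M$ of conjugates.

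The main obstacle is achieving the explicit constant $M\le 200$, and especially the invertibility step. Over $\F_2$, $P+QS$ with $S$ symmetric may fail to be invertible for every choice, because symmetric matrices with zero diagonal are alternating and have even rank. I would handle this by allowing one extra conjugate, for example a Hadamard on a single qubit, before the $LDU$ step. I would then carefully tally the number of conjugates per $U^\pm$ factor, roughly tens, times the number of factors to confirm the total stays below $200$. If my tally exceeds this, I would fall back on quoting Nikolov's constant directly for the symplectic case.
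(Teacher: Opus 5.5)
\textbf{Verdict: there is a genuine gap in Step~1.} The paper's route is essentially your fallback. It takes $U^\pm$ to be opposite \emph{maximal} unipotent subgroups, quotes $\Sp(2\ell;\F_2)=(U^+U^-)^6U^+$ (Liebeck--Shalev / Nikolov), and uses Nikolov's Lemma~1 and Proposition~1 to put $U^+$ inside a product of $10$ explicit conjugates of $G$. That gives $13\cdot 10=130$.

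\textbf{Step 2 is sound.} With Siegel radicals it yields $\Sp(2\ell;\F_2)=U^-GU^+U^-$, so $M\le 3c+1$ if each radical is a product of $c$ conjugates. This is more economical than the paper's route. Your invertibility worry is misplaced: $\ker[P\ Q]=M^{-1}Z$ is Lagrangian, and by Witt's theorem some Lagrangian is transverse to both it and $Z$. So some symmetric $S$ always makes $P+QS$ invertible. In general, though, that $S$ must have nonzero diagonal. For example, for $M=J$ and odd $\ell$ one gets $P+QS=S$, which is singular for every alternating $S$. A nonzero diagonal is exactly what Step~1 cannot supply.

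\textbf{Step 1: the recursion is not bounded.} The two diagonal sub-blocks are again arbitrary symmetric matrices, so halving costs $\lceil\log_2\ell\rceil$ levels. With fixed coordinate cuts this is forced, because every pair $i\neq j$ must be separated by some cut.

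\textbf{Step 1: no conjugate of $G$ can supply the diagonal.} Write $u_S=\begin{pmatrix}I&S\\0&I\end{pmatrix}$ and let $\pi_Z$ be projection onto $z$-coordinates.
\begin{itemize}
\item For any $g$, $gGg^{-1}$ is the stabilizer of a transverse Lagrangian pair $(L_1,L_2)$.
\item $u_S$ preserves a Lagrangian $L$ iff $S$ vanishes on $\pi_Z(L)\times\pi_Z(L)$.
\item Transversality gives $\pi_Z(L_1)+\pi_Z(L_2)=\F_2^\ell$.
\item The map $v\mapsto v^TSv$ is $\F_2$-linear, so it vanishes everywhere. Hence every element of $gGg^{-1}\cap U^+$ has alternating $S$.
\end{itemize}
Equivalently, $G$ and all these pieces preserve the quadratic form $q(x,z)=x\cdot z$. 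So anything assembled from them via Steps~1--2 lies in the proper subgroup $O(q)$, which contains CNOT, CZ and Hadamard layers but no single-qubit phase gate. Also, $\operatorname{rank}(\phi(g)-I)=2\operatorname{rank}(g-I)$ is even, so no conjugate of $G$ contains a transvection at all. The ``diagonal correction from $S$-type conjugates'' therefore cannot be an additive correction inside $U^\pm$. It needs products whose individual factors leave $U^\pm$, and you give no such mechanism.

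\textbf{A repair in your spirit.} Take $\ell=2m$; odd $\ell$ is analogous. Let $\Delta=\{(u,u)\}$ and use the three fixed splittings
\[
(\F_2^m\oplus 0,\,0\oplus\F_2^m),\qquad (\Delta,\,0\oplus\F_2^m),\qquad (\F_2^m\oplus 0,\,\Delta).
\]
They give pieces $\begin{pmatrix}0&C\\C^T&0\end{pmatrix}$, $\begin{pmatrix}C+C^T&C\\C^T&0\end{pmatrix}$ and $\begin{pmatrix}0&C\\C^T&C+C^T\end{pmatrix}$, which together span all alternating matrices. Congruence $S\mapsto hSh^T$ moves any nonzero diagonal to $e_1$, so
\[
U^+\subseteq G\,\{1,t\}\,U^+_{\mathrm{alt}}\,G,
\]
where $t$ is a single phase gate. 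You would then write $t$ once and for all as a bounded product of elements of conjugates, e.g.\ inside a three-qubit block using simplicity of $\Sp(6;\F_2)$. With that supplied, your route gives a constant far below $200$. As written, boundedness of Step~1 is not established.
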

For the symplectic group, the upper bound on the constant $M$ can be further tightened to $130$, which we will detail later. By Lemma~\ref{lem:symplecticdecompose}, we can compose the generators for the symplectic group from the special linear group. It has been shown before in Ref.~\cite{kassabov2005universallatticesunboundedrank} that the special linear group has a bounded-size set of generators $S_{\SL, \ell}$ with an $\Omega(1)$ Kazhdan constant. Nonetheless, the exact size of the generating set for \emph{arbitrary} integer $\ell$ is not explicit. In the following lemma, we show a more explicit and quantified result, following the idea of Ref.~\cite{kassabov2005universallatticesunboundedrank}.
\begin{lemma}[Constant-size generating set for the special linear group]\label{lem:sl_generators}
For every integer $\ell\geq 6$, there exists a symmetric generating set $S_{\SL,\ell}$ for $\SL(\ell; \F_2)$ such that:
\begin{enumerate}
\item The size of the generating set is bounded, $|S_{\SL,\ell}|\leq 28$.

\item $\SL(\ell; \F_2)$ has a Kazhdan constant $\kappa_{\SL} = \Omega(1)$ with respect to $S_{\SL,\ell}$.
\end{enumerate}
\end{lemma}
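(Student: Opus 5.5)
The plan is to reuse the Kassabov construction already reviewed in the proof of Lemma~\ref{lem:1d-kassabov}, with the free-variable part removed. First treat the case $\ell = 3s$. Write $\SL(3s;\F_2) = \EL(3s;\F_2) = \EL(3;\Mat(s;\F_2))$ using Lemma~5.10 of Ref.~\cite{chen2024incompressibilityspectralgapsrandom}. Then use the surjection $\varphi\colon \EL(3;\mathbb{Z}\langle a,b\rangle)\to \EL(3;\Mat(s;\F_2))$ with $a\mapsto A$ (the cyclic shift) and $b\mapsto B=E_{11}$. This map is surjective because $A$ and $B$ generate $\Mat(s;\F_2)$ as a ring.

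By Lemma~\ref{lem:ELgenerator} with $l=2$, the group $\EL(3;\mathbb{Z}\langle a,b\rangle)$ is generated by $4l+6=14$ elements:
\begin{itemize}
\item the six matrices $E_{\alpha,\beta}(1)$;
\item the eight matrices $E_{\alpha',\beta'}(w)$ with $w\in\{a,b\}$ and $(\alpha',\beta')\in\{(1,2),(2,3),(2,1),(3,2)\}$.
\end{itemize}
Adding inverses gives a symmetric set of size at most $28$. Pushing it forward through $\varphi$ gives $S_{\SL,3s}$, which proves item 1.

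For item 2 I would invoke Kassabov's theorem~\cite{kassabov2005universallatticesunboundedrank}. It states that $\EL(3;\mathbb{Z}\langle w_1,\dots,w_l\rangle)$ has property (T), with a positive Kazhdan constant $\kappa_0$ with respect to exactly this generating set. Any unitary representation of $\SL(3s;\F_2)$ with no invariant vectors pulls back along $\varphi$ to a representation of $\EL(3;\mathbb{Z}\langle a,b\rangle)$ with no invariant vectors, since $\varphi$ is surjective. Hence $\cK(\SL(3s;\F_2);S_{\SL,3s})\geq\kappa_0=\Omega(1)$, uniformly in $s$. (Alternatively, one could quote the $\Mat(s)$-coefficient version used in Ref.~\cite{chen2024incompressibilityspectralgapsrandom}.) The main check is that the cited property-(T) statement really uses the generating set of Lemma~\ref{lem:ELgenerator} with $l=2$. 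If the constant is only stated for a different finite generating set, I would compare the two sets by bounded word length and absorb the ratio into $\Omega(1)$ via Lemma~\ref{lem:bounded-product}, taking $\phi$ to be the identity.

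For general $\ell\geq6$, write $\ell=3s+\ell_0$ with $\ell_0\in\{0,1,2\}$ and $s\geq2$, so that $3s\geq 6$ (an $s=1$ block would give only a tiny special linear group). This is the main obstacle: the count $28$ must not grow, so I would not take a union of generating sets. Instead, choose a block decomposition $\F_2^\ell=\F_2^{s_1}\oplus\F_2^{s_2}\oplus\F_2^{s_3}$ with $s_1+s_2+s_3=\ell$ and all $s_i\in\{s,s+1\}$. Then write $\SL(\ell;\F_2)=\EL(3;\cdot)$ in a generalized sense, where the $(\alpha,\beta)$ entry lives in $\Mat(s_\alpha\times s_\beta;\F_2)$. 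Define the images of $1$, $a$, $b$ on each off-diagonal block by rectangular analogues:
\begin{itemize}
\item $1$ maps to the truncated identity;
\item $a$ maps to the truncated cyclic shift;
\item $b$ maps to $E_{11}$.
\end{itemize}
This defines $14$ elementary-type generators, at most $28$ after symmetrization. The remaining work is to show two things. First, these generate all elementary transvections between blocks, by the same ring-generation argument applied to rectangular matrices. Second, every element is a bounded-length product of such generators together with a copy of $\SL(3s';\F_2)$ embedded via the previous case. Lemma~\ref{lem:bounded-product} then transfers the Kazhdan constant with only a constant loss. If the rectangular bookkeeping proves too messy, my fallback is an overlap argument in the spirit of Lemma~\ref{lem:sl-overlap}. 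There I would use two overlapping embedded copies of $\SL(3s)$ whose product covers $\SL(\ell)$ in boundedly many factors, accepting a constant factor in the Kazhdan constant. In that case the size bound $28$ would have to be recovered by choosing the two copies to share generators.
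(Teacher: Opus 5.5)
Your treatment of $\ell=3s$ is essentially the paper's. The general-$\ell$ step, however, has a genuine gap, in both your main route and your fallback.

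\textbf{The rectangular route.} Your rectangular generators cannot inherit Kassabov's bound. Suppose a homomorphism from $\EL(3;R)$ into $\SL(\ell;\F_2)$ sends each $E_{\alpha,\beta}(r)$ to the block-elementary matrix with block $\rho_{\alpha\beta}(r)$ in position $(\alpha,\beta)$ of $V_1\oplus V_2\oplus V_3$, where $V_\alpha=\F_2^{s_\alpha}$. Then $[E_{\alpha,\beta}(r),E_{\beta,\gamma}(t)]=E_{\alpha,\gamma}(rt)$ forces $\rho_{\alpha\gamma}(rt)=\rho_{\alpha\beta}(r)\rho_{\beta\gamma}(t)$. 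In particular $\rho_{32}(r)=\rho_{31}(r)\rho_{12}(1)$ and $\rho_{12}(r)=\rho_{13}(r)\rho_{32}(1)$ both vanish on $\ker\rho_{12}(1)\subseteq V_2$. That subspace is therefore pointwise fixed by the whole image. Since $\SL(\ell;\F_2)$ fixes no nonzero vector, surjectivity would force every $\rho_{\alpha\beta}(1)$ to be injective, i.e.\ $s_1=s_2=s_3$, which is impossible when $3\nmid\ell$. So Kassabov's constant can only ever be pulled back to a proper subgroup (effectively another embedded $\SL(3s;\F_2)$). Generating all transvections gives no Kazhdan bound by itself. Everything then rests on your second claim, that every element is a bounded-length product involving an embedded $\SL(3s';\F_2)$. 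You assert this but do not prove it, and it is the entire content of the general case: naive commutator generation $E_{ij}=[E_{ib},E_{bj}]$ gives words whose length grows with $\ell$.

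\textbf{The missing idea.} What you need is the paper's Lemma~\ref{lem:sl-overlap}. Take $r=\ell\bmod 3$, $\dim A=\dim C=r$, $\dim B=\ell-2r$; then $b\ge 2c$ whenever $\ell\ge 6$ and $r\neq 0$. Under this condition every $M_{ABC}\in\SL(\ell;\F_2)$ equals $(P_{AB}\oplus I_C)(I_A\oplus R_{BC})(Q_{AB}\oplus I_C)$, with factors in two overlapping copies of $\SL(3s;\F_2)$. The proof runs as follows.
\begin{enumerate}
\item Let $K=\ker\,(M_{CA}\;M_{CB})$. Then $M_{ABC}(K)\subseteq A\oplus B$ and $\dim M_{ABC}(K)\ge a+b-c$.
\item Let $S$ be the span of the $A\oplus B$ parts of the last $c$ columns.
\item The condition $b\ge 2c$ yields $U\subseteq M_{ABC}(K)$ with $\dim U=a$ and $U\cap S=0$.
\item Transitivity of $\mathrm{GL}$ on subspace pairs with prescribed dimensions and intersection then builds the outer factors.
\end{enumerate}

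\textbf{The count.} Your fallback names the right strategy but has neither this decomposition nor the right count. No generator sharing is needed. Over $\F_2$ each $\varphi_1(E_{\alpha,\beta}(w))$ squares to the identity, so $S_{\SL,3s}$ is already symmetric with $14$ elements; your $28$ is the $\F_p$ count. Hence $S_{AB}\cup S_{BC}$ has at most $28$ elements, and Lemma~\ref{lem:bounded-product} with $3$ factors gives $\cK\ge\kappa_{\SL,3s}/(3\sqrt2)$.

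A minor citation point: Ref.~\cite{kassabov2005universallatticesunboundedrank} is phrased via $\tau$-constants (representations through finite quotients), not property (T). That is all your pullback needs.
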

\begin{proof}[Proof of Lemma~\ref{lem:sl_generators}]
We first consider the case $\ell=3s$ where $s$ is a positive integer. Recall that $\SL(3s; \F_2) = \EL(3s; \F_2)$, and we have the following surjective group homomorphism $\varphi_1$, which has appeared in the proof of Lemma~\ref{lem:1d-kassabov}:
\begin{equation}
\begin{split}
\EL(3;\mathbb{Z}\langle a,b\rangle) \xrightarrow{\ \varphi_1\ } \EL(3;\Mat(s;\F_2)) = \EL(3s;\F_2).
\end{split}
\end{equation}
Since here our target group is a single copy of $\EL(3s;\F_2)$, we only consider the action of the homomorphism $\varphi_1$ on the ring $\mathbb{Z}\langle a,b\rangle$. \footnote{As a remark, since here we do not have the control bit $x_{c,a}$, the group homomorphism $\varphi_2$ that appears in the proof of Lemma~\ref{lem:1d-kassabov} is not used.} Recall the action of $\varphi_1$ satisfies
\begin{align}
\varphi_1(1) &= I_s;\\
\varphi_1(a) &= A, \quad A e_a = e_{a+1\pmod s};\\
\varphi_1(b) &= B = E_{11}.
\end{align}
Following Lemma~\ref{lem:ELgenerator}, the group $\EL(3s;\F_2)$ has the following generating set:
\begin{equation}\label{eq:gamma3s}
\begin{split}
S_{\SL,3s} = &\big\{\varphi_1(E_{\alpha,\beta}(1)), \varphi_1(E_{\alpha',\beta'}(a)), \varphi_1(E_{\alpha',\beta'}(b))\hspace{1em}\bigm|\\ 
&\hspace{2em}\alpha\neq\beta\in \{1,2,3\},\, (\alpha',\beta')\in \{(1,2), (2,3), (2,1), (3,2)\} \big\}.
\end{split}
\end{equation}
This generating set corresponds to the set $S_{3,2}$ defined in the ``Notation" section of Ref.~\cite{kassabov2005universallatticesunboundedrank}. In Ref.~\cite{kassabov2005universallatticesunboundedrank}, the author defines $G_{d,k} := \EL(d;R_k)$ where $R_k = \mathbb{Z}\langle x_1, \cdots, x_k\rangle$ is a ring with $k$ generators. The generating set of $G_{d,k}$ is denoted as $S_{d,k}$, composed of two subsets, $S_{d,k} = F_1\cup F_2$. Here, $F_1$ is the set of $2(d^2-d)$ elementary matrices with $\pm 1$ off the diagonal, and $F_2$ is the set of $4k(d-1)$ elementary matrices, $I_d \pm x_l e_{ij}$ with $|i-j|=1$ and $ 1\leq l\leq k$. Note that over $\F_2$, $-1$ equals $1$. Thus in this case, $F_1$ contains $(d^2-d)$ elementary matrices, and $F_2$ contains $2k(d-1)$ elementary matrices. A direct comparison shows that when $d = 3, k = 2$, $F_1$ is the set $\{E_{\alpha,\beta}(1)\}$, and $F_2$ is the set $\{E_{\alpha',\beta'}(a), E_{\alpha',\beta'}(b)\}$. The whole generating set contains $14$ generators in total. It was then shown in Theorem~9 in Ref.~\cite{kassabov2005universallatticesunboundedrank} that after taking the map $\varphi_1$ (denoted as $\pi_{l,s}$ in Ref.~\cite{kassabov2005universallatticesunboundedrank}), the generating set $S_{\SL,3s}$ (denoted as $S'_{3l}$, $S'_{3l,0}$, or $\Sigma'_{3l}$ in Ref.~\cite{kassabov2005universallatticesunboundedrank}) has a Kazhdan constant lower bounded by $1/400$. We summarize the result in our notation below.
\begin{lemma}[Theorem~9 in Ref.~\cite{kassabov2005universallatticesunboundedrank}]
For all $s\geq 1$, there exists a generating set defined by $S_{\SL,3s}$ in Eq.~\eqref{eq:gamma3s} with 14 elements such that the Kazhdan constant
\begin{equation}
\cK(\SL(3s; \F_2); S_{\SL,3s}) > 1/400.
\end{equation}
\end{lemma}
Note that the original Theorem~9 is established for $\F_p$ with a generic prime $p$ instead of $\F_2$. In this case, the generating set $S_{3,2}$ contains $28$ elements. When  working over $\F_2$, $14$ elements suffice. The above argument shows that $S_{\SL,3s}$ generates $\SL(3s; \F_2)$ with a Kazhdan constant $\kappa_{\SL} = \Omega(1)$. Now we discuss the case for a generic integer $\ell$. In this case, $\SL(\ell; \F_2)$ is a product of a bounded number of copies of $\SL(3s; \F_2)$ for $s = \lfloor \ell/3 \rfloor$~\cite{kassabov2005universallatticesunboundedrank,Kassabov2006expanders}. Nonetheless, Refs.~\cite{kassabov2005universallatticesunboundedrank,Kassabov2006expanders} did not explicitly give a specific number of copies. Here, we find an explicit upper bound for this number, which is $3$, as shown in Lemma~\ref{lem:sl-overlap} below. As this proof is composed of standard linear algebra arguments (and is rather tedious), readers primarily interested in the main trajectory may safely skip it. The proof strategy closely follows Proposition 2.6 in Ref.~\cite{alavi2014triplefactorisationsgenerallinear}. Although Theorem~1.1 of Ref.~\cite{alavi2014triplefactorisationsgenerallinear} also provides a three-term decomposition of $\mathrm{SL}(\ell; \mathbb{F}_2)$, its factor subgroups are not necessarily smaller special linear groups and thus do not fit our context.

\begin{lemma}[Decomposition of the special linear group]\label{lem:sl-overlap}
For an $\ell$-dimensional space, $A\oplus B\oplus C$, where $\dim A = a$, $\dim B = b$, $\dim C = c$, and $a+b+c=\ell$, any group element $M_{ABC}$ of $\SL(\ell; \F_2)$ can be decomposed as
\begin{equation}
M_{ABC} = (P_{AB}\oplus I_C)(I_A\oplus R_{BC})(Q_{AB}\oplus I_C),
\end{equation}
provided with $b\geq 2c$. Here, $P_{AB}, Q_{AB}\in \SL(a+b; \F_2)$ are special linear group elements on system $A\oplus B$, and $R_{BC}\in \SL(b+c; \F_2)$ is a special linear group element on system $B\oplus C$.
\end{lemma}
\begin{proof}[Proof of Lemma~\ref{lem:sl-overlap}]
Denote the standard basis of matrix as $e_i$. Then $A = \Span\{e_i, 1\leq i\leq a\}$, $B = \Span\{e_i, a+1\leq i\leq a+b\}$, and $C = \Span\{e_i, a+b+1\leq i\leq a+b+c\}$. We first write $M_{ABC}$ with respect to the decomposition $A\oplus B\oplus C$,
\begin{equation}
M_{ABC} = \begin{pmatrix}
M_{AA} & M_{AB} & M_{AC} \\
M_{BA} & M_{BB} & M_{BC} \\
M_{CA} & M_{CB} & M_{CC} \\
\end{pmatrix}.
\end{equation}
Now we try to find basis transformations on $A\oplus B$, such that $M_{ABC}$ can be block diagonal on $A$ and $BC$, which is inspired by Proposition 2.6(a) in Ref.~\cite{alavi2014triplefactorisationsgenerallinear}. We summarize this proposition in our language below.
\begin{lemma}[Pairs of subspaces with trivial intersection, Proposition 2.6(a) in Ref.~\cite{alavi2014triplefactorisationsgenerallinear}]\label{lem:subspace-pair}
Let $V$ be an $n$-dimensional vector space over $\F_2$ and
$U_1,U_2,S_1,S_2\subseteq V$ be subspaces satisfying
\begin{equation}
\dim U_1=\dim U_2,\quad \dim S_1=\dim S_2,\quad \dim(U_1\cap S_1) = \dim(U_2\cap S_2).
\end{equation}
Then there exists a transformation $G\in \mathrm{GL}(V,\F_2)$ such that,
\begin{equation}
G(U_1)=U_2, \quad G(S_1)=S_2.
\end{equation}
Moreover, subspaces $U,S\subseteq V$ satisfying
\begin{equation}
\dim U=r,\quad \dim S=s,\quad \dim(U\cap S)=j
\end{equation}
exist if and only if $\max\{0,r+s-n\}\leq j \leq \min\{r,s\}$.
\end{lemma}
Here, $\mathrm{GL}(V;\F_2)$ represents the general linear group on space $V$ over $\F_2$. Note that for the field $\F_2$, $\mathrm{GL}(V;\F_2) = \SL(V;\F_2)$, which is the special linear group on space $V$ over $\F_2$. We will apply Lemma~\ref{lem:subspace-pair} to construct a basis transformation. First, we define
\begin{equation}
K = \ker
\begin{pmatrix}
M_{CA} & M_{CB}
\end{pmatrix}
\subseteq A\oplus B.
\end{equation}
Since the operator 
$\begin{pmatrix}
M_{CA} & M_{CB}
\end{pmatrix}
:A\oplus B\rightarrow C$
has rank at most $c$, we have that
\begin{equation}\label{eq:dimK}
\dim K\geq a+b-c.
\end{equation}
Since the overall transformation $M_{ABC}$ is invertible, its restriction to $K$ must be injective. For any $x\in K$, $M_{ABC}\begin{pmatrix}
x\\ 0^{\times c}
\end{pmatrix} = 0^{\times \ell}$ implies $x = 0^{\times (a+b)}$. Here, $0^{\times k}$ means a length-$k$ all-zero bit string. Hence,
\begin{equation}
\dim M_{ABC}(K)=\dim K\geq a+b-c.
\end{equation}
Moreover, by the definition of $K$, every vector in $M_{ABC}(K)$ does not have support on $C$. Therefore,
\begin{equation}
W := M_{ABC}(K)\subseteq A\oplus B.
\end{equation}
We also define
\begin{equation}
S=\operatorname{Im}
\begin{pmatrix}
M_{AC}\\
M_{BC}
\end{pmatrix}
\subseteq A\oplus B.
\end{equation}
Here, for a matrix
$X=\begin{pmatrix}x_1&\cdots&x_k\end{pmatrix}$, we use
$\operatorname{Im}(X)=\Span\{x_1,\ldots,x_k\}$ to denote its column
space. Then $S$ is generated by the $A\oplus B$ components of the
last $c$ columns of $M_{ABC}$. In particular, $\dim S\leq c$. Consequently,
\begin{equation}
\dim(W\cap S)\leq c.
\end{equation}
Under the assumption $b\geq 2c$, since we have $a+\dim(W\cap S)-\dim W\leq a+c-(a+b-c)\leq 0$, applying Lemma~\ref{lem:subspace-pair} we get that there exists a subspace $U\subseteq W$ with $\dim U = a$ such that
\begin{equation}\label{eq:USzero}
U\cap (W\cap S)=\{0\}.
\end{equation}
Since $U\subseteq W$, Eq.~\eqref{eq:USzero} further implies
\begin{equation}\label{eq:USzero2}
U\cap S=\{0\}.
\end{equation}
We now use the subspace $U$ to choose $G_{R,1}$. Since
$M_{ABC}|_K:K\rightarrow W$ is injective and hence an isomorphism onto its image, the inverse image
\begin{equation}
\widetilde{A} := \left(M_{ABC}|_K\right)^{-1}(U)\subseteq K\subseteq A\oplus B,
\end{equation}
Since $\dim\widetilde{A} = \dim U = a =\dim A$, there exists an invertible transformation $G_{R,1}$ on $A\oplus B$ such that
\begin{equation}\label{eq:GR1A}
G_{R,1}(A)=\widetilde A.
\end{equation}
which follows directly from the transitivity statement in Lemma~\ref{lem:subspace-pair}. Now we apply the basis transformation $G_{R,1}\oplus I_C$ to get
\begin{equation}
M'_{ABC} = M_{ABC}(G_{R,1}\oplus I_C) =
\begin{pmatrix}
M'_{AA} & M'_{AB} & M_{AC}\\
M'_{BA} & M'_{BB} & M_{BC}\\
M'_{CA} & M'_{CB} & M_{CC}
\end{pmatrix}.
\end{equation}
For every $x\in A$, Eq.~\eqref{eq:GR1A} implies
$G_{R,1}x\in\widetilde A\subseteq K$. Hence the $C$ component of
$M_{ABC}G_{R,1}x$ vanishes by the definition of $K$. Then $M'_{CA} = 0$. Meanwhile,
\begin{equation}
M_{ABC}G_{R,1}(A) = M_{ABC}(\widetilde A) = U.
\end{equation}
Since $U\subseteq W\subseteq A\oplus B$, we have that
\begin{equation}\label{eq:imageU}
\operatorname{Im}
\begin{pmatrix}
M'_{AA}\\
M'_{BA}
\end{pmatrix} = U,\quad \dim \operatorname{Im}
\begin{pmatrix}
M'_{AA}\\
M'_{BA}
\end{pmatrix} = a.
\end{equation}
Note that right multiplication by $G_{R,1}\oplus I_C$ does not change the last $c$ columns. Since $U\cap S = \{0\}$ as shown in Eq.~\eqref{eq:USzero2}, we have that
\begin{equation}\label{eq:Imintersection}
\operatorname{Im}
\begin{pmatrix}
M'_{AA}\\
M'_{BA}
\end{pmatrix}
\cap
\operatorname{Im}
\begin{pmatrix}
M_{AC}\\
M_{BC}
\end{pmatrix}
=
\{0\}.
\end{equation}
As a summary, after right multiplication by $G_{R,1}\oplus I_C$, we get
\begin{equation}
M'_{ABC} = M_{ABC}(G_{R,1}\oplus I_C) =
\begin{pmatrix}
M'_{AA} & M'_{AB} & M_{AC}\\
M'_{BA} & M'_{BB} & M_{BC}\\
0 & M'_{CB} & M_{CC}
\end{pmatrix},
\end{equation}
with Eq.~\eqref{eq:Imintersection}. We next construct the left basis transformation $G_{L,1}$. Let
\begin{equation}
t = \dim S = \rank
\begin{pmatrix}
M_{AC}\\
M_{BC}
\end{pmatrix}
\leq c.
\end{equation}
Since $b\geq 2c\geq t$, we can choose a $t$-dimensional subspace $B_t\subseteq B$. By
Eqs.~\eqref{eq:imageU} and~\eqref{eq:Imintersection}, the two ordered
pairs of subspaces $(U,S)$ and $(A,B_t)$ have the same respective dimensions $a$ and $t$, and both have
trivial intersection, $\dim(U\cap S) = \dim(A\cap B_t) = 0$. By Lemma~\ref{lem:subspace-pair}, there exists an invertible
transformation $\widetilde G_{L,1}$ on $A\oplus B$ such that
\begin{equation}
\widetilde G_{L,1}(U)=A,
\quad
\widetilde G_{L,1}(S)=B_t\subseteq B.
\end{equation}
Since
$\begin{pmatrix}M'_{AA}\\M'_{BA}\end{pmatrix}$ has rank $a$ and $\operatorname{Im}
\begin{pmatrix}
M'_{AA}\\
M'_{BA}
\end{pmatrix} = U$, there exists an invertible matrix $H_A$ on $A$ such
that
\begin{equation}
\widetilde G_{L,1}
\begin{pmatrix}
M'_{AA}\\
M'_{BA}
\end{pmatrix}
=
\begin{pmatrix}
H_A\\
0
\end{pmatrix}.
\end{equation}
Similarly, since $\widetilde G_{L,1}(S)\subseteq B$, there exists a
matrix $M'_{BC}$ such that
\begin{equation}
\widetilde G_{L,1}
\begin{pmatrix}
M_{AC}\\
M_{BC}
\end{pmatrix}
=
\begin{pmatrix}
0\\
M'_{BC}
\end{pmatrix}.
\end{equation}
Define
\begin{equation}
G_{L,1} = (H_A^{-1}\oplus I_B)\widetilde G_{L,1}.
\end{equation}
Then,
\begin{equation}
G_{L,1}
\begin{pmatrix}
M'_{AA}\\
M'_{BA}
\end{pmatrix}
=
\begin{pmatrix}
I_A\\
0
\end{pmatrix},
\quad
G_{L,1}
\begin{pmatrix}
M_{AC}\\
M_{BC}
\end{pmatrix}
=
\begin{pmatrix}
0\\
M'_{BC}
\end{pmatrix}.
\end{equation}
Thus, after basis transformation $G_{L,1}$ and $G_{R,1}$, we have that
\begin{equation}\label{eq:afterGL1}
(G_{L,1}\oplus I_C)
M_{ABC}
(G_{R,1}\oplus I_C)
=
\begin{pmatrix}
I_A & M''_{AB} & 0\\
0 & M''_{BB} & M'_{BC}\\
0 & M'_{CB} & M_{CC}
\end{pmatrix},
\end{equation}
with some matrices $M''_{AB}$ and $M''_{BB}$. Now the only part left is to eliminate the block $M''_{AB}$. Clearly, by column Gaussian elimination multiplying $M^{''}_{AB}$ with $I_A$ and adding to the $B$ subspace, one can eliminate $M^{''}_{AB}$. Thus, there exists an invertible matrix $G_{R,2}$ on $A\oplus B$, such that
\begin{equation}
(G_{L,1}\oplus I_C)M_{ABC}(G_{R,1}G_{R,2}\oplus I_C) = \begin{pmatrix}
I_A & 0 & 0 \\
0 & M^{''}_{BB} & M'_{BC} \\
0 & M'_{CB} & M_{CC} \\
\end{pmatrix}.
\end{equation}
Set $P_{AB} = G_{L,1}^{-1}$, $Q_{AB} = (G_{R,1}G_{R,2})^{-1}$, and $R_{BC} = \begin{pmatrix}
M^{''}_{BB} & M'_{BC} \\
M'_{CB} & M_{CC} \\
\end{pmatrix}$, we obtain that
\begin{equation}
M_{ABC} = (P_{AB}\oplus I_C)(I_A\oplus R_{BC})(Q_{AB}\oplus I_C),
\end{equation}
which completes the proof
\end{proof}
We can now finish proving the main result. We define $r = \ell \bmod 3\in \{1, 2\}$ and set $\dim A = \dim C = r$, $\dim B = \ell- 2r$. As long as $\ell\geq 4r$, we have $b\geq 2c$ and can apply Lemma~\ref{lem:sl-overlap}. Except for $\ell \in\{1, 2, 5\}$, Lemma~\ref{lem:sl-overlap} works, and we can generate $\SL(\ell; \F_2)$ with three copies of $\SL(\ell-r; \F_2)$. Among these three copies, there are two copies sharing the same set of generators. This gives $2 \times 14 = 28$ generators in total. Applying Lemma~\ref{lem:bounded-product}, we still get that the Kazhdan constant $\kappa_{\SL} = \Omega(1)$. 
\end{proof}
Note that if $\ell < 6$, then the system size is a small constant. There is no need to discuss the generators in this case, since we only consider the large-$\ell$ limit. Hence, we omit this requirement in the result.\\

Building on Lemmas~\ref{lem:bounded-product},~\ref{lem:symplecticdecompose}, and~\ref{lem:sl_generators}, finding all the required conjugation elements $g_i$ in the product decomposition leads to a generating set for the symplectic group with the desired properties. Note that though a symplectic group element is generally a product of $130$ elements of the special linear group and its conjugates, the size of the generating set can be found to be only $12$ times larger since many of the conjugated generators coincide. All the mathematical results related to these conjugation elements and how they are constructed can be found in Ref.~\cite{nikolov2005productdecompositionclassicalquasisimple}. Below, we translate these results into the language of quantum information.
\begin{lemma}[Constant-size generating set for the symplectic group]\label{lem:symplectic_generators}
For any integer $\ell \geq 1$, there exists an explicit symmetric generating set $S_{\Sp,\ell}$ for $\Sp(2\ell; \F_2)$ satisfying the following properties:
\begin{enumerate}
\item The size of the generating set is bounded by $|S_{\Sp,\ell}| \leq 12 |S_{\SL,\ell}| \leq 12\times 28 = 336 = \cO(1)$.
\item $\Sp(2\ell; \F_2)$ has a Kazhdan constant $\kappa_{\Sp} = \Omega(1)$ with respect to $S_{\Sp,\ell}$.
\end{enumerate}
\end{lemma}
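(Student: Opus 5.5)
The plan is to combine the product decomposition of Lemma~\ref{lem:symplecticdecompose} with the bounded-product comparison of Lemma~\ref{lem:bounded-product}, taking $H=\SL(\ell;\F_2)$ with the generating set $S_{\SL,\ell}$ from Lemma~\ref{lem:sl_generators}. First I will fix the standard embedding $\iota:\SL(\ell;\F_2)\to\Sp(2\ell;\F_2)$, $M\mapsto \diag(M, M^{-T})$, written in the basis $(x,z)$ of $\F_2^{2\ell}$. This preserves the symplectic form. At the circuit level it is exactly the CNOT-circuit action on Pauli operators: a CNOT circuit implementing $M$ on $X$-strings implements $M^{-T}$ on $Z$-strings. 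Each $M\in S_{\SL,\ell}$ is (via $\varphi_1$ and the folded layout) a constant-depth 1D CNOT layer, as in the Kassabov realization in the proof of Lemma~\ref{lem:1d-kassabov}, together with the three-copy gluing of Lemma~\ref{lem:sl-overlap}. The homomorphisms are then $\phi_i(M)=g_i\,\iota(M)\,g_i^{-1}$ with the conjugating elements $g_i$ from Lemma~\ref{lem:symplecticdecompose}. Setting $S_{\Sp,\ell}=\bigcup_i \phi_i(S_{\SL,\ell})$, Lemma~\ref{lem:bounded-product} with $M\le 200$ gives
\[\cK(\Sp(2\ell;\F_2);S_{\Sp,\ell})\ge \frac{\kappa_{\SL}}{200\sqrt2}=\Omega(1).\]
Symmetry is inherited, because conjugation commutes with inversion and $S_{\SL,\ell}$ is symmetric.

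The second step is counting. The $M$ factors need not involve $M$ distinct conjugates: following Nikolov's proof, the decomposition uses only a few distinct subgroups. These are the Levi-type copy $\iota(\SL)$, its image under the global ``Hadamard'' swap $J:(x,z)\mapsto(z,x)$, and conjugates by a small number of fixed transvection-type elements, such as $\begin{pmatrix}I&D\\0&I\end{pmatrix}$ with $D$ a symmetric diagonal or tridiagonal matrix, realized by layers of $S$ and CZ gates. I will go through Nikolov's construction, list the distinct $g_i$, and check that there are at most $12$ of them up to coincidence. That gives $|S_{\Sp,\ell}|\le 12|S_{\SL,\ell}|\le 336$. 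Repeated factors do not enlarge the generating set, and they only enter through the product length $M$ in Lemma~\ref{lem:bounded-product}.

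The third step is the explicit circuit picture. Each $g_i$ should be a constant-depth 1D Clifford built from $H^{\otimes \ell}$, $S$ layers, and nearest-neighbour CZ layers. Each conjugated generator $g_i\iota(M)g_i^{-1}$ then has constant depth, and this depth survives the folded ordering. The small cases $\ell<6$ are constant-size and I will handle them trivially.

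I expect the main obstacle to be the second step: extracting Nikolov's abstract decomposition explicitly over $\F_2$ and in characteristic two. The symmetric-form subtleties there, namely alternating versus symmetric $D$, mean the needed transvection subgroups may not be directly conjugate to $\iota(\SL)$. The first thing I would try is to replace each needed unipotent radical piece by a conjugate of $\iota(\SL(\ell))$ under a suitable constant-depth Clifford. I would then verify the bounds $M\le 200$ (tightened to $130$) and the count of $12$ distinct conjugators directly.
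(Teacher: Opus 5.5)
Your route is the paper's. You embed $\SL(\ell;\F_2)$ as the Levi copy $G=\{\diag(g,g^{-T})\}$, take the conjugates of $G$ supplied by Nikolov's decomposition (Lemma~\ref{lem:symplecticdecompose}), apply Lemma~\ref{lem:bounded-product} to get $\kappa_{\Sp}=\Omega(1)$, and let $S_{\Sp,\ell}$ be the union of the conjugated copies of $S_{\SL,\ell}$. The Kazhdan bound and the symmetry argument are fine as written. What you leave as a promissory note is exactly the content of item~1. Until the distinct conjugators are identified you only have $|S_{\Sp,\ell}|\le M|S_{\SL,\ell}|$, which is $\cO(1)$ but not $12\cdot 28$.

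The paper fills this in as follows, and your guess about the shape of the conjugators is essentially right.
\begin{itemize}
\item It uses the factorization $\Sp(2\ell;\F_2)=(U^+U^-)^6U^+$ (Liebeck--Shalev, Nikolov) with $U^-=JU^+J^{-1}$. So only $U^+$ needs to be analysed, and every conjugator for $U^-$ is $J$ times one for $U^+$.
\item Nikolov's Lemma~1 and Proposition~1 give $U^+\subseteq X_{r_0}GW\prod_{j=1}^4(GG^{w_j})$.
\item Over $\F_2$ one checks that $w_4=1$, $X_{r_0}\subseteq GG^{s_1}$ and $W\subseteq G^{s_2}$. Hence $U^+\subseteq (GG^{s_1})(GG^{s_2})\prod_{j=1}^3(GG^{w_j})$.
\item Here $s_1=H_1$ and $s_2=S_2$, $w_1,w_2$ are the CZ layers on even and odd bonds, and $w_3=S^{\otimes\ell}$.
\end{itemize}
This yields the six conjugators $\{1,s_1,s_2,w_1,w_2,w_3\}$ plus their $J$-translates, which gives the factor $12$.

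The characteristic-two issue you flag is handled inside this list. The non-alternating $S$-type conjugators ($s_2$, $w_3$) sit alongside the alternating CZ-type ones ($w_1$, $w_2$), so nothing beyond Lemma~\ref{lem:symplecticdecompose} has to be replaced.

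One minor convention slip: with column vectors $(x,z)$, as in your $\diag(M,M^{-T})$, $S$ and CZ layers act as $\begin{pmatrix}I&0\\D&I\end{pmatrix}$, not $\begin{pmatrix}I&D\\0&I\end{pmatrix}$. This is harmless, since conjugation by $J$ swaps the two forms.
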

\begin{proof}[Proof of Lemma~\ref{lem:symplectic_generators}]
By Lemma~\ref{lem:bounded-product} and Lemma~\ref{lem:symplecticdecompose}, the Kazhdan constant $\kappa_{\Sp} \geq \frac{\kappa_{\SL}}{130\sqrt{2}} = \Omega(1)$. Below, we show the explicit form of the generators. By Lemma~\ref{lem:symplecticdecompose}, the generating set is composed of generators in $S_{\SL, \ell}$ and their conjugation, where $S_{\SL, \ell}$ is given in Lemma~\ref{lem:sl_generators}. Note that the group $\SL(\ell; \F_2)$ and $\Sp(2\ell; \F_2)$ act on different spaces, however we can naturally embed $\SL(\ell; \F_2)$ into $\Sp(2\ell; \F_2)$ via the following mapping\footnote{This is akin to considering the Pauli action of the circuit representation of the special linear group elements}:
\begin{equation}
\label{eq:sl-embed}
    \phi(g) = 
    \begin{pmatrix}
g & \mathbf{0}\\
\mathbf{0} & (g^{-1})^T
\end{pmatrix},\;\; g\in \SL(\ell; \F_2).
\end{equation}
Correspondingly, the generating set $S_{\SL, \ell}$ is mapped to
\begin{equation}
S_{\Sp, \ell, 0} = \{\phi(g), g\in S_{\SL, \ell}\}.
\end{equation}
Note that $S_{\Sp, \ell, 0}$ is the generating set of group $G$ defined in Lemma~\ref{lem:symplecticdecompose}. The whole generating set $S_{\Sp, \ell}$ is then constructed by
\begin{equation}
S_{\Sp, \ell} = \bigcup_i g_i S_{\Sp, \ell, 0} g_i^{-1} = \bigcup_i \big\{g_i \phi(g) g_i^{-1}, g\in S_{\SL, \ell}\big\}.
\end{equation}
Below we show the explicit form of each conjugation element $g_i$, which corresponds to a specific symplectic matrix. First, based on Theorem~$D$ in Ref.~\cite{liebeck2001finitelineargroups} or Theorem~$2$ in Ref.~\cite{nikolov2005productdecompositionclassicalquasisimple}, the whole symplectic group has the following decomposition:
\begin{equation}
\Sp(2\ell; \F_2) = (U^+U^-)^6U^+,
\end{equation}
where
\begin{equation}
U^- = JU^+ J^{-1},\quad J = \begin{pmatrix}
\mathbf{0} & I_{\ell}\\
I_{\ell} & \mathbf{0}
\end{pmatrix}.
\end{equation}
Thus, we only need to focus on the conjugation elements for $U^+$. The conjugation elements for $U^-$ can be obtained by considering an additional conjugation with $J$ (the symplectic form, and itself an element of the symplectic group). Using Lemma~$1$ and Proposition~$1$ in Ref.~\cite{nikolov2005productdecompositionclassicalquasisimple}, $U^+$ has the following decomposition,
\begin{equation}
U^+ \subseteq  X_{r_0}G W \cdot\prod_{j=1}^4 (GG^{w_j}),
\end{equation}
where $G = \phi(\SL(\ell; \F_2))$. For $\Sp(2\ell; \F_2)$, by direct calculation, one can find that $w_4 = 1$, $X_{r_0} \subseteq GG^{s_1}$ and $W\subseteq G^{s_2}$ for some conjugation elements $s_1$ and $s_2$. Thus,
\begin{equation}
U^+ \subseteq\prod_{i=1}^2 (GG^{s_i}) \cdot\prod_{j=1}^3 (GG^{w_j}).
\end{equation}
The conjugation elements $s_i$ and $w_j$ are labeled by root elements in Ref.~\cite{nikolov2005productdecompositionclassicalquasisimple}. Here, we translate them to explicit symplectic matrices shown below.
\begin{align}
&s_1 = 
\begin{pmatrix}
\mathbf{0} & \mathbf{0} & 1 & \mathbf{0}\\
\mathbf{0} & I_{\ell-1} & \mathbf{0} & \mathbf{0}\\
1 & \mathbf{0} & \mathbf{0} & \mathbf{0}\\
\mathbf{0} & \mathbf{0} & \mathbf{0} & I_{\ell-1}
\end{pmatrix}, \\
&s_2 = 
\begin{pmatrix}
I_{\ell} & \mathbf{0}\\
E_{22} & I_{\ell}
\end{pmatrix}, \\
&w_1 = \prod_{i=1}^{\lfloor \ell / 2 \rfloor} \begin{pmatrix} I & \mathbf{0} \\ E_{2i-1,2i} + E_{2i,2i-1} & I \end{pmatrix}, \\
&w_2 = \prod_{i=1}^{\lfloor (\ell-1) / 2 \rfloor} \begin{pmatrix} I & \mathbf{0} \\ E_{2i,2i+1} + E_{2i+1,2i} & I \end{pmatrix}, \\
&w_3 = \prod_{i=1}^{\ell} \begin{pmatrix} I & \mathbf{0} \\ E_{i,i} & I \end{pmatrix}.
\end{align}
One can count that $U^+$ is composed of 10 elements from the conjugation group of $G = \phi(\SL(\ell; \F_2))$. Thus, $\Sp(2\ell; \F_2)$ is composed of at most 130 elements from the conjugation groups of $\phi(\SL(\ell; \F_2))$. At the level of the generating set, one can show straightforwardly that $U^+$ contributes $6$ new conjugation elements, $\{1, s_1, s_2, w_1, w_2, w_3\}$, and $U^{-}$ also contributes $6$ new elements. Thus, the size of the generating set of the symplectic group is $12$ times larger than that of the special linear group, which is upper bounded by $12\times 28 = 336$.
\end{proof}

Now we extend the symplectic group to the projective Clifford group and discuss the quantum realization of the generators. Note that we know the symplectic form of the conjugation elements. We can map them to quantum gates via their natural action on the projective Pauli group (which is just the binary symplectic vector space). Because the projective Pauli group $\overline{\mathrm{P}}_{\ell}$ is a finite abelian group, we can directly employ Hadad's abelian-extension theorem shown in Lemma~\ref{lem:hadad-abelian} to bootstrap the Kazhdan constant from the quotient to the extension group. Note that we need to first transform the Kazhdan constant to the average Kazhdan constant to use Lemma~\ref{lem:hadad-abelian}, and then do the reverse transformation. The properties of the resulting generating set are summarized in the following lemma.

\begin{lemma}[Constant-depth 1D generators for the projective Clifford group]
\label{lem:projective_generators}
Suppose $\ell \geq 1$. Let $\widetilde{S}_{\Sp,\ell} \subset \overline{\Cl}_{\ell}$ be the Clifford lift of the symplectic generating set $S_{\Sp,\ell}$, and let $X_1 \in \overline{\mathrm{P}}_{\ell}$ be the single-qubit Pauli $X$ gate acting on the first qubit. The generating set
\begin{equation}
    S_{\overline{\Cl},\ell} = \widetilde S_{\Sp,\ell} \cup \{X_1\}
\end{equation}
satisfies the following properties:
\begin{enumerate}
\item The size is bounded by $|S_{\overline{\Cl},\ell}| \leq |\widetilde S_{\Sp,\ell}| + 1 \leq 337 = \cO(1)$.
\item Each generator can be implemented as a one-dimensional, nearest-neighbor quantum circuit of depth $\cO(1)$.
\item $\overline{\Cl}_{\ell}$ has a Kazhdan constant $\kappa_{\overline{\Cl}} = \Omega(1)$ with respect to $S_{\overline{\Cl},\ell}$.
\end{enumerate}
\end{lemma}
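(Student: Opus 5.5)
The plan is to prove the three items in order, with item 3 carrying the real content.

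\textbf{Item 1: size.} This is immediate. $|\widetilde S_{\Sp,\ell}|=|S_{\Sp,\ell}|\le 336$ by Lemma~\ref{lem:symplectic_generators}, and we add the single element $X_1$. If symmetry is required, $X_1$ is its own inverse and the lifts can be chosen closed under inverses, so the total is at most $337$.

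\textbf{Item 2: constant depth.} I would lift each generator explicitly to a Clifford circuit.

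For $g\in S_{\SL,\ell}$, the embedding $\phi(g)$ in Eq.~\eqref{eq:sl-embed} is realized by the CNOT circuit implementing $x\mapsto gx$ on computational basis states. Its symplectic action is exactly $\mathrm{diag}(g,(g^{-1})^T)$ on $(X,Z)$ labels.

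For $\ell=3s$, the generators $\varphi_1(E_{\alpha,\beta}(r))$ with $r\in\{1,a,b\}$ are the CNOT layers of Eqs.~\eqref{eq:slcircuit1}, \eqref{eq:slcircuita} and~\eqref{eq:slcircuitb}. I would place qubits in the folded cell order of Fig.~\ref{fig:kassabov-folded-cells}, now with $3$-qubit cells $\{y_{1,a},y_{2,a},y_{3,a}\}$. Observation~\ref{observation:constant-routing} then gives depth $\cO(1)$. For general $\ell$, the generators act on the contiguous intervals $A\sqcup B$ or $B\sqcup C$ of Lemma~\ref{lem:sl-overlap}, each with its own folded layout, so depth remains $\cO(1)$.

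Next come the conjugations by $s_1,s_2,w_1,w_2,w_3$ and $J$. Each of these lifts to a constant-depth circuit:
\begin{itemize}
\item $J$ is a layer of Hadamards.
\item $s_1$ is a Hadamard on qubit~1.
\item $s_2$ is a phase gate on qubit~2.
\item $w_3$ is a layer of $S$ gates.
\item $w_1$ and $w_2$ are brickwork layers of $\mathrm{CZ}$ gates on pairs $(2i-1,2i)$ and $(2i,2i+1)$.
\end{itemize}
The one subtlety is that $w_1,w_2$ and $s_1,s_2$ refer to the standard qubit labels, while the $\SL$ generators use the folded layout. The cleanest fix is to take the qubit ordering in the $\SL$ construction to be the physical one and relabel the $w_j,s_i$ accordingly. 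A relabeling $\sigma$ of basis vectors is itself in $\SL$, so conjugating the whole decomposition by $\phi(\sigma)$ preserves it. In that case $w_1,w_2$ become $\mathrm{CZ}$ layers between qubits that are adjacent in the original labeling, hence at bounded distance in the folded layout. Each conjugated generator $\tilde g_i\,\tilde\phi(g)\,\tilde g_i^{-1}$ is then a product of three constant-depth circuits. $X_1$ has depth $1$.

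\textbf{Item 3: Kazhdan constant.} I would apply Lemma~\ref{lem:hadad-abelian} to the extension in Eq.~\eqref{eq:clifford_exact_2}, with $A=\overline{\mathrm{P}}_\ell\cong\F_2^{2\ell}$, $H=\Sp(2\ell;\F_2)$, $S=S_{\Sp,\ell}$ and $B=\{X_1\}$.

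First, by the stated comparison, $\cK_{\av,H}\ge \kappa_{\Sp}^2/336=\Omega(1)$.

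Second, the orbit $\widetilde B$ of $X_1$ under $\Sp$ is every nonzero vector of $\F_2^{2\ell}$, since $\Sp$ acts transitively on nonzero vectors. The nontrivial irreducible representations of $A$ are characters $\chi_u(v)=(-1)^{\langle u,v\rangle}$ with $u\ne0$. Exactly $2^{2\ell-1}$ of the $2^{2\ell}-1$ nonzero $v$ satisfy $\chi_u(v)=-1$, and each contributes $|\chi_u(v)-1|^2=4$. Hence
\[
\cK_{\av,A}=\frac{4\cdot 2^{2\ell-1}}{2^{2\ell}-1}\ge 2.
\]

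Third, Hadad's bound gives
\[
\cK_{\av,\overline{\Cl}}\ge \frac{\Omega(1)\cdot 2}{512(1+336+1/336)}=\Omega(1).
\]
Finally, $\cK^2\ge\cK_{\av}$ yields $\kappa_{\overline{\Cl}}=\Omega(1)$.

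\textbf{Main obstacle.} I expect the only delicate step to be the layout compatibility in item 2: ensuring a single 1D qubit order in which both the folded $\SL$ generators and the conjugating elements $w_1,w_2,s_1,s_2$ are simultaneously constant-depth. I would try the relabeling-by-permutation argument first.
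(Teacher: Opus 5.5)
Your proposal is correct and follows the paper's proof essentially step for step: the same lift of the conjugated generators $g_i\phi(g)g_i^{-1}$ to CNOT layers conjugated by $H^{\otimes\ell}$, $H_1$, $S_2$, CZ layers and $\bigotimes_i S_i$, and the same application of Lemma~\ref{lem:hadad-abelian} with $B=\{X_1\}$, $\widetilde B=\overline{\mathrm{P}}_\ell\setminus\{I\}$ and $\cK_{\av,A}\geq 2$. The layout-compatibility issue you flag is passed over silently in the paper, and your relabeling argument (conjugating the decomposition by $\phi(\sigma)$, which lies in $\phi(\SL(\ell;\F_2))$) settles it; for general $\ell$ it is simplest to keep the physical order equal to the standard labeling, so that $w_1,w_2,s_1,s_2$ are already depth one, and to give each $\SL(3s;\F_2)$ factor on $A\sqcup B$ and $B\sqcup C$ its own internal folded relabeling, which changes neither the generated subgroup nor its Kazhdan constant.
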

\begin{proof}[Proof of Lemma~\ref{lem:projective_generators}]
We first discuss how to lift a binary symplectic matrix to a Clifford circuit (this is standard, we go over it in the interest of being self-contained). For any vector $x = (x_1,\cdots, x_\ell)^T, z = (z_1,\cdots, z_\ell)^T\in \F_2^{\ell}$, one can define an associated Hermitian Pauli operator. Set $a = (x, z)^T$ and define
\begin{equation}\label{eq:hermitian-Weyl}
W_a = W_{x,z} := i^{x^{T}z}X^xZ^z, \quad X^x:=\bigotimes_{i=1}^{\ell}X_i^{x_i}, Z^z:=\bigotimes_{i=1}^{\ell}Z_i^{z_i},
\end{equation}
where $X_i$ and $Z_i$ are Pauli $X$ and $Z$ operators on qubit $i$. The Pauli group $\mathrm{P}_{\ell}$ is generated by $\{W_{a}, a\in \F_2^{2\ell}\}$, and the projective Pauli group $\overline{\mathrm{P}}_{\ell} = \mathrm{P}_{\ell}/\langle iI\rangle$. The vector $(x, z)^T$ belongs to space $\F_2^{\ell}\oplus \F_2^{\ell}$ equipped with symplectic inner product. For
$a=(x,z)^{T}$ and $b=(x',z')^{T}$, $\langle a,b\rangle_{\mathrm{sp}} = x^{T}z' + z^{T}x'$, and $W_aW_b = (-1)^{\langle a,b\rangle_{\mathrm{sp}}} W_bW_a$.

After modding out the phase, multiplication of Pauli operators corresponds to addition in $\F_2^{2\ell}$. Thus, there is a group isomorphism between the projective Pauli group $\overline{\mathrm{P}}_{\ell}$ and $\F_2^{2\ell}$,
\begin{equation}
\F_2^{2\ell}\longrightarrow \overline{\mathrm{P}}_{\ell}, a\longmapsto [W_a],
\end{equation}
where $[W_a]$ is a representative element for the set $\langle iI\rangle \times W_a$.

For every Clifford gate $U\in \Cl_{\ell}$, we denote its representative element after modding out the phase as $[U]$, which belongs to the projective Clifford group $\overline{\Cl}_{\ell}$. Recall that a Clifford gate $U$ will transform a Pauli operator $W_a$ into another Pauli operator $e^{i\theta}W_b$ with some phase factor $e^{i\theta}$. After removing the phase, we denote it as $[U][W_a][U]^{-1} = [W_b]$. Under this notation, there is a quotient map $\iota: \overline{\Cl}_{\ell}\rightarrow \Sp(2\ell; \F_2)$ such that
\begin{equation}
[U][W_a][U]^{-1} = [W_{\iota([U])a}].
\end{equation}
The lifting map from $\Sp(2\ell; \F_2)$ to $\overline{\Cl}_{\ell}$ can be chosen as any map $f: \Sp(2\ell; \F_2)\rightarrow\overline{\Cl}_{\ell}$ such that for all $g\in \Sp(2\ell; \F_2)$ we have:
\begin{equation}
(\iota\circ f)(g) = g.
\end{equation}
To fix a lifting map $f$, we resort to stabilizer formalism~\cite{aaronson2004improved}. A projective Clifford operator $[U]$ can be fixed by considering its action on all single-qubit Pauli $X$ and $Z$ operators. Let $e_1,\ldots,e_{2\ell}$ be the standard basis
of $\F_2^{2\ell}$ where $e_i$ has $1$ for the $i$-th bit but $0$ for the others. Clearly, $W_{e_i} = X_i$ if $1\leq i\leq \ell$, and $W_{e_i} = Z_i$ if $\ell+1\leq i\leq 2\ell$. For each
$g\in\Sp(2\ell,\mathbb F_2)$, we define its projective Clifford lift $f(g)$ as
\begin{equation}
[f(g)] [W_{e_i}] [f(g)]^{-1} = [W_{ge_i}],\quad 1\leq i\leq 2\ell.
\end{equation}
Because $g$ is a symplectic matrix and preserves the symplectic inner product, the operators $W_{ge_i}$ maintain the pairwise commutation relations as $W_{e_i}$ and hence $[f(g)]$ is a valid projective Clifford operation. The circuit representation of $f(g)$ can be obtained by a standard Clifford synthesis algorithm~\cite{aaronson2004improved}.

After fixing $f$, we get a lift of the generating set $S_{\Sp, \ell}$ for the symplectic group $\Sp(2\ell; \F_2)$ to a set in the projective Clifford group,
\begin{equation}
\widetilde{S}_{\Sp,\ell} = f(S_{\Sp, \ell}) = \bigcup_i \{f(g_i \phi(g) g_i^{-1}), g\in S_{\SL, \ell}\} = \bigcup_i \{f(g_i) (f\circ \phi)(g) f(g_i)^{-1}, g\in S_{\SL, \ell}\}.
\end{equation}
Note that the last equality does not mean that $f$ is a group homomorphism. It just means that $f(g_i) (f\circ \phi)(g) f(g_i)^{-1})$ is a Clifford lift of $g_i \phi(g) g_i^{-1}$. This follows from the fact that $\iota$ is a group homomorphism. We have that $\iota(f(g_i) (f\circ \phi)(g) f(g_i)^{-1})) = (\iota\circ f)(g_i) (\iota\circ f)(\phi(g)) (\iota\circ f)(g_i^{-1}) = g_i \phi(g) g_i^{-1}$.

Below, we list the circuit representation $(f\circ \phi)(g)$ for $g$ from the special linear group $\SL(\ell; \F_2)$, and the circuit representation $f(g_i)$ for conjugation elements $g_i$ from the symplectic group $\Sp(2\ell; \F_2)$.

Recall that the generators of $\SL(\ell; \F_2)$ are composed of generators of $\SL(3s; \F_2)$ with $s = \lfloor \ell/3\rfloor$, where the generating set $S_{\SL, 3s}$ is given by Eq.~\eqref{eq:gamma3s}. Now we define the embedding map $\pi$ from $\SL(3s; \F_2)$ to $\Alt(2^{3s}-1)$. For $g\in S_{\SL, 3s}$ and a bit string $z\in \F_2^{3s}$, we have that
\begin{equation}
\pi(g) z = gz.
\end{equation}
This embedding map is actually the one defined in Eq.~\eqref{eq:alternating_union}. Nonetheless, there is only one register and no control bit in this case. One can find that for all $ g\in S_{\SL, 3s}$,
\begin{equation}\label{eq:interchangingmap}
\pi(g) = (f\circ \phi)(g).
\end{equation}
This is best illustrated by an example.  If we take $g$ as $\varphi_1(E_{\alpha,\beta}(1)) = E_{\alpha,\beta}(I_s)$. Then for an input bit string $(y_1, y_2, y_3)^T$, where $y_{\alpha} = (y_{\alpha,1}, y_{\alpha,2}, \cdots, y_{\alpha,s})^T$, we have that $\pi(g)$ has the following map (as shown in the proof of Lemma~\ref{lem:1d-kassabov}):
\begin{equation}
y_{\alpha,a} \mapsto y_{\alpha,a} \oplus y_{\beta,a}, \quad a=1,\ldots,s,
\end{equation}
which is a parallel layer of CNOT gates $\mathrm{CNOT}_{y_{\beta,a}\to y_{\alpha,a}}$. On the other hand,
\begin{equation}
\phi(g) = \begin{pmatrix}
E_{\alpha,\beta}(I_s) & \mathbf{0}\\
\mathbf{0} & E_{\beta, \alpha}(I_s)
\end{pmatrix}.
\end{equation}
Thus, $\phi(g)$ will map the Pauli string $(x,z)$ by updating $x_{\alpha, a} \mapsto x_{\alpha, a} + x_{\beta, a}$ and $z_{\beta, a} \mapsto z_{\beta, a} + z_{\alpha, a}$ for all $1\leq a\leq s$. We use $(\alpha, a)$ to label the index of qubits. This exactly corresponds to the parallel layer of CNOT gates $\mathrm{CNOT}_{y_{\beta,a}\to y_{\alpha,a}}$ we noted earlier. \\

Thanks to Eq.~\eqref{eq:interchangingmap}, the quantum circuit lift for $S_{\SL, 3s}$ is
\begin{equation}
\begin{split}
\pi(S_{\SL, 3s}) = &\{\pi(g), g\in S_{\SL, 3s}\} = \bigg\{(\pi\circ\varphi_1)(E_{\alpha,\beta}(1)), (\pi\circ\varphi_1)(E_{\alpha',\beta'}(a)), (\pi\circ\varphi_1)(E_{\alpha',\beta'}(b)),\\ 
&\hspace{10.5em}\alpha\neq\beta\in \{1,2,3\}, (\alpha',\beta')\in \{(1,2), (2,3), (2,1), (3,2)\} \bigg\}.
\end{split}
\end{equation}
The circuit realization of all elements above have already been discussed in the proof of Lemma~\ref{lem:1d-kassabov}, which are listed in Eqs.~\eqref{eq:slcircuit1},~\eqref{eq:slcircuita}, and~\eqref{eq:slcircuitb}. All these circuits are realizable in a constant depth in 1D. Finally we must give quantum circuit  realizations of $f(g_i)$ for all conjugation elements $g_i\in \{J, s_1, s_2, w_1, w_2, w_3\}$ that fill out the rest of the symplectic group. The associated circuit is found by computing the action of $g_i$ on Pauli strings directly. We have
\begin{align}
&J = \begin{pmatrix}
\mathbf{0} & I_{\ell}\\
I_{\ell} & \mathbf{0}
\end{pmatrix}, \tag{Parallel Hadamard gate, $f(J) = H^{\otimes \ell}$} \\
&s_1 = 
\begin{pmatrix}
\mathbf{0} & \mathbf{0} & 1 & \mathbf{0}\\
\mathbf{0} & I_{\ell-1} & \mathbf{0} & \mathbf{0}\\
1 & \mathbf{0} & \mathbf{0} & \mathbf{0}\\
\mathbf{0} & \mathbf{0} & \mathbf{0} & I_{\ell-1}
\end{pmatrix}, \tag{Hadamard gate on the first qubit, $f(s_1) = H_1$} \\
&s_2 = 
\begin{pmatrix}
I_{\ell} & \mathbf{0}\\
E_{22} & I_{\ell}
\end{pmatrix}, \tag{Phase gate on the second qubit, $f(s_2) = S_2$} \\
&w_1 = \prod_{i=1}^{\lfloor \ell / 2 \rfloor} \begin{pmatrix} I & \mathbf{0} \\ E_{2i-1,2i} + E_{2i,2i-1} & I \end{pmatrix}, \tag{Parallel CZ gate, $f(w_1) = \bigotimes_{i=1}^{\lfloor \ell / 2 \rfloor}\mathrm{CZ}_{2i-1,2i}$} \\
&w_2 = \prod_{i=1}^{\lfloor (\ell-1) / 2 \rfloor} \begin{pmatrix} I & \mathbf{0} \\ E_{2i,2i+1} + E_{2i+1,2i} & I \end{pmatrix}, \tag{Parallel CZ gate, $f(w_2) = \bigotimes_{i=1}^{\lfloor (\ell-1) / 2 \rfloor}\mathrm{CZ}_{2i,2i+1}$} \\
&w_3 = \prod_{i=1}^{\ell} \begin{pmatrix} I & \mathbf{0} \\ E_{i,i} & I \end{pmatrix}. \tag{Parallel phase gate, $f(w_3) = \bigotimes_{i=1}^{\ell} S_i$}
\end{align}
It is clear that all these conjugation elements can be realized in depth $1$ on a line.

Next, we apply Lemma~\ref{lem:hadad-abelian} to extend the Clifford lift of the symplectic group to the projective Clifford group. We identify the sequence components in Lemma~\ref{lem:hadad-abelian} as $\Gamma = \overline{\Cl}_{\ell}$, the quotient $H = \Sp(2\ell; \F_2)$, and the abelian kernel $A = \overline{\mathrm{P}}_{\ell}$. The quotient generating set is $S = S_{\Sp,\ell}$. The set $B$ is chosen as the Pauli $X$ operator (without phase) on the first qubit, $B = \{X_1\} \subseteq \overline{\mathrm{P}}_{\ell}$, where $|B| = 1$. Accordingly, its full $H$-orbit, $\widetilde B = \{X_1^h : h \in \Sp(2\ell; \F_2)\} = \overline{\mathrm{P}}_{\ell} \setminus \{I\}$.

We now explicitly calculate the average Kazhdan constant $\cK_{\av,A}$ of the abelian kernel $\overline{\mathrm{P}}_{\ell}$ with respect to $\widetilde B$. Every irreducible representation of $\overline{\mathrm{P}}_{\ell}$ is a character of the form $\chi_y(a) = (-1)^{\langle y,a\rangle_{\mathrm{sp}}}$ for some vector $y \in \overline{\mathrm{P}}_{\ell}$. For any non-trivial representation ($y \neq 0$), the symplectic inner product $\langle y, a \rangle_{\mathrm{sp}}$ evaluates to $1$ for exactly half of the elements in $\overline{\mathrm{P}}_{\ell}$, and $0$ for the other half. Among the $|\widetilde B| = 2^{2\ell} - 1$ non-zero elements, exactly $2^{2\ell-1}$ elements yield $\chi_y(a) = -1$. By definition, the average Kazhdan constant is calculated as
\begin{equation}
    \frac{1}{|\widetilde B|}\sum_{a \in \widetilde B}|\chi_y(a)-1|^2 = \frac{1}{2^{2\ell}-1} \sum_{\langle y,a \rangle_{\mathrm{sp}} = 1} |-2|^2 = \frac{4 \cdot 2^{2\ell-1}}{2^{2\ell}-1} > \frac{2^{2\ell+1}}{2^{2\ell}} = 2.
\end{equation}
This demonstrates that the abelian kernel $\overline{\mathrm{P}}_{\ell}$ possesses an average Kazhdan constant $\cK_{\av,A} \geq 2$ with respect to $\widetilde B$. From Lemma~\ref{lem:symplectic_generators}, the symplectic group has a Kazhdan constant $\kappa_{\Sp} = \Omega(1)$ with a bounded set size $|\widetilde{S}_{\Sp,\ell}| = |S_{\Sp, \ell}| = \cO(1)$. Thus, the average Kazhdan constant $\cK_{\av,H}\geq \kappa_{\Sp}^2/|S_{\Sp, \ell}| = \Omega(1)$. Applying the bound from Lemma~\ref{lem:hadad-abelian}, the projective Clifford group $\overline{\Cl}_{\ell}$ achieves an average Kazhdan constant $\cK_{\av,\overline{\Cl}}$ with respect to the combined generating set $S_{\overline{\Cl},\ell} = \widetilde{S}_{\Sp,\ell} \cup \{X_1\}$ bounded by
\begin{equation}
\cK_{\av,\overline{\Cl}} \geq \frac{\cK_{\av,H} \cdot 2}{512\left(1+|\widetilde{S}_{\Sp,\ell}|/1+1/|\widetilde{S}_{\Sp,\ell}|\right)}.
\end{equation}
Because $\cK_{\av,H}$ and $|\widetilde{S}_{\Sp,\ell}|$ are positive constants, the resulting average Kazhdan constant $\cK_{\av,\overline{\Cl}}$ is a constant. The Kazhdan constant $\kappa_{\overline{\Cl}}\geq \sqrt{\cK_{\av,\overline{\Cl}}}$ is also a positive constant.

Finally, we evaluate the physical circuit realization of the generators. We previously proved that all operations in the symplectic lift $\widetilde S_{\Sp,\ell}$ possess a 1D nearest-neighbor circuit depth of $\cO(1)$. The newly adjoined element $X_1$ is a single-qubit Pauli gate natively executed in depth $1$. Therefore, every generator in the union $S_{\overline{\Cl},\ell}$ possesses a one-dimensional circuit depth of $\cO(1)$, which completes the proof.
\end{proof}

It is important to note that the generating set $S_{\overline{\Cl},\ell}$ constructed in the proof of Lemma~\ref{lem:projective_generators} is not symmetric. There are two non-involutory elements composed of phase gates: $S_2$ and $\bigotimes_{i=1}^{\ell} S_i$. To make the generating set symmetric, we also need to add conjugate elements $S_2^{-1}$ and $\bigotimes_{i=1}^{\ell} S_i^{-1}$ as well as their multiplication with $J$ and special linear group generators. This adds an additional $2\times 2\times \abs{S_{\SL,\ell}} = 112$ generators. Thus, one can construct a symmetric generating set $S'_{\overline{\Cl}, \ell}$ for the projective Clifford group $\overline{\Cl}_{\ell}$ with size bounded by $337+112 = 449 = \cO(1)$. The extra inverse elements $S_2^{-1}$ and $\bigotimes_{i=1}^{\ell} S_i^{-1}$ are also implementable at a constant depth in 1D. Note that by Definition~\ref{def:kazhdanconstant}, the expansion of the generating set does not reduce Kazhdan constant, so with respect to this symmetric generating set the Kazhdan constant still maintains $\Omega(1)$.

\subsection{Lifting the projective Clifford group to the full Clifford group}
Recall that the projective Clifford group and the full Clifford group only differ by a phase. Denote finite phase group $Z_{\ph} = \langle e^{i\pi/4} I \rangle \cong \mathbb{Z}_8$, then:
\begin{equation}
\label{eq:central-clifford-extension}
    1 \longrightarrow Z_{\ph} \longrightarrow \Cl_{\ell} \longrightarrow \overline{\Cl}_{\ell} \longrightarrow 1.
\end{equation}
By introducing two new generators $\{e^{i\pi/4} I, e^{-i\pi/4}I\}$ to $S'_{\overline{\Cl},\ell}$, we can get the generators for the full Clifford group. The Kazhdan constant can be proved to be $\Omega(1)$ by direct calculation, as shown in the following lemma.
\begin{lemma}[Constant-depth 1D generators for the full Clifford group]
\label{lem:fullClifford_generators}
For every integer $\ell\geq 1$, there exists a symmetric generating set $S_{\Cl,\ell}$ for the full Clifford group satisfying the following properties:
\begin{enumerate}
    \item The size is bounded by $|S_{\Cl,\ell}|\leq |S'_{\overline{\Cl},\ell}| + 2 \leq 451 = \cO(1)$.
    \item Each generator can be implemented as a one-dimensional, nearest-neighbor quantum circuit of depth $\cO(1)$.
    \item $\Cl_{\ell}$ has a Kazhdan constant $\kappa_{\Cl} = \Omega(1)$ with respect to $S_{\Cl,\ell}$.
\end{enumerate}
\end{lemma}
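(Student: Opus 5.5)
I would take $S_{\Cl,\ell}=\widetilde S'\cup\{e^{i\pi/4}I,e^{-i\pi/4}I\}$, where $\widetilde S'$ is a fixed lift of the symmetric projective generating set $S'_{\overline{\Cl},\ell}$ to actual Clifford circuits. Concretely, each element of $\widetilde S'$ is the explicit product of $H$, $S$, CZ, CNOT and Pauli gates given in the proof of Lemma~\ref{lem:projective_generators}. Symmetry comes for free: inverses of the circuit lifts are again circuits in $\widetilde S'$ up to a phase, and that phase is absorbed by the phase generators. To be safe, I would define $\widetilde S'$ to include the exact circuit inverses. Since the phase generators are global scalars, they need no gates, so every generator keeps $\cO(1)$ 1D depth. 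The size bound $449+2=451$ is then immediate.

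For the Kazhdan constant, I would use the central extension \eqref{eq:central-clifford-extension}. Since $Z_{\ph}\cong\mathbb{Z}_8$ is finite abelian, this extension is a special case of Lemma~\ref{lem:hadad-abelian}. I take $H=\overline{\Cl}_\ell$, $A=Z_{\ph}$, $S=S'_{\overline{\Cl},\ell}$ with lift $\widetilde S'$, and $B=\{e^{\pm i\pi/4}I\}$. The action of $H$ on $A$ is trivial because $A$ is central, so the orbit is $\widetilde B=B$.

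Next I would compute $\cK_{\av,A}$ directly. The nontrivial characters of $\mathbb{Z}_8$ are $\chi_j(\omega)=\omega^j$ for $j=1,\dots,7$, with $\omega=e^{i\pi/4}$. This gives
\begin{equation}
\frac12\left(|\omega^j-1|^2+|\omega^{-j}-1|^2\right)=|\omega^j-1|^2=2-2\cos(\pi j/4)\ge 2-\sqrt2 .
\end{equation}
I would also convert the Kazhdan constant of $\overline{\Cl}_\ell$ from Lemma~\ref{lem:projective_generators}, which is unchanged by the symmetrization, into $\cK_{\av,H}\ge\kappa_{\overline{\Cl}}^2/449$. Lemma~\ref{lem:hadad-abelian} then yields
\begin{equation}
\cK_{\av}(\Cl_\ell;\widetilde S'\cup B)\ge \frac{\cK_{\av,H}(2-\sqrt2)}{512(1+449/2+2/449)}=\Omega(1).
\end{equation}
Finally, $\kappa_{\Cl}\ge\sqrt{\cK_{\av}}=\Omega(1)$, using $\cK^2\ge\cK_{\av}$ (an irreducible nontrivial component of any representation without invariant vectors gives the bound).

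The only delicate point I expect is making sure the lift is well defined, so that the kernel generated together with $\widetilde S'$ really is all of $Z_{\ph}$ and the generated group is the full $\Cl_\ell$ rather than a proper subgroup. This is guaranteed here because $B$ itself generates $Z_{\ph}$, and $\widetilde S'$ surjects onto $\overline{\Cl}_\ell$.

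Alternatively, I could avoid Hadad altogether. A representation of $\Cl_\ell$ decomposes into isotypic pieces of the center. On pieces where the center acts trivially, the representation factors through $\overline{\Cl}_\ell$, and the bound of Lemma~\ref{lem:projective_generators} applies. On pieces where the center acts nontrivially, $e^{i\pi/4}I$ moves every unit vector by at least $|\omega-1|$. Taking the minimum of the two bounds, with a standard splitting argument for mixed vectors losing a factor $\sqrt2$, again gives $\Omega(1)$. I regard this as a cleaner backup argument.
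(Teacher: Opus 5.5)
Your proposal is correct. Your backup argument is essentially the paper's proof, but your main route is different.

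\textbf{What the paper does.} It never applies Hadad's theorem to the central extension \eqref{eq:central-clifford-extension}. It works directly on a nontrivial irreducible $\rho$ of $\Cl_\ell$. By Schur's lemma, $\rho(e^{i\pi/4}I)=e^{im\pi/4}I$, and there are two cases:
\begin{itemize}
\item If $m=0$, then $\rho$ factors through $\overline{\Cl}_\ell$. The lifted projective generators alone give average Kazhdan constant at least $\frac{449}{451}\cK_{\av,\overline{\Cl}}$.
\item If $m\neq0$, the two phase generators alone give at least $2|e^{i\pi/4}-1|^2/451$.
\end{itemize}
The paper then concludes with $\kappa_{\Cl}\ge\sqrt{\cK_{\av,\Cl}}$. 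Because it works with irreducibles, which is how $\cK_{\av}$ is defined, each representation falls entirely into one case. So the isotypic splitting and the $\sqrt2$ loss in your backup are not needed.

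\textbf{Your Hadad route.} It is also valid:
\begin{itemize}
\item $Z_{\ph}$ is finite abelian.
\item The orbit satisfies $\widetilde B=B$ because $Z_{\ph}$ is central.
\item Your bound $\cK_{\av,A}\ge 2-\sqrt2$ is computed correctly.
\item Generation of all of $\Cl_\ell$ follows as you say.
\end{itemize}
Its appeal is uniformity: the same theorem handles this layer and the Pauli layer in Lemma~\ref{lem:projective_generators}. The cost is the generic factor $512(1+449/2+2/449)$, which is of order $10^5$.

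\textbf{Why the direct argument is better here.} It uses centrality more strongly than Hadad's theorem can. On each irreducible, the kernel is either invisible or acts by a nontrivial scalar. So the quotient generators and the kernel generators never have to work together, and you get the nearly lossless bound $\min\{\frac{449}{451}\cK_{\av,\overline{\Cl}},\,2(2-\sqrt2)/451\}$.

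\textbf{Two small remarks.}
\begin{itemize}
\item Your first justification of symmetry, that a phase discrepancy is ``absorbed'' by the phase generators, does not give $S=S^{-1}$ as sets. Your fix of including exact inverses does. Here it costs nothing: with the explicit circuits, every lifted generator is a conjugate of a layer of commuting CNOTs (or is $X_1$), so it squares to the identity exactly and no elements need to be added.
\item The inequality $\cK^2\ge\cK_{\av}$ holds because $\sum_{s}\|\rho(s)v-v\|^2$ is additive over an orthogonal decomposition of $\rho$ into irreducibles. It is not because a single irreducible component controls $v$.
\end{itemize}
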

\begin{proof}[Proof of Lemma~\ref{lem:fullClifford_generators}]
We already have the symmetric generating set $S = S'_{\overline{\Cl},\ell}$ for the projective Clifford group, which possesses an average Kazhdan constant $\cK_{\av,\overline{\Cl}} = \Omega(1)$. We define $\widehat{S}_{\overline{\Cl},\ell}$ as a lift of this set into $\Cl_{\ell}$. To maintain the symmetric condition of the generating set, we define $\Omega = \{e^{i\pi/4} I, e^{-i\pi/4}I\}$, and the generating set for the full Clifford group is constructed as
\begin{equation}
\label{eq:full-clifford-generators}
S_{\Cl,\ell} = \widehat S_{\overline{\Cl},\ell} \cup \Omega.
\end{equation}
Obviously, each generator in $S_{\Cl,\ell}$ has a one-dimensional depth $\cO(1)$ and $\abs{S_{\Cl,\ell}}\leq 449 + 2 = 451 = \cO(1)$. It remains to lower bound the average Kazhdan constant for all non-trivial irreducible representations. Let
\begin{equation}
\rho:\Cl_{\ell}\longrightarrow\U(\mathcal{H}_{\rho})
\end{equation}
be a non-trivial irreducible representation and let $v\in\mathcal H_{\rho}$ be a unit vector. Since $Z_{\ph}$ is contained in the center of $\Cl_{\ell}$, Schur's lemma implies that the center acts on $\mathcal{H}_{\rho}$ as a number. In particular, there exists $m\in\{0,\ldots,7\}$ such that
\begin{equation}
\rho(e^{i\pi/4} I) = e^{im\pi/4} I_{\mathcal{H}_{\rho}}.
\end{equation}
We distinguish two cases. First, suppose $m=0$. Then $\rho$ is trivial on $Z_{\ph}$ and hence factors through the quotient:
\begin{equation}
\rho=\overline{\rho}\circ q
\end{equation}
for a non-trivial irreducible representation $\overline{\rho}$ of $\overline{\Cl}_{\ell}$. Here $q: \Cl_{\ell}\rightarrow \overline{\Cl}_{\ell}$ is the quotient map mapping from the full Clifford group to the projective Clifford group. Then by definition of the average Kazhdan constant we have
\begin{equation}
\begin{split}
\cK_{\av,\Cl} &= \inf_{v}\frac{1}{|S_{\Cl,\ell}|} \sum_{g\in S_{\Cl,\ell}} \Vert\rho(g)v-v\Vert^2\\
&\geq \inf_{v}\frac{1}{|S_{\Cl,\ell}|} \sum_{\hat{s}\in \widehat{S}_{\overline{\Cl},\ell}} \Vert\rho(\hat{s})v-v\Vert^2\\
&= \frac{|S'_{\overline{\Cl},\ell}|}{|S_{\Cl,\ell}|}\inf_{v}\frac{1}{|S'_{\overline{\Cl},\ell}|} \sum_{s\in S'_{\overline{\Cl},\ell}} \Vert\overline{\rho}(s)v-v\Vert^2\\
&\geq \frac{|S'_{\overline{\Cl},\ell}|}{|S_{\Cl,\ell}|}\cK_{\av,\overline{\Cl}}.
\end{split}
\end{equation}
Thus, in this case the average Kazhdan constant is $\Omega(1)$.\\

On the other hand, when $m\neq 0$, the two central generators $e^{\pm i\pi/4}I$ alone already provide a constant Kazhdan constant:
\begin{equation}
\begin{split}
\cK_{\av,\Cl}&=\inf_{v}\frac{1}{|S_{\Cl,\ell}|}\sum_{g\in S_{\Cl,\ell}}\Vert\rho(g)v-v\Vert^2\\
&\geq\inf_{v}\frac{1}{|S_{\Cl,\ell}|}\left(\Vert\rho(e^{i\pi/4} I)v-v\Vert^2+\Vert\rho(e^{-i\pi/4}I)v-v\Vert^2\right)\\
&=\frac{2|e^{im\pi/4}-1|^2}{|S_{\Cl,\ell}|}\\
&\geq\frac{2|e^{i\pi/4}-1|^2}{451}=\Omega(1).
\end{split}
\end{equation}
Combining all the cases, the average Kazhdan constant of $\Cl_{\ell}$ is a positive constant $\cK_{\av,\Cl}$. Using the inequality $\kappa_{\Cl}\geq \sqrt{\cK_{\av,\Cl}} = \Omega(1)$, we get that the Kazhdan constant of $\Cl_{\ell}$ is a positive constant.
\end{proof}

Applying Lemma~\ref{lem:fullClifford_generators} and using the Kazhdan-to-spectral-gap inequality shown in Lemma~\ref{lem:kazhdan-constant-to-spectral-gap}, the spectral gap of the generating set $S_{\Cl,\ell}$ for $\Cl_{\ell}$ is a positive constant. For any representation $\rho$ containing no trivial subrepresentation, the essential norm satisfies:
\begin{equation}
g(\mu(S_{\Cl,\ell}), \rho, \Cl_{\ell})\leq 1 - \frac{\kappa_{\Cl}^2}{2|S_{\Cl,\ell}|} = 1 - \Omega(1).
\end{equation}
This proves Lemma~\ref{lem:1d-clifford}.

\bibliographystyle{unsrtnat}
\bibliography{bibdesign}

@misc{bittel2025completetheorycliffordcommutant,
      title={A complete theory of the Clifford commutant}, 
      author={Lennart Bittel and Jens Eisert and Lorenzo Leone and Antonio A. Mele and Salvatore F. E. Oliviero},
      year={2025},
      eprint={2504.12263},
      archivePrefix={arXiv},
      primaryClass={quant-ph},
      url={https://arxiv.org/abs/2504.12263}, 
}

@article{haferkamp2022random,
  title={Random quantum circuits are approximate unitary $t$-designs in depth $O(nt^{5+o(1)})$},
  author={Haferkamp, Jonas},
  journal={Quantum},
  volume={6},
  pages={795},
  year={2022},
  publisher={Verein zur F{\"o}rderung des Open Access Publizierens in den Quantenwissenschaften}
}

@article{helsen2022general,
  title={General framework for randomized benchmarking},
  author={Helsen, Jonas and Roth, Ingo and Onorati, Emilio and Werner, Albert H and Eisert, Jens},
  journal={PRX quantum},
  volume={3},
  number={2},
  pages={020357},
  year={2022},
  publisher={APS}
}

@article{anshu2026depth,
  title={Depth-1 quantum (tensor product) expanders and applications},
  author={Anurag Anshu and Shankar Balasubramanian and Jonas Haferkamp and Aram W. Harrow and
Xinyu Tan},
  journal={In preparation},
  year={2026},
}

@article{harrow2023permutation,
  title={Approximate orthogonality of permutation operators, with application to quantum information},
  author={Harrow, Aram W},
  journal={Letters in Mathematical Physics},
  volume={114},
  number={1},
  pages={1},
  year={2023},
  publisher={Springer}
}

@inproceedings{metger2024simple,
  title={Simple constructions of linear-depth t-designs and pseudorandom unitaries},
  author={Metger, Tony and Poremba, Alexander and Sinha, Makrand and Yuen, Henry},
  booktitle={IEEE 65th Annual Symposium on Foundations of Computer Science (FOCS)},
  pages={485--492},
  year={2024},
  organization={IEEE}
}

@article{harrow2023approximate,
  title={Approximate unitary t-designs by short random quantum circuits using nearest-neighbor and long-range gates},
  author={Harrow, Aram W and Mehraban, Saeed},
  journal={Communications in Mathematical Physics},
  volume={401},
  number={2},
  pages={1531--1626},
  year={2023},
  publisher={Springer}
}

@article{haah2025efficient,
  title={Efficient approximate unitary designs from random pauli rotations},
  author={Haah, Jeongwan and Liu, Yunchao and Tan, Xinyu},
  journal={Communications in Mathematical Physics},
  volume={406},
  number={12},
  pages={1--24},
  year={2025},
  publisher={Springer}
}

@article{laracuente2026approximate,
  title={Approximate unitary k-designs from shallow, low-communication circuits},
  author={LaRacuente, Nicholas and Leditzky, Felix},
  journal={Communications in Mathematical Physics},
  volume={407},
  number={3},
  pages={51},
  year={2026},
  publisher={Springer}
}

@article{haferkamp2023efficient,
  title={Efficient Unitary Designs with a System-Size Independent Number of Non-Clifford Gates: J. Haferkamp, F. Montealegre-Mora, M. Heinrich, J. Eisert, D. Gross, I. Roth},
  author={Haferkamp, Jonas and Montealegre-Mora, Felipe and Heinrich, Markus and Eisert, Jens and Gross, David and Roth, Ingo},
  journal={Communications in Mathematical Physics},
  volume={397},
  number={3},
  pages={995--1041},
  year={2023},
  publisher={Springer}
}

@article{hunter2019unitary,
  title={Unitary designs from statistical mechanics in random quantum circuits},
  author={Hunter-Jones, Nicholas},
  journal={arXiv preprint arXiv:1905.12053},
  year={2019}
}

@article{Bittel2026operational,
  doi = {10.22331/q-2026-04-15-2069},
  url = {https://doi.org/10.22331/q-2026-04-15-2069},
  title = {Operational interpretation of the {S}tabilizer {E}ntropy},
  author = {Bittel, Lennart and Leone, Lorenzo},
  journal = {{Quantum}},
  issn = {2521-327X},
  publisher = {{Verein zur F{\"{o}}rderung des Open Access Publizierens in den Quantenwissenschaften}},
  volume = {10},
  pages = {2069},
  month = apr,
  year = {2026}
}

@article{brandao2016local,
  title={Local random quantum circuits are approximate polynomial-designs},
  author={Brandao, Fernando GSL and Harrow, Aram W and Horodecki, Micha{\l}},
  journal={Communications in Mathematical Physics},
  volume={346},
  number={2},
  pages={397--434},
  year={2016},
  publisher={Springer}
}

@misc{chen2024incompressibilityspectralgapsrandom,
      title={Incompressibility and spectral gaps of random circuits}, 
      author={Chi-Fang Chen and Jeongwan Haah and Jonas Haferkamp and Yunchao Liu and Tony Metger and Xinyu Tan},
      year={2024},
      eprint={2406.07478},
      archivePrefix={arXiv},
      primaryClass={quant-ph},
      url={https://arxiv.org/abs/2406.07478}, 
}

@article{hadad2010kazhdanconstantsgroupextensions,
author = {HADAD, UZY},
title = {KAZHDAN CONSTANTS OF GROUP EXTENSIONS},
journal = {International Journal of Algebra and Computation},
volume = {20},
number = {05},
pages = {671-688},
year = {2010},
doi = {10.1142/S0218196710005832},
URL = {https://doi.org/10.1142/S0218196710005832
},
eprint = {https://doi.org/10.1142/S0218196710005832
}
}

@misc{nikolov2005productdecompositionclassicalquasisimple,
      title={A product decomposition for the classical quasisimple groups}, 
      author={Nikolay Nikolov},
      year={2005},
      eprint={math/0510173},
      archivePrefix={arXiv},
      primaryClass={math.GR},
      url={https://arxiv.org/abs/math/0510173}, 
}

@article{Selinger2015Clifford,
   title={Generators and relations for n-qubit Clifford operators},
   volume={Volume 11, Issue 2},
   ISSN={1860-5974},
   url={http://dx.doi.org/10.2168/LMCS-11(2:10)2015},
   DOI={10.2168/lmcs-11(2:10)2015},
   journal={Logical Methods in Computer Science},
   publisher={Centre pour la Communication Scientifique Directe (CCSD)},
   author={Selinger, Peter},
   year={2015},
   month={June} }

@article{Kassabov2007symmetric,
	author = {Kassabov, Martin},
	date = {2007/11/01},
	doi = {10.1007/s00222-007-0065-y},
	id = {Kassabov2007},
	isbn = {1432-1297},
	journal = {Inventiones mathematicae},
	number = {2},
	pages = {327--354},
	title = {Symmetric groups and expander graphs},
	url = {https://doi.org/10.1007/s00222-007-0065-y},
	volume = {170},
	year = {2007}
    }

@article{liebeck2001finitelineargroups,
 ISSN = {0003486X},
 URL = {http://www.jstor.org/stable/3062101},
 author = {Martin W. Liebeck and Aner Shalev},
 journal = {Annals of Mathematics},
 number = {2},
 pages = {383--406},
 publisher = {Annals of Mathematics},
 title = {Diameters of Finite Simple Groups: Sharp Bounds and Applications},
 urldate = {2026-06-28},
 volume = {154},
 year = {2001}
}

@article{Schuster2025Gluing,
author = {Thomas Schuster  and Jonas Haferkamp  and Hsin-Yuan Huang },
title = {Random unitaries in extremely low depth},
journal = {Science},
volume = {389},
number = {6755},
pages = {92-96},
year = {2025},
doi = {10.1126/science.adv8590},
URL = {https://www.science.org/doi/abs/10.1126/science.adv8590},
eprint = {https://www.science.org/doi/pdf/10.1126/science.adv8590}
}

@article{Kassabov2006expanders,
author = {Martin Kassabov  and Alexander Lubotzky  and Nikolay Nikolov },
title = {Finite simple groups as expanders},
journal = {Proceedings of the National Academy of Sciences},
volume = {103},
number = {16},
pages = {6116-6119},
year = {2006},
doi = {10.1073/pnas.0510337103},
URL = {https://www.pnas.org/doi/abs/10.1073/pnas.0510337103},
eprint = {https://www.pnas.org/doi/pdf/10.1073/pnas.0510337103}
}

@misc{kassabov2005universallatticesunboundedrank,
      title={Universal lattices and unbounded rank expanders}, 
      author={Martin Kassabov},
      year={2005},
      eprint={math/0502237},
      archivePrefix={arXiv},
      primaryClass={math.GR},
      url={https://arxiv.org/abs/math/0502237}, 
}

@article{Gross2021Clifford,
	author = {Gross, David and Nezami, Sepehr and Walter, Michael},
	date = {2021/08/01},
	doi = {10.1007/s00220-021-04118-7},
	id = {Gross2021},
	isbn = {1432-0916},
	journal = {Communications in Mathematical Physics},
	number = {3},
	pages = {1325--1393},
	title = {Schur--Weyl Duality for the Clifford Group with Applications: Property Testing, a Robust Hudson Theorem, and de Finetti Representations},
	url = {https://doi.org/10.1007/s00220-021-04118-7},
	volume = {385},
	year = {2021}
    }

@misc{cui2025unitarydesignsnearlyoptimal,
      title={Unitary designs in nearly optimal depth}, 
      author={Laura Cui and Thomas Schuster and Fernando Brandao and Hsin-Yuan Huang},
      year={2025},
      eprint={2507.06216},
      archivePrefix={arXiv},
      primaryClass={quant-ph},
      url={https://arxiv.org/abs/2507.06216}, 
}

@misc{baer2026randomunitarycircuitsconstant,
      title={Random unitary circuits with constant spectral gap}, 
      author={Tim Baer and Jeongwan Haah},
      year={2026},
      eprint={2607.20919},
      archivePrefix={arXiv},
      primaryClass={quant-ph},
      url={https://arxiv.org/abs/2607.20919}, 
}

@article{Zhang2026magic,
  title = {Designs from Magic-Augmented Clifford Circuits},
  author = {Zhang, Yuzhen and Vijay, Sagar and Gu, Yingfei and Bao, Yimu},
  journal = {PRX Quantum},
  volume = {7},
  issue = {1},
  pages = {010344},
  numpages = {37},
  year = {2026},
  month = {Mar},
  publisher = {American Physical Society},
  doi = {10.1103/myrb-nyhf},
  url = {https://link.aps.org/doi/10.1103/myrb-nyhf}
}

@article{Leone2026NonClifford,
   title={Non-Clifford Cost of Random Unitaries},
   volume={7},
   ISSN={2691-3399},
   url={http://dx.doi.org/10.1103/25v1-my1x},
   DOI={10.1103/25v1-my1x},
   number={2},
   journal={PRX Quantum},
   publisher={American Physical Society (APS)},
   author={Leone, Lorenzo and Oliviero, Salvatore F.E. and Hamma, Alioscia and Eisert, Jens and Bittel, Lennart},
   year={2026},
   month={May}}

@article{Haferkamp2022randomquantum,
  doi = {10.22331/q-2022-09-08-795},
  url = {https://doi.org/10.22331/q-2022-09-08-795},
  title = {Random quantum circuits are approximate unitary {$t$}-designs in depth {$O\left(nt^{5+o(1)}\right)$}},
  author = {Haferkamp, Jonas},
  journal = {{Quantum}},
  issn = {2521-327X},
  publisher = {{Verein zur F{\"{o}}rderung des Open Access Publizierens in den Quantenwissenschaften}},
  volume = {6},
  pages = {795},
  month = sep,
  year = {2022}
}

@article{Haferkamp2023Designs,
	author = {Haferkamp, J. and Montealegre-Mora, F. and Heinrich, M. and Eisert, J. and Gross, D. and Roth, I.},
	date = {2023/02/01},
	doi = {10.1007/s00220-022-04507-6},
	id = {Haferkamp2023},
	isbn = {1432-0916},
	journal = {Communications in Mathematical Physics},
	number = {3},
	pages = {995--1041},
	title = {Efficient Unitary Designs with a System-Size Independent Number of Non-Clifford Gates},
	url = {https://doi.org/10.1007/s00220-022-04507-6},
	volume = {397},
	year = {2023}
    }

@article{Patrick2007Blackhole,
doi = {10.1088/1126-6708/2007/09/120},
url = {https://dx.doi.org/10.1088/1126-6708/2007/09/120},
year = {2007},
month = {sep},
publisher = {},
volume = {2007},
number = {09},
pages = {120},
author = {Patrick Hayden and  John Preskill},
title = {Black holes as mirrors: quantum information in random subsystems},
journal = {Journal of High Energy Physics}
}

@article{Almheiri2015QEC,
author = {Almheiri, Ahmed and Dong, Xi and Harlow, Daniel},
date = {2015/04/29},
doi = {10.1007/JHEP04(2015)163},
id = {Almheiri2015},
isbn = {1029-8479},
journal = {Journal of High Energy Physics},
number = {4},
pages = {163},
title = {Bulk locality and quantum error correction in AdS/CFT},
url = {https://doi.org/10.1007/JHEP04(2015)163},
volume = {2015},
year = {2015}
}

@article{Pastawski2015Holographic,
author = {Pastawski, Fernando and Yoshida, Beni and Harlow, Daniel and Preskill, John},
date = {2015/06/23},
doi = {10.1007/JHEP06(2015)149},
id = {Pastawski2015},
isbn = {1029-8479},
journal = {Journal of High Energy Physics},
number = {6},
pages = {149},
title = {Holographic quantum error-correcting codes: toy models for the bulk/boundary correspondence},
url = {https://doi.org/10.1007/JHEP06(2015)149},
volume = {2015},
year = {2015}}

@article{Zhou2026Metrology,
  title = {Randomized Measurements for Multiparameter Quantum Metrology},
  author = {Zhou, Sisi and Chen, Senrui},
  journal = {PRX Quantum},
  volume = {7},
  issue = {1},
  pages = {010314},
  numpages = {34},
  year = {2026},
  month = {Jan},
  publisher = {American Physical Society},
  doi = {10.1103/s27y-gbrp},
  url = {https://link.aps.org/doi/10.1103/s27y-gbrp}
}

@Article{Elben2023toolbox,
author={Elben, Andreas
and Flammia, Steven T.
and Huang, Hsin-Yuan
and Kueng, Richard
and Preskill, John
and Vermersch, Beno{\^i}t
and Zoller, Peter},
title={The randomized measurement toolbox},
journal={Nature Reviews Physics},
year={2023},
month={Jan},
day={01},
volume={5},
number={1},
pages={9-24},
issn={2522-5820},
url={https://doi.org/10.1038/s42254-022-00535-2}
}

@Article{huang2020shadow,
author={Huang, Hsin-Yuan
and Kueng, Richard
and Preskill, John},
title={Predicting many properties of a quantum system from very few measurements},
journal={Nat. Phys.},
year={2020},
month={Oct},
day={01},
volume={16},
number={10},
pages={1050-1057},
issn={1745-2481},
doi={10.1038/s41567-020-0932-7},
url={https://doi.org/10.1038/s41567-020-0932-7}
}

@article{Nahum2017Random,
title = {Quantum Entanglement Growth under Random Unitary Dynamics},
author = {Nahum, Adam and Ruhman, Jonathan and Vijay, Sagar and Haah, Jeongwan},
journal = {Phys. Rev. X},
volume = {7},
issue = {3},
pages = {031016},
numpages = {30},
year = {2017},
month = {Jul},
publisher = {American Physical Society},
doi = {10.1103/PhysRevX.7.031016},
url = {https://link.aps.org/doi/10.1103/PhysRevX.7.031016}
}

@article{Nahum2018randomness,
title = {Dynamics of entanglement and transport in one-dimensional systems with quenched randomness},
author = {Nahum, Adam and Ruhman, Jonathan and Huse, David A.},
journal = {Phys. Rev. B},
volume = {98},
issue = {3},
pages = {035118},
numpages = {16},
year = {2018},
month = {Jul},
publisher = {American Physical Society},
doi = {10.1103/PhysRevB.98.035118},
url = {https://link.aps.org/doi/10.1103/PhysRevB.98.035118}
}

@Article{arute2019supremacy,
author={Arute, Frank
and Arya, Kunal
and Babbush, Ryan
and Bacon, Dave
and Bardin, Joseph C.
and Barends, Rami
and Biswas, Rupak
and Boixo, Sergio
and Brandao, Fernando G. S. L.
and Buell, David A.
and Burkett, Brian
and Chen, Yu
and Chen, Zijun
and Chiaro, Ben
and Collins, Roberto
and Courtney, William
and Dunsworth, Andrew
and Farhi, Edward
and Foxen, Brooks
and Fowler, Austin
and Gidney, Craig
and Giustina, Marissa
and Graff, Rob
and Guerin, Keith
and Habegger, Steve
and Harrigan, Matthew P.
and Hartmann, Michael J.
and Ho, Alan
and Hoffmann, Markus
and Huang, Trent
and Humble, Travis S.
and Isakov, Sergei V.
and Jeffrey, Evan
and Jiang, Zhang
and Kafri, Dvir
and Kechedzhi, Kostyantyn
and Kelly, Julian
and Klimov, Paul V.
and Knysh, Sergey
and Korotkov, Alexander
and Kostritsa, Fedor
and Landhuis, David
and Lindmark, Mike
and Lucero, Erik
and Lyakh, Dmitry
and Mandr{\`a}, Salvatore
and McClean, Jarrod R.
and McEwen, Matthew
and Megrant, Anthony
and Mi, Xiao
and Michielsen, Kristel
and Mohseni, Masoud
and Mutus, Josh
and Naaman, Ofer
and Neeley, Matthew
and Neill, Charles
and Niu, Murphy Yuezhen
and Ostby, Eric
and Petukhov, Andre
and Platt, John C.
and Quintana, Chris
and Rieffel, Eleanor G.
and Roushan, Pedram
and Rubin, Nicholas C.
and Sank, Daniel
and Satzinger, Kevin J.
and Smelyanskiy, Vadim
and Sung, Kevin J.
and Trevithick, Matthew D.
and Vainsencher, Amit
and Villalonga, Benjamin
and White, Theodore
and Yao, Z. Jamie
and Yeh, Ping
and Zalcman, Adam
and Neven, Hartmut
and Martinis, John M.},
title={Quantum supremacy using a programmable superconducting processor},
journal={Nature},
year={2019},
month={Oct},
day={01},
volume={574},
number={7779},
pages={505-510},
issn={1476-4687},
doi={10.1038/s41586-019-1666-5},
url={https://doi.org/10.1038/s41586-019-1666-5}
}

@inproceedings{Ji2018Pseudorandom,
author = {Ji, Zhengfeng and Liu, Yi-Kai and Song, Fang},
title = {Pseudorandom Quantum States},
year = {2018},
isbn = {978-3-319-96877-3},
publisher = {Springer-Verlag},
address = {Berlin, Heidelberg},
url = {https://doi.org/10.1007/978-3-319-96878-0_5},
doi = {10.1007/978-3-319-96878-0_5},
booktitle = {Advances in Cryptology – CRYPTO 2018: 38th Annual International Cryptology Conference, Santa Barbara, CA, USA, August 19–23, 2018, Proceedings, Part III},
pages = {126–152},
numpages = {27},
location = {Santa Barbara, CA, USA}
}

@misc{cleve2016nearlinearconstructionsexactunitary,
      title={Near-linear constructions of exact unitary 2-designs}, 
      author={Richard Cleve and Debbie Leung and Li Liu and Chunhao Wang},
      year={2016},
      eprint={1501.04592},
      archivePrefix={arXiv},
      primaryClass={quant-ph},
      url={https://arxiv.org/abs/1501.04592}, 
}

@misc{bourin2019russodyetheorempositivelinear,
      title={On the Russo-Dye Theorem for positive linear maps}, 
      author={Jean-Christophe Bourin and Eun-Young Lee},
      year={2019},
      eprint={1911.10573},
      archivePrefix={arXiv},
      primaryClass={math.FA},
      url={https://arxiv.org/abs/1911.10573}, 
}

@article{russo1966note,
title={A note on unitary operators in $C\backslash\sp\{\ast\}$-algebras},
author={Russo, Bernard and Dye, HA193530},
journal={Duke mathematical journal},
volume={33},
number={2},
pages={413--416},
year={1966}
}

@article{aaronson2004improved,
  title = {Improved simulation of stabilizer circuits},
  author = {Aaronson, Scott and Gottesman, Daniel},
  journal = {Phys. Rev. A},
  volume = {70},
  issue = {5},
  pages = {052328},
  numpages = {14},
  year = {2004},
  month = {Nov},
  publisher = {American Physical Society},
  doi = {10.1103/PhysRevA.70.052328},
  url = {https://link.aps.org/doi/10.1103/PhysRevA.70.052328}
}

@misc{alavi2014triplefactorisationsgenerallinear,
      title={Triple factorisations of the general linear group and their associated geometries}, 
      author={Seyed Hassan Alavi and John Bamberg and Cheryl E. Praeger},
      year={2014},
      eprint={1405.5276},
      archivePrefix={arXiv},
      primaryClass={math.GR},
      url={https://arxiv.org/abs/1405.5276}, 
}

@misc{schuster2025strongrandomunitariesfast,
      title={Strong random unitaries and fast scrambling}, 
      author={Thomas Schuster and Fermi Ma and Alex Lombardi and Fernando Brandao and Hsin-Yuan Huang},
      year={2025},
      eprint={2509.26310},
      archivePrefix={arXiv},
      primaryClass={quant-ph},
      url={https://arxiv.org/abs/2509.26310}, 
}

\end{document}